\newcommand{\norm}[1]{\int_{0}^{\infty}{#1}}
\newcommand{\normOne}[1]{\left\lVert#1\right\rVert_{1}}
\newcommand{\normTwo}[1]{\left\lVert#1\right\rVert_{2}}
\newcommand{\normInf}[1]{\left\lVert#1\right\rVert_{\infty}}
\newcommand{\integralInf}[1]{\int_{0}^{\infty}{#1}}
\newcommand{\normOp}[1]{\left\lVert#1\right\rVert_{\textnormal{op}}}
\newcommand{\trans}[1]{\leftidx{^\dagger}{#1}}
\newcommand{\spectralRadius}[1]{\rho{(#1)}}
\newcommand{\modulus}[1]{\mid{#1}\mid}
\newcommand{\matfunction}[2]{\mathcal{F}(\mathcal{M}_{#1}(#2))}
\newtheorem{theorem}{Theorem}
\newtheorem{proposition}{Proposition}
\newtheorem{definition}{Definition}
\newtheorem{assumption}{Assumption}
\newtheorem{remark}{Remark}
\newtheorem{corollary}{Corollary}
\newtheorem{lemma}{Lemma}
\title{From microscopic price dynamics to \\ multidimensional rough volatility models \thanks{Mathieu Rosenbaum and Mehdi Tomas gratefully acknowledge financial support of the ERC 679836 Staqamof and the Chair Analytics and Models for Regulation. Mehdi Tomas gratefully acknowledges the support of the Chair Econophysics and Complex Systems. The authors thank Michael Benzaquen and Iacopo Mastromatteo for their helpful comments and are grateful to Eduardo Abi Jaber, Jean-Philippe Bouchaud, Antoine Fosset and Paul Jusselin for very fruitful discussions and suggestions.}}
\author{Mathieu Rosenbaum\footnote{CMAP, \'Ecole Polytechnique, mathieu.rosenbaum@polytechnique.edu} \and Mehdi Tomas \footnote{CMAP \& LadHyx, \'Ecole Polytechnique, mehdi.tomas@polytechnique.edu}}
\date{\today}
\begin{document}

\maketitle

\begin{abstract}
Rough volatility is a well-established statistical stylised fact of financial assets. This property has lead to the design and analysis
of various new rough stochastic volatility models. However, most of these developments have been carried out in the mono-asset case.
In this work, we show that some specific multivariate rough volatility models arise naturally from microstructural properties of the joint dynamics
of asset prices. To do so, we use Hawkes processes to build microscopic models that reproduce accurately high frequency
cross-asset interactions and investigate their long term scaling limits. We emphasize
the relevance of our approach by providing insights on the role of microscopic features such as momentum and mean-reversion
on the multidimensional price formation process. We in particular recover classical properties of high-dimensional stock correlation matrices.
\end{abstract}

\textbf{Keywords: }Rough volatility, multidimensional processes, microstructure, Hawkes processes, limit theorems, high-dimensional correlation matrices.

\textbf{AMS 2000 subject classifications: } 60F05, 60F17, 60G55, 62P05.

%%%% Introduction
\section{Introduction}
\label{sec:introduction}

\subsection{A microstructural viewpoint on rough volatility}

It is now widely accepted that volatility is rough (see \citep{GJR14} and among others \cite{DaFonseca2019VolatilityRough,Livieri2018RoughPrices}): the log-volatility process is well-approximated by a fractional Brownian motion with small Hurst parameter $H \approx 0.1$, which corresponds to H\"older regularity of order $H-\epsilon$, $\epsilon>0$. Furthermore, rough volatility models capture key features of the implied volatility surface and its dynamics (see \cite{Bayer2015PricingVolatility, ElEuch2018RougheningHeston, Horvath2019DeepVolatility}).
\\ \\
The macroscopic phenomenon of rough volatility is seemingly universal: it is observed for a large class of financial assets and across time periods. This universality may stem from fundamental properties such as market microstructure or no arbitrage. This raised interest in building \textit{microscopic} models for market dynamics which reproduce rough volatility at a \textit{macroscopic} scale. For us, the microscopic time scale is of the order of milliseconds, where asset prices are jump processes, while the macroscopic scale is approximately of the order of days, where asset prices appear essentially continuous.
\\ \\
Hawkes processes, first introduced in \cite{Hawkes1971PointProcesses., Hawkes1971SpectraProcesses, Hawkes1974AProcess} to model earthquake aftershocks, are nowadays very popular to model the high frequency dynamics of prices of financial assets (see \cite{Bacry2015HawkesFinance} for an overview of applications). In particular, the papers \cite{ElEuch2018TheVolatility., Jaisson2015LimitProcesses, Jaisson2016RoughProcesses} successfully establish a link between rough volatility and history dependent Hawkes-type point processes which reproduce:
\begin{enumerate}[(i)]
    \itemsep0em
    \item the no statistical arbitrage property: it is very hard to design strategies which are on average profitable at the high frequency scale;
    \item the long memory property of order flow due to the splitting of large orders (meta-orders) into smaller orders;
    \item the high degree of endogeneity of financial markets: the large majority of market activity (including price moves, cancellations and market and limit orders) occurs in response to previous market activity (as opposed to exogenous information such as news).
\end{enumerate}
We refer to \cite{ElEuch2018TheVolatility., Hardiman2013CriticalAnalysis} for details about these three stylised facts. This Hawkes-based microscopic framework can easily account for other features of markets: for example \cite{Jusselin2018No-ArbitrageVolatility} examines the issue of permanent market impact, \cite{ElEuch2018TheHeston} studies how a bid/ask asymmetry creates a negative price/volatility correlation, while the so-called Zumbach effect is considered in \cite{Dandapani2019FromEffect}.
\\ \\
Inspired by \cite{ElEuch2018TheVolatility.,Jaisson2015LimitProcesses, Jaisson2016RoughProcesses}, the goal of this paper is to use Hawkes processes to find a micro-founded setting of multivariate rough volatility which:
\begin{enumerate}[(i)]
    \itemsep0em
    \item enforces no statistical arbitrage between multiple assets;
    \item is consistent with the long memory property of the order flow and the high degree of endogeneity of financial markets;
    \item explains stylised facts from the microscopic price formation process, with a focus on the structure of high-dimensional stock correlation matrices.
\end{enumerate}
This approach enables us to characterise the type of price dynamics arising from those constraints. Readers interested in multivariate rough volatility may consult \cite{Cuchiero2019MarkovianProcesses} for general construction of a class of affine multivariate rough covariance models. Our goal is more modest here: we are interested in finding macroscopic dynamics originating from microscopic insights, not in a full mathematical analysis of the class of possible models for multivariate rough volatility.
Note also that in the concomitant work \cite{Jaber2019AType}, the authors study weak solutions of stochastic Volterra equations in a very comprehensive framework. Some of our technical results can be derived from their general approach. In our setting, we rather provide simple and natural proofs inspired from  \cite{ElEuch2018TheVolatility.,Jaisson2015LimitProcesses, Jaisson2016RoughProcesses}, allowing us to emphasize financial interpretations of the results, which is the core of this work.

\subsection{Modeling endogeneity of financial markets}
We first introduce the asymptotic framework which models the high endogeneity of financial markets in the mono-asset case (as in \cite{Bacry2013ModellingProcesses, ElEuch2018TheVolatility., Jaisson2015LimitProcesses, Jaisson2016RoughProcesses}) for clarity purposes before moving to the multivariate setting of interest. At the high frequency scale, the price is a piecewise constant process with upward and downward jumps captured by a bi-dimensional counting process $\bm{N} = (N^{1+}, N^{1-})$, with $N^{1+}$ counting the number of upward price moves and $N^{1-}$ the number of downward price moves. Assuming that all jumps are of the same size, the microscopic price of the asset is the difference of the number of upward and downward jumps (where the initial price is set to zero for simplicity) and therefore can be written
$$
P_t = N^{1+}_t - N^{1-}_t.
$$
Our assumption is that $\bm{N}$ is a Hawkes process with intensity $\bm{\lambda} = (\lambda^{1+}, \lambda^{1-})$ such that
\begin{align*}
\lambda^{1+}_t &= \mu^{1+}_t + \int_{0}^{t} \phi_{1+,1+}(t-s) dN^{1+}_s + \int_{0}^{t} \phi_{1+,1-}(t-s) dN^{1-}_s \\
\lambda^{1-}_t &= \mu^{1-}_t + \int_{0}^{t} \phi_{1-,1+}(t-s) dN^{1+}_s + \int_{0}^{t} \phi_{1-,1-}(t-s) dN^{1-}_s,
\end{align*}
where the $\bm{\mu} \colon \mathbb{R}_+ \to \in \mathbb{R}^2_{+}$ is called the \textit{baseline} and $\bm{\phi} \colon \mathbb{R}_+ \to \mathcal{M}_{2}(\mathbb{R}_{+})$ is called the \textit{kernel}, where we write vectors and matrices in bold and $\mathcal{M}_{n,m}(X)$ (resp. $\mathcal{M}_{n}(X)$) for the set of $X$-valued $n \times m$ (resp. $n \times n$) matrices. From a financial perspective, we can easily interpret the different terms above:
\begin{itemize}
    \item on the one hand, $\mu_1^+$ (resp. $\mu_1^-$) is an exogenous source of upward (resp. downward) price moves;
    \item on the other hand, $\bm{\phi}$ is an endogenous source of price moves. For example, $\phi_{1+,1-}$ increases the intensity of upward price jumps after a downward price jump, creating a mean-reversion effect (while $\phi_{1+,1+}$ creates a trending effect).
\end{itemize}
To further encode the long-memory property of the order flow, \cite{ElEuch2018TheVolatility., Jaisson2015LimitProcesses} consider heavy-tailed kernels where, writing $\spectralRadius{\bm{M}}$ for the spectral radius of a matrix $\bm{M}$, for some $c>0$ and $\alpha \in (1/2,1)$ we have
$$
\rho \big (\int_t^{\infty} \bm{\phi}(s)ds \big) \underset{t \to \infty}{\sim} c t^{-\alpha}.
$$
Such a model satisfies the stability property of Hawkes processes (see for example \cite{Jaisson2015LimitProcesses}) as long as $\spectralRadius{\normOne{\bm{\phi}}}<1$ (writing $\normOne{\cdot}$ for the $L^1$ norm). In fact, calibration of Hawkes processes on financial data suggests that this stability condition is almost violated. To account for this effect, the authors of \cite{ElEuch2018TheVolatility., Jaisson2015LimitProcesses} model the market up to time $T$ with a Hawkes process $\bm{N^{T}}$ of baseline $\bm{\mu^{T}}$ and kernel $\bm{\phi^{T}}$. The microscopic price until time $T$ is then
$$
P^{T,1}_t = N^{T, 1+}_t - N^{T, 1-}_t.
$$
In order to obtain macroscopic dynamics, the time horizon must be large, thus the sequence $T_n$ tends towards infinity (from now on, we write $T$ for $T_n$). As $T$ tends to infinity, $\bm{\phi^{T}}$ almost saturates the stability condition: $\underset{n \to \infty}{\lim}\spectralRadius{\normOne{\bm{\phi}^{T}}} = 1$. A macroscopic limit then requires scaling the processes appropriately to obtain a non-trivial limit.  Details on the proper rescaling of the processes are given in Section \ref{sec:main_results_org}.

\subsection{Multivariate setting}
Having described the asymptotic setting in the mono-asset case, we now model $m$ different assets. The associated counting process is now a $2m$-dimensional process $\bm{N^T} = (N^{T,1+}, N^{T,1-}, N^{T,2+}, \dots, N^{T,m-})$ and its intensity satisfies
$$
\bm{\lambda_t^T} = \bm{\mu^T} + \int_{0}^{t} \bm{\phi(t-s)^T} d\bm{N_s^T}.
$$
The counting process $\bm{N}$ includes the upward and downward price jumps of $m$ different assets and the microscopic price of Asset $i$, where $1 \leq i \leq m$, is simply
$$
P^{T,i}_t = N^{T,i+}_t - N^{T,i-}_t.
$$
This allows us to capture correlations between assets since, focusing for example on Asset $1$, we have
$$
\lambda_{t}^{T,1+} = \mu^{T,1+}_t + \int_{0}^{t} \phi_{1+,2+}^T(t-s) dN^{T,2+}_s + \int_{0}^{t} \phi_{1-,2+}^T(t-s) dN^{T,2+}_s + \cdots.
$$
Therefore $\phi^T_{1+,2+}$ increases the intensity of upward jumps on Asset 1 after an upward jump of Asset 2 while $\phi^T_{1-,2+}$ increases the intensity of downward jumps, etc.
\\ \\
We now need to adapt the nearly-unstable setting to the multidimensional case. Thus we have to find how to saturate the stability condition and to translate the long memory property of the order flow.
\\ \\
In \cite{ElEuch2018TheVolatility.}, $\bm{\phi^T(t)}$ is taken diagonalisable (in a basis independent of $T$ and $t$) with a maximum eigenvalue $\xi^T(t)$ such that $\underset{T \to \infty}{\lim} \normOne{\xi^T} = 1$. However this structure leads to the same volatility for all assets and thus cannot be a satisfying solution for realistic market dynamics. We take here a sequence of trigonalisable (in a basis $\bm{O}$ independent of $T$ and $t$) kernels $\bm{\phi^T(t)}$ with $n_c >0$ eigenvalues almost saturating the stability condition. Thus the Hawkes kernel is taken of the form (using block matrix notation in force throughout the paper)
$$
\bm{\phi^T(t)} = \bm{O}\begin{pmatrix}
    \bm{A^T(t)} & \bm{0} \\
    \bm{B^T(t)} & \bm{C^T(t)}
    \end{pmatrix} \bm{O}^{-1},
$$
where $\bm{A^T}\colon \mathbb{R_+} \to \mathcal{M}_{n_c}(\mathbb{R})$, $\bm{B^T} \colon \mathbb{R_+} \to \mathcal{M}_{2m-n_c,n_c}(\mathbb{R})$ and $\bm{C^T} \colon \mathbb{R_+} \to \mathcal{M}_{2m-n_c}(\mathbb{R})$. Note that we will see that in the limit, macroscopic volatilities and prices are independent of the chosen basis. We assume that the stability condition is saturated at the speed $T^{-\alpha}$ where $\alpha \in (1/2,1)$ is again related to the tail of the matrix kernel (see below). The saturation condition translates to
$$
T^{\alpha} \big( \bm{I} - \norm{\bm{A^T}} \big) {\underset{T \to \infty}{\to}} \bm{K},
$$
where $\bm{K}$ is an invertible matrix.
\\ \\
We now need to encode the long memory property of the order flow. We can expect orders to be sent jointly on different assets (this can be due, for example, to portfolio rebalancing, risk management or optimal trading) and split under different time scales depending on idiosyncratic components (such as daily traded volume or volatility). Empirically the approximation that despite idiosyncrasies a common time scale for order splitting exists is partially justified: for example \cite{Benzaquen2017DissectingAnalysis} shows that market impact, which is directly related to the order flow, is well-approximated by a single time scale for many stocks. Finally, this property is encoded by imposing a heavy-tail condition for $\bm{A} := \underset{T \to \infty}{\lim} \bm{A^T}$ with the previous exponent $\alpha$:
$$
\alpha x^{\alpha} \int_x^\infty  \bm{A(s)}ds \underset{x \to \infty}{\to} \bm{M},
$$
with $\bm{M}$ an invertible matrix.
\subsection{Main results and organization of the paper}
\label{sec:main_results_org}

In the framework described above, we show that the macroscopic limit of prices is a multivariate version of the rough Heston model introduced in \cite{ElEuch2018TheHeston, ElEuch2018RougheningHeston}, where the volatility process is a solution of a multivariate rough stochastic Volterra equation. Thus we derive a natural multivariate setting for rough volatility using nearly-unstable Hawkes processes.
\\ \\
More precisely, define the rescaled processes (see \cite{Jaisson2015LimitProcesses} for details), for $t \in [0,1]$:
\begin{align}
    \label{eq:definition_rescaled_processes_X}
    \bm{X^T_t} &:= \dfrac{1}{T^{2\alpha}} \bm{N^T_{tT}} \\
    \label{eq:definition_rescaled_processes_Y}
    \bm{Y^T_t} &:= \dfrac{1}{T^{2\alpha}} \int_0^{tT} \bm{\lambda_s} ds  \\
    \label{eq:definition_rescaled_processes_Z}
    \bm{Z^T_t} &:= T^{\alpha} (\bm{X^T_t} - \bm{Y^T_t}) = \dfrac{1}{T^{\alpha}} \bm{M^T_{tT}} \\
    \label{eq:definition_rescaled_processes_P}
    \bm{P^{T}_{t}} &= \dfrac{1}{T^{2\alpha}} (N^{T,1+}_{tT} - N^{T,1-}_{tT}, \cdots, N^{T,m+}_{tT} - N^{T,m-}_{tT}).
\end{align}
We refer to $\bm{P^{T}}$ as the (rescaled) microscopic price process. Under some additional technical and no statistical arbitrage assumptions, there exists an $n_{c}$ dimensional process $\bm{\Tilde{V}}$, matrices $\bm{\Theta^1} \in \mathcal{M}_{n_{c}}(\mathbb{R}), \bm{\Theta^2} \in \mathcal{M}_{n-n_{c}}(\mathbb{R}), \bm{\Lambda_{0}} \in \mathcal{M}_{n_{c}}(\mathbb{R}), \bm{\Lambda_{1}} \in \mathcal{M}_{n_{c}}(\mathbb{R}), \bm{\Lambda_{2}} \in \mathcal{M}_{n_{c},n-n_{c}}(\mathbb{R}), \bm{\theta_{0}} \in \mathbb{R}^{n_{c}}$ and a Brownian motion $\bm{B}$ such that
\begin{itemize}
    \itemsep0em
    \item Any macroscopic limit point $\bm{P}$ of the sequence $\bm{P^T}$ satisfies
    $$
    \bm{P_t} = (\bm{I} + \bm{\Delta}) \trans{\bm{Q}} \int^t_0 \textnormal{diag}(\sqrt{\bm{V_s}}) d \bm{B_s},
    $$
    where $\bm{Q} := (\bm{e_{1}} - \bm{e_{2}} \mid \cdots \mid  \bm{e_{2m-1}} - \bm{e_{2m}})$, writing $\trans{\bm{Q}}$ for the transpose of $\bm{Q}$, $(\bm{e_i})_{1 \leq i \leq 2m}$ for the canonical basis of $\mathbb{R}^{2m}$ and $\bm{\Delta} = (\Delta_{ij})_{1 \leq i,j \leq m} \in \mathcal{M}_{m}(\mathbb{R})$ is defined in Section \ref{sec:main_results} while $\bm{V}$ is defined below.
    \item $\bm{\Theta^1}\bm{\Tilde{V}} = (V^{1}, \cdots, V^{n_c})$ and $\bm{\Theta^2}\bm{\Tilde{V}} = (V^{n_c+1}, \cdots, V^{n})$.
    \item $\bm{\Tilde{V}}$ has H\"older regularity $\alpha - 1/2 - \epsilon$ for any $\epsilon > 0$.
    \item For any $t$ in $[0,1]$, $\bm{\Tilde{V}}$ satisfies
    \begin{equation*}
    \bm{\Tilde{V}_t} =  \int_0^t (t-s)^{\alpha-1}(\bm{\theta_0} - \bm{\Lambda_{0}} \bm{\Tilde{V}_s}) ds  +\int_0^t (t-s)^{\alpha-1} \bm{\Lambda_{1}}\textnormal{diag}(\sqrt{\bm{\Theta^1} \bm{\Tilde{V}_s}}) d\bm{W_s} + \int_0^t (t-s)^{\alpha-1} \bm{\Lambda_{2}}\textnormal{diag}(\sqrt{\bm{\Theta^2} \bm{\Tilde{V}_s}}) d\bm{Z_s},
\end{equation*}
 where $\bm{W} := (B^1, \cdots, B^{n_c})$, $\bm{Z} := (B^{n_c+1}, \cdots, B^{n})$ and we write $\sqrt{\bm{x}}$ for the component-wise square root of vectors of non-negative entries. 
\end{itemize}
Thus the volatility process $\bm{V}$ is driven by $\bm{\Tilde{V}}$, which represents volatility factors, of which there are as many as there are critical directions.
\\ \\
We can use this result to provide microstructural foundations for some empirical properties of correlation matrices. Informally, considering that our assets have similar self-exciting features in their microscopic dynamics, we show that any macroscopic limit point $\bm{P}$ of the sequence $\bm{P^T}$ satisfies $\bm{P}$
\begin{align*}
    \bm{P_t} &= \bm{\Sigma} \int_0^t \textnormal{diag}(\sqrt{\bm{V_s}}) d\bm{W_s},
\end{align*}
where $\bm{W}$ is a Brownian motion, $\bm{V}$ satisfies a stochastic Volterra equation and $\bm{\Sigma}$ has one very large eigenvalue followed by smaller eigenvalues that we can interpret as due to the presence of sectors and a bulk of eigenvalues much smaller than the others. This is typical of actual stock correlation matrices (see for example \cite{Laloux1999NoiseMatrices} for an empirical study).
\\ \\
The paper is organised as follows. Section \ref{sec:assumptions} rigorously introduces the technical framework sketched in the introduction. We present and discuss the main results in Section \ref{sec:main_results} which are then applied in examples developed in Section \ref{sec:example}. Proofs can be found in Section \ref{sec:proofs} while some technical results, including proofs of the various applications, are available in an appendix.

%%%% Prep I: assumptions and results
\section{Assumptions}
\label{sec:assumptions}

Before presenting the main results, we make precise the framework sketched out in the introduction. Different examples of Hawkes processes satisfying our assumptions are given in Section \ref{sec:example}. 
\\ \\
Consider a sequence of measurable functions $\bm{\phi^T} \colon \mathbb{R}_{+} \to \mathcal{M}_{2m}(\mathbb{R}_{+})$ and $\bm{\mu^T} \colon \mathbb{R}_{+} \to \mathbb{R}^{2m}_{+}$, where the pair $(\bm{\mu^T}, \bm{\phi^T})$ will be used to model the market dynamics until time $T$ via a Hawkes process $\bm{N^T}$ of baseline $\bm{\mu^T}$ and kernel $\bm{\phi^T}$. Each kernel $\bm{\phi^T}$ is stable ($\rho \big( \normOne{\bm{\phi^T}} \big)<1$).

\begin{assumption}
\label{ass:StructureHawkes}
There exists $\bm{O}$ an invertible matrix such that each $\bm{\phi^T}$ can be written as
\begin{equation*}
    \label{eq:shapePhiT}
    \bm{\phi^T} = \bm{O} \begin{pmatrix}
    \bm{A^T} & \bm{0} \\
    \bm{B^T} & \bm{C^T}
    \end{pmatrix} \bm{O}^{-1},
\end{equation*}
where $\bm{A^T}\colon \mathbb{R_+} \to \mathcal{M}_{n_c}(\mathbb{R})$, $\bm{B^T} \colon \mathbb{R_+} \to \mathcal{M}_{2m-n_c,n_c}(\mathbb{R})$, $\bm{C^T} \colon \mathbb{R_+} \to \mathcal{M}_{2m-n_c}(\mathbb{R})$. Furthermore, the sequence $\bm{\phi^T}$ converges towards $\bm{\phi} \colon \mathbb{R}_{+} \to \mathcal{M}_{2m}(\mathbb{R}_{+})$ as T tends to infinity and, writing $\bm{A}, \bm{B}, \bm{C}$ for the limits of $\bm{A^T}, \bm{B^T}, \bm{C^T}$ as T tends to infinity, $\spectralRadius{\norm{\bm{C}}} < 1$. 
\\ \\ 
Additionally, there exists $\alpha \in (1/2,1)$, $\bm{K},\bm{M}$ invertible matrices and $\bm{\mu} \colon [0,1] \to \mathbb{R}_+$ such that
\begin{align}
    T^{\alpha} (\bm{I} - \normOne{\bm{A^T}}) & {\underset{T \to \infty}{\to}} \bm{K}     \label{eq:convergence_rates_AT}
     \\
    \alpha x^{\alpha} \int_x^\infty  \bm{A(s)}ds & \underset{x \to \infty}{\to} \bm{M}     \label{eq:convergence_rates_A}
 \\
    T^{1-\alpha} \bm{\mu^T_{tT}} \underset{T \to \infty}{\to} \bm{\mu_{t}},     \label{eq:convergence_rates_mu}
\end{align}
where $\bm{K}\bm{M}^{-1}$ has strictly positive eigenvalues.

\end{assumption}

Realistic market dynamics require enforcing no statistical arbitrage conditions on the kernels, as in the spirit of \cite{Jaisson2015LimitProcesses}. To determine which conditions need to be satisfied to prevent such arbitrage, we write the intensity of the counting process $\bm{\lambda^T}$ using the compensator process $\bm{M^T_t} := \bm{N^T_t} - \bm{\lambda^T_t}$ and  $\bm{\psi^T} = \sum_{k \geq 1} (\bm{\phi^{T}})^{*k}$ (see for example Proposition 2.1 in \cite{Jaisson2015LimitProcesses}). We have
\begin{equation}
    \label{eq:role_delta}
        \bm{\lambda^T_t} = \bm{\mu^T} + \int_0^t \bm{\psi^T(t-s)} \bm{\mu^T_s} ds + \int_0^t \bm{\psi^T(t-s)} d\bm{M^T_s}.
\end{equation}
Thus, the expected intensities of upward and downward price jumps of Asset $i$ are
\begin{align*}
    \mathbb{E}[\lambda^{T,i+}_t] &= \mu^{T,i+}_{t} + \sum_{1 \leq j \leq 2m} \int_0^t \psi^T_{i+,j-}(t-s) \mu^{T,j-}_s ds + \sum_{1 \leq j \leq 2m} \int_0^t \psi^T_{i+,j+}(t-s) \mu^{T,j+}_s ds \\
    \mathbb{E}[\lambda^{T,i-}_t] &= \mu^{T,i-}_{t} + \sum_{1 \leq j \leq 2m} \int_0^t \psi^{T}_{i-,j-}(t-s) \mu^{T,j-}_s ds + \sum_{1 \leq j \leq 2m} \int_0^t \psi^{T}_{i-,j+}(t-s) \mu^{T,j+}_s ds.
\end{align*}
The above leads us to the following assumption.
\begin{assumption}
\label{ass:no_arb}
For any $1 \leq i,j \leq m$:
\begin{enumerate}[(i)]
    \item $\psi^{T}_{i+,j+} + \psi^{T}_{i+,j-} = \psi^{T}_{i-,j+} + \psi^{T}_{i-,j-}$ (no pair-trading arbitrage) \label{ass:no_pair_trading_arbitrage}
    \item $\underset{T \to \infty}{\lim}\Big( \norm{\psi^{T}_{i+,j}} - \norm{\psi^{T}_{i+,j+}} \Big) < \infty$ (suitable asymptotic behaviour of the intensities)
\end{enumerate}
\end{assumption}

Under the above conditions and if $\mu^{T,i+} = \mu^{T,i-}$ for all $1 \leq i \leq m$, then $\mathbb{E}[\lambda^{T,i+}_t] = \mathbb{E}[\lambda^{T,i-}_t]$ and there are on average as many upward than downward jumps, which we interpret as a no statistical arbitrage property.
\\ \\
Define, for any $1 \leq i,j \leq m$,
\begin{align}
    \delta_{ij}^T &:= \psi^T_{j+,i+} - \psi^T_{j-,i+} \\
    \Delta_{ij} &:= \underset{T \to \infty}{\lim} \normOne{\psi^T_{j+,i+}} - \normOne{\psi^T_{j-,i+}} \label{def:Delta_matrix}.
\end{align}
We can make the following remark.
\begin{remark}
Note that for any $1 \leq k \leq m$, defining $\bm{v_k} := \bm{e_{k+}}-\bm{e_{k-}}$ and using \eqref{ass:no_pair_trading_arbitrage} of Assumption \ref{ass:no_arb}, we have
\begin{align*}
    \trans{\bm{\psi^T}} \bm{v_k} &= \trans{\bm{\psi^T}} (\bm{e_{k+}}-\bm{e_{k-}}) \\
    &= (\psi^T_{k+,1+} - \psi^T_{k-,1+}) \bm{e_{1+}} + (\psi^T_{k+,1-} - \psi^T_{k-,1-}) \bm{e_{1-}} + \cdots + (\psi^T_{m+,1+} - \psi^T_{m-,1+}) \bm{e_{m-}} \\
    &= (\psi^T_{k+,1+} - \psi^T_{k-,1+}) \bm{e_{1+}} - (\psi^T_{k+,1+} - \psi^T_{k-,1+}) \bm{e_{1-}} + \cdots + (\psi^T_{m+,1+} - \psi^T_{m-,1+}) \bm{e_{m-}} \\
    &= (\psi^T_{k+,1+} - \psi^T_{k-,1+}) \bm{v_{1}} + \cdots + (\psi^T_{k+,m+} - \psi^T_{k-,m+}) \bm{v_{m}} \\
    &= \delta^T_{k1} \bm{v_1} + \cdots + \delta^T_{kn} \bm{v_m}.
\end{align*}
A sufficient condition for the no pair-trading arbitrage Equation \eqref{ass:no_pair_trading_arbitrage} of Assumption \ref{ass:no_arb} to hold is that, for all $1 \leq i \leq m$, 
$$
\trans{\bm{\phi^T}} \bm{v_i} = \sum_{1 \leq j \leq m} (\trans{\bm{\phi^T}} \bm{v_i} \cdot \bm{v_j})\bm{v_j},
$$
since then we have, for any $1 \leq k \leq m$,
$$
\sum_{1 \leq l \leq m} (\psi^T_{k+,l+} - \psi^T_{k-,l+}) \bm{e_{l+}} - (\psi^T_{k+,l+} - \psi^T_{k-,l+}) \bm{e_{l-}} = \sum_{1 \leq l \leq m} (\psi^T_{k+,l+} - \psi^T_{k-,l+}) \bm{e_{l+}} - (\psi^T_{k+,l-} - \psi^T_{k-,l-}) \bm{e_{l-}}.
$$
In our applications in Section \ref{sec:example} we will use this condition as it is easier to check assumptions on $\bm{\phi}$ than on $\bm{\psi}$.
\end{remark}

%%%% Core I: Summary of results
\section{Main results}
\label{sec:main_results}

We are now in the position to rigorously state the main results of this paper. We use the processes $\bm{X^T}, \bm{Y^T}$ and $\bm{Z^T}$ defined in the introduction (see Equations \eqref{eq:definition_rescaled_processes_X}, \eqref{eq:definition_rescaled_processes_Y} \eqref{eq:definition_rescaled_processes_Z}) and write 
$$
\bm{O}^{-1} = \begin{pmatrix}
\bm{O_{11}^{(-1)}} & \bm{O_{12}^{(-1)}} \\
\bm{O_{21}^{(-1)}} & \bm{O_{22}^{(-1)}}
\end{pmatrix}, \quad \bm{O} = \begin{pmatrix}
\bm{O_{11}} & \bm{O_{12}} \\
\bm{O_{21}} & \bm{O_{22}}
\end{pmatrix}.
$$
We set
\begin{align*}
    \bm{\Theta^1} &:= \big( \bm{O_{11}} + \bm{O_{12}}(\bm{I} - \integralInf{\bm{C}})^{-1} \integralInf{\bm{B}} \big) \bm{K}^{-1} \\
    \bm{\Theta^2} &:= \big( \bm{O_{21}} + \bm{O_{22}} (\bm{I} - \integralInf{\bm{C}})^{-1} \integralInf{\bm{B}} \big) \bm{K}^{-1} \\
    \bm{\theta}_0 &:= \begin{pmatrix}
            \bm{O^{(-1)}_{11}} & \bm{0} \\
            \bm{0} & \bm{O^{(-1)}_{12}}
            \end{pmatrix} \bm{\mu} \\
    \bm{\Lambda} &:= \dfrac{\alpha}{\Gamma(1 - \alpha)} \bm{K} \bm{M}^{-1}.
\end{align*}
We have the following theorem.
\begin{theorem}
\label{thm:ConvergenceInt}
The sequence $(\bm{X^T}, \bm{Y^T}, \bm{Z^T})$ is $C$-tight for the Skorokhod topology. Furthermore, for every limit point $(\bm{X}, \bm{Y}, \bm{Z})$ of the sequence, there exists a positive process $\bm{V}$ and an $2m$-dimensional Brownian motion $\bm{B}$ such that
\begin{enumerate}[(i)]
    \itemsep0em
    \item $\bm{X_t} = \int_0^t \bm{V_s} ds$, $\bm{Z_t} = \int_0^t \textnormal{diag}(\sqrt{\bm{V_s}})d\bm{B_s}$.
    \item There exists $\bm{\Tilde{V}}$ a process of H\"older regularity $\alpha - 1/2 - \varepsilon$ for any $\varepsilon>0$ such that $\bm{\Theta^1}\bm{\Tilde{V}} = (V^{1}, \cdots, V^{n_c})$, $\bm{\Theta^2}\bm{\Tilde{V}} = (V^{n_c+1}, \cdots, V^{2m})$ and $\bm{\Tilde{V}}$ is solution of the following stochastic Volterra equation:
\begin{equation}
    \label{eq:rough_sde_noF}
    \begin{split}
        \forall t \in [0,1], \bm{\Tilde{V}}_t &= \dfrac{1}{\Gamma(\alpha)}\bm{\Lambda} \int_0^t (t-s)^{\alpha-1}(\bm{\theta}_0 - \bm{\Tilde{V}_s}) ds  \\ &+\dfrac{1}{\Gamma(\alpha)}\bm{\Lambda} \int_0^t (t-s)^{\alpha-1} \bm{O^{(-1)}_{11}}\textnormal{diag}(\sqrt{\bm{\Theta^1} \bm{\Tilde{V}_s}}) d\bm{W^1_s} 
        \\ &+\dfrac{1}{\Gamma(\alpha)}\bm{\Lambda} \int_0^t (t-s)^{\alpha-1} \bm{O^{(-1)}_{12}}\textnormal{diag}(\sqrt{\bm{\Theta^2} \bm{\Tilde{V}_s}}) d\bm{W^2_s},
    \end{split}
\end{equation}
    where $\bm{W^1} := (B^1, \cdots, B^{n_c})$, $\bm{W^2} := (B^{n_c+1}, \cdots, B^{2m})$, $\bm{\Theta^1}$, $\bm{\Theta^2}$, $\bm{O^{(-1)}_{11}}$, $\bm{O^{(-1)}_{12}}, \bm{\theta_0}$ do not depend on the chosen basis.
\end{enumerate}
Finally, any limit point $\bm{P}$ of the rescaled price processes $\bm{P^T}$ satisfies
$$
\bm{P}_t = (\bm{I} + \bm{\Delta}) \trans{\bm{Q}} (\int^t_0 \textnormal{diag}(\sqrt{\bm{V_s}}) d\bm{B}_s + \int_0^t \bm{\mu_s} ds),
$$
where $\bm{\Delta}$ is defined in Equation \eqref{def:Delta_matrix}.
\end{theorem}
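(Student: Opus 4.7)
The strategy follows the nearly-unstable Hawkes program of \cite{Jaisson2015LimitProcesses,ElEuch2018TheVolatility.}, adapted to the block lower-triangular structure imposed by Assumption \ref{ass:StructureHawkes}. Starting from the stochastic Volterra-type representation
$$\bm{\lambda^T_t} = \bm{\mu^T_t} + \int_0^t \bm{\phi^T}(t-s)\,\bm{\lambda^T_s}\,ds + \int_0^t \bm{\phi^T}(t-s)\,d\bm{M^T_s},$$
and its resolvent inversion \eqref{eq:role_delta}, I would first move to the basis $\bm{O}$ and exploit the triangular structure. The first $n_c$ coordinates of $\bm{O}^{-1}\bm{\lambda^T}$ form a closed near-unstable Hawkes subsystem with kernel $\bm{A^T}$, to which \eqref{eq:convergence_rates_AT} and \eqref{eq:convergence_rates_A} apply as multivariate generalisations of the saturation and heavy-tail conditions of \cite{Jaisson2015LimitProcesses}. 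The remaining $2m - n_c$ coordinates are driven by $\bm{B^T}$ but have subcritical kernel $\bm{C^T}$, so by a Neumann series their resolvent reduces to the bounded operator $(\bm{I}-\integralInf{\bm{C}})^{-1}\integralInf{\bm{B}}$; this is exactly what produces the coefficients $\bm{\Theta^1}$ and $\bm{\Theta^2}$.

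The core technical input is a matrix-valued convergence of the rescaled resolvent of the critical block towards a Mittag-Leffler-type fractional kernel with generator $\bm{\Lambda} = \tfrac{\alpha}{\Gamma(1-\alpha)}\bm{K}\bm{M}^{-1}$. In the scalar case this is the Laplace/Tauberian argument of \cite{Jaisson2015LimitProcesses}; its operator-valued extension uses functional calculus, relying on positivity of the eigenvalues of $\bm{K}\bm{M}^{-1}$ to control non-commutative convolution powers of $\bm{A^T}$. Given this, Gronwall applied to the resolvent equation yields uniform bounds on the rescaled intensity, and Burkholder-Davis-Gundy applied to $\bm{Z^T}$ furnishes moment estimates; together these deliver $C$-tightness of $(\bm{X^T},\bm{Y^T},\bm{Z^T})$. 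The predictable quadratic variation of $\bm{Z^T}$ equals $\textnormal{diag}(\bm{Y^T_t})$ and converges to $\int_0^t\textnormal{diag}(\bm{V_s})\,ds$, so the martingale functional central limit theorem (using that distinct components of $\bm{N^T}$ cannot jump simultaneously) produces the Brownian representation $\bm{Z_t} = \int_0^t\textnormal{diag}(\sqrt{\bm{V_s}})d\bm{B_s}$. Passing to the limit in the rescaled Volterra equation with the fractional kernel $(t-s)^{\alpha-1}/\Gamma(\alpha)$ and separating drift from noise then yields \eqref{eq:rough_sde_noF}; basis invariance is direct from the covariant transformation of $\bm{K}$ and $\bm{M}$ under changes of $\bm{O}$.

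For the price formula, decompose $\bm{P^T_t} = \trans{\bm{Q}}\bm{X^T_t}$ via $\bm{N^T_{tT}} = \bm{M^T_{tT}} + \int_0^{tT}\bm{\lambda^T_s}\,ds$ and substitute \eqref{eq:role_delta}. By the remark following Assumption \ref{ass:no_arb}, the action of $\trans{\bm{Q}}$ on resolvent-convolved quantities is entirely dictated by the entries $\delta^T_{ij}$, whose $L^1$-limits form the matrix $\bm{\Delta}$ of \eqref{def:Delta_matrix}; this is what produces the $(\bm{I} + \bm{\Delta})$ prefactor. The stochastic integral in the limit comes from the martingale $\bm{Z^T}$ and the drift term from \eqref{eq:convergence_rates_mu}. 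The main obstacle I anticipate is the matrix-valued Tauberian step: because $\bm{A^T}(s)$ need not commute with itself at different times $s$, the scalar Laplace-transform proof of resolvent convergence must be recast in a functional-calculus framework, with \eqref{eq:convergence_rates_A} supplying the joint matrix power-law behaviour required to extract a clean fractional limit.
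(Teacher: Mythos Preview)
Your plan is essentially the same four-step program the paper follows (tightness, identification of limit points via the resolvent, conversion to the power-law Volterra equation, price formula), and the key structural observations---block-triangularisation in the basis $\bm{O}$, subcritical resolvent $(\bm{I}-\integralInf{\bm{C}})^{-1}\integralInf{\bm{B}}$ for the non-critical block, matrix Mittag-Leffler limit for the critical block---are all correct.

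Two places where your anticipated difficulties and tools diverge from the paper are worth flagging. First, the matrix Tauberian step is simpler than you fear: rather than controlling non-commutative convolution powers of $\bm{A^T}$ directly, the paper passes to Laplace transforms, where convolution becomes ordinary matrix multiplication and the resolvent is just $\bm{O}(\bm{I}-\hat{\bm{\varphi}}^T)^{-1}\hat{\bm{\varphi}}^T\bm{O}^{-1}$. Block inversion plus the scalar Tauberian estimate applied entrywise to $\normOne{\bm{A^T}}-\hat{\bm{A}}^T(t/T)$ then gives $\bigl[\tfrac{\Gamma(1-\alpha)}{\alpha}t^\alpha\bm{M}+\bm{K}\bigr]^{-1}$ with no functional-calculus machinery required; the positivity of eigenvalues of $\bm{K}\bm{M}^{-1}$ is used only to ensure this matrix is invertible for $t\geq 0$. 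Second, the Brownian representation in the paper is not obtained via a martingale functional CLT but by first showing that any limit $\bm{Z}$ is itself a continuous martingale with $[\bm{Z},\bm{Z}]=\textnormal{diag}(\bm{X})$ (this follows from $C$-tightness and the structure of $[\bm{Z^T},\bm{Z^T}]$), and then applying the martingale representation theorem to $\bm{Z}$ directly. Finally, the passage from the Mittag-Leffler kernel $\bm{f^{\alpha,\Lambda}}$ to the power-law kernel $(t-s)^{\alpha-1}$ is done explicitly by applying $I^{1-\alpha}$ and using the identity $I^{1-\alpha}\bm{f^{\alpha,\Lambda}}=\bm{\Lambda}(\bm{I}-\bm{F^{\alpha,\Lambda}})$, which is perhaps more concrete than ``separating drift from noise''.
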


Theorem \ref{thm:ConvergenceInt} links multivariate nearly unstable Hawkes processes and multivariate rough volatility. We note that:
\begin{itemize}
    \itemsep0em
    \item The resulting stochastic Volterra equation has non-trivial solutions, as the examples in Section \ref{sec:example} will show.
    \item From a financial perspective, Theorem \ref{thm:ConvergenceInt} shows that the limiting volatility process for a given asset is a sum of different factors. The matrix $\bm{\Delta}$ mixes them and is therefore responsible for correlations between asset prices. Remarks and comments on $\bm{I}+\bm{\Delta}$ are developed in Section \ref{sec:example}.
    \item The theorem implies that adding/removing an asset to/from a market has an impact on the individual volatility of other assets. We can estimate the magnitude of such volatility modifications by calibrating Hawkes processes on price changes.
    \item Since there is a one to one correspondence between the Hurst exponent $H$ and the long memory parameter of the order flow $\alpha$, our model yields the same roughness for all assets. Extensions to allow for different exponents to coexist, for example by introducing an asset-dependent scaling through $\bm{D} = (\alpha_1, \cdots, \alpha_m)$ and studying $\bm{T^{-D}}\bm{\lambda^{T}_{tT}}$, are more intricate. In particular, one needs to use a special function extending the Mittag-Leffler matrix function such that its Laplace transform is of the form $(\bm{I} + \bm{\Lambda t^{D}})^{-1}$.
\end{itemize}

%%%% Illustration of core I: Examples
\section{Applications}
\label{sec:example}

In this section, we give examples of processes obtained through Theorem \ref{thm:ConvergenceInt} under different assumptions on the microscopic parameters. The first example highlights the flexibility of our framework and shows that the obtained limit in Theorem \ref{thm:ConvergenceInt} is non-trivial. We then study the influence of microscopic parameters on the limiting price and volatility processes when modeling two assets. Finally, we model many different assets to reproduce realistic high-dimensional correlation matrices.

\subsection{An example of non-trivial volatility process obtained through Theorem \ref{thm:ConvergenceInt}}

Before presenting some truly relevant results for finance, we develop an example demonstrating that the solutions to the Volterra equations of the form of Equation \eqref{eq:rough_sde_noF} are non-trivial. The structure of our Volterra equations is close to those studied in \cite{Jaber2017AffineProcesses}, which proves existence and uniqueness of affine Volterra equations. In particular, this paper covers Volterra equations of the following type, for $\alpha \in (1/2,1)$:
$$
\bm{X_t} = \bm{X_0} + \int_0^t (t-s)^{\alpha - 1} \bm{b(X_s)} ds + \int_0^t (t-s)^{\alpha - 1} \bm{\sigma(X_s)} d\bm{B_s},
$$
where $\bm{b} \colon \mathbb{R} \to \mathbb{R}^n$ and $\bm{\sigma} \colon \mathbb{R} \to \mathcal{M}_{n}(\mathbb{R})$ are continuous functions. A key condition required for existence and uniqueness is sublinear growth condition on $\bm{b}$ and $\bm{\sigma}$, that is
\begin{equation}
\label{eq:condition_eduardo}
\lVert \bm{b(x)} \rVert_2 \vee \lVert \bm{\sigma(x)} \rVert_2 \leq c(1+\lVert \bm{x} \rVert_2),
\end{equation}
for some constant $c>0$ where $\lVert \cdot \rVert_2$ is the usual Euclidian norm for vectors and matrices. Thus, this setting covers equations of the type
$$
\bm{X_t} = \bm{X_0} + \int_0^t (t-s)^{\alpha - 1} \bm{b(X_s)} ds + \int_0^t (t-s)^{\alpha - 1} \textnormal{diag}(\sqrt{\bm{X_s}}) d\bm{B_s},
$$
which are a particular case of Theorem \ref{thm:ConvergenceInt}. However, note that Condition \ref{eq:condition_eduardo} fails when $\bm{\sigma(x)} = \bm{\Sigma} \textnormal{diag}(\sqrt{\bm{x}})$ for some non-diagonal matrix $\bm{\Sigma}$. Interestingly, this setting is covered in our approach as illustrated by the following corollary.
\begin{corollary}
\label{cor:correlation_vol}
We can find a microscopic process satisfying the assumptions of Theorem \ref{thm:ConvergenceInt} such that $\bm{V}$ is a non-negative process which satisfies, for any $t$ in [0,1],
\begin{equation*}
    \bm{V}_t = \int_0^t (t-s)^{\alpha - 1} (
    \bm{\theta} -  \bm{G} \bm{V_s}) ds + \int_0^t (t-s)^{\alpha - 1}
    \bm{\Sigma} \textnormal{diag}(\sqrt{\bm{V_s}}) d\bm{B}_s,
\end{equation*}
where $\bm{\theta}$ is a $4$-dimensional vector, $\bm{G}, \bm{\Sigma}$ are $4 \times 4$ non-diagonal matrices and $\bm{B}$ is a $4$-dimensional Brownian motion.
\end{corollary}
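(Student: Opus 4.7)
The plan is to construct an explicit microscopic Hawkes process satisfying the hypotheses of Theorem \ref{thm:ConvergenceInt} and then to re-express the limiting equation \eqref{eq:rough_sde_noF} in terms of $\bm{V} := \bm{\Theta^1} \bm{\Tilde{V}}$. Since the target process has dimension $4$, I take $m = 2$ assets and set $n_c = 2m = 4$, so that every eigendirection is critical and the blocks $\bm{B^T}, \bm{C^T}$ of Assumption \ref{ass:StructureHawkes} disappear; the kernel reduces to $\bm{\phi^T} = \bm{O} \bm{A^T} \bm{O}^{-1}$, and the theorem gives $\bm{\Theta^1} = \bm{O} \bm{K}^{-1}$ and $\bm{O^{(-1)}_{11}} = \bm{O}^{-1}$.

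For the microscopic ingredients I take $\bm{A^T}(t) = \bm{a^T} f(t)$ where $f\colon \mathbb{R}_+ \to \mathbb{R}_+$ is a scalar density with power-law tail $f(t) \sim c\, t^{-1-\alpha}$ and $\bm{a^T}$ is a $4 \times 4$ matrix converging to some $\bm{a}$, tuned so that $T^{\alpha}(\bm{I} - \normOne{\bm{A^T}}) \to \bm{K}$ for an invertible $\bm{K}$ with $\bm{K}\bm{M}^{-1}$ having positive eigenvalues (where $\bm{M} \propto \bm{a}$). For the baseline I take $\bm{\mu^T_t} = T^{\alpha-1} \bm{\mu_0}$ with $\mu_0^{i+} = \mu_0^{i-}$ for $i=1,2$. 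The convergences \eqref{eq:convergence_rates_AT}--\eqref{eq:convergence_rates_mu} then hold by direct computation and the symmetric baseline provides the no-statistical-arbitrage consequence of Assumption \ref{ass:no_arb}.

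The delicate choice is $\bm{O}$, which must simultaneously (i) render $\bm{\phi^T} = \bm{O} \bm{A^T} \bm{O}^{-1}$ entrywise non-negative so that $\bm{\phi^T}$ is a valid Hawkes kernel, (ii) satisfy the sufficient condition $\trans{\bm{\phi^T}} \bm{v_i} = \sum_j (\trans{\bm{\phi^T}} \bm{v_i} \cdot \bm{v_j}) \bm{v_j}$ from the remark following Assumption \ref{ass:no_arb} (with $\bm{v_i} = \bm{e_{i+}} - \bm{e_{i-}}$), and (iii) produce a non-diagonal conjugate $\bm{O} \bm{M}^{-1} \bm{O}^{-1}$. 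Condition (ii) is equivalent to the span of the $\bm{v_i}$'s being invariant under $\trans{\bm{\phi^T}}$, which is naturally enforced by taking $\bm{O}$ with ordered columns $(\bm{u_1}, \bm{u_2}, \bm{v_1}, \bm{v_2})$, where $\bm{u_i} := \bm{e_{i+}} + \bm{e_{i-}}$, and choosing $\bm{a}$ block upper triangular in this basis with a nonzero upper-right block (ensuring (iii)). Non-negativity in (i) is then secured by taking the diagonal blocks of $\bm{a}$ large enough relative to the upper-right block so that the conjugation does not produce negative entries; this is most easily verified by computing $\bm{\phi^T}$ entrywise in the $(\bm{u},\bm{v})$-basis.

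Once the assumptions are verified, Theorem \ref{thm:ConvergenceInt} yields the equation for $\bm{\Tilde{V}}$. Setting $\bm{V_s} := \bm{\Theta^1} \bm{\Tilde{V}_s}$ and applying $\bm{\Theta^1}$ to both sides of \eqref{eq:rough_sde_noF}, a direct computation using $\bm{\Theta^1}\bm{\Lambda} = \tfrac{\alpha}{\Gamma(1-\alpha)} \bm{O} \bm{M}^{-1}$ gives the desired form with $\bm{\theta} = \tfrac{\alpha}{\Gamma(\alpha)\Gamma(1-\alpha)} \bm{O} \bm{M}^{-1} \bm{\theta_0}$, $\bm{G} = \tfrac{\alpha}{\Gamma(\alpha)\Gamma(1-\alpha)} \bm{O} \bm{M}^{-1} \bm{K} \bm{O}^{-1}$ and $\bm{\Sigma} = \tfrac{\alpha}{\Gamma(\alpha)\Gamma(1-\alpha)} \bm{O} \bm{M}^{-1} \bm{O}^{-1}$; non-negativity of $\bm{V}$ is inherited from the theorem, non-diagonality of $\bm{\Sigma}$ (and of $\bm{G}$) follows from (iii). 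The main obstacle I anticipate is the joint satisfaction of (i)--(iii): this rules out a purely generic argument and forces exhibiting a concrete $(\bm{O}, \bm{a})$ and verifying non-negativity by direct inspection, which is the core computational work of the proof.
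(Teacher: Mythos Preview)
Your strategy of taking $n_c = 2m = 4$ (all directions critical) is different from the paper's, which sets $n_c = 2$ and exploits the non-critical block. In principle a fully critical construction could work, but the specific ansatz you choose creates a fatal obstruction. Writing $\bm{A^T}(t) = \bm{a^T} f(t)$ with a single scalar density $f$, you get $\normOne{\bm{A^T}} = \bm{a^T}$, and the requirement $T^{\alpha}(\bm{I} - \bm{a^T}) \to \bm{K}$ forces $\bm{a^T} \to \bm{I}$, i.e.\ $\bm{a} = \bm{I}$. Consequently $\bm{A}(t) = f(t)\bm{I}$ and
\[
\bm{M} \;=\; \lim_{x\to\infty} \alpha x^{\alpha}\!\int_x^\infty \bm{A}(s)\,ds \;=\; \Bigl(\lim_{x\to\infty}\alpha x^{\alpha}\!\int_x^\infty f(s)\,ds\Bigr)\bm{I}
\]
is a scalar multiple of the identity. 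Your own formula then gives $\bm{\Sigma} = \tfrac{\alpha}{\Gamma(\alpha)\Gamma(1-\alpha)}\,\bm{O}\bm{M}^{-1}\bm{O}^{-1}$, which is again scalar and hence diagonal. So condition (iii) cannot be met, and the corollary's key feature --- a genuinely non-diagonal $\bm{\Sigma}$ in front of $\textnormal{diag}(\sqrt{\bm{V_s}})$ --- is lost. (Allowing $\int f \neq 1$ does not help: $\bm{a}$ is still forced to be a scalar multiple of $\bm{I}$.)

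The paper's construction avoids this trap by keeping $n_c = 2$: the four-dimensional $\bm{V}$ is then obtained from a two-dimensional $\bm{\Tilde V}$ through $\bm{\Theta^1}$ and $\bm{\Theta^2}$, and it is the rank-deficient structure of these maps (together with $\bm{O_{11}^{(-1)}}, \bm{O_{12}^{(-1)}}$) that manufactures the off-diagonal entries of $\bm{\Sigma}$, not the tail matrix $\bm{M}$. If you want to rescue the $n_c = 4$ route, you must abandon the factored form $\bm{a^T} f(t)$ and take $\bm{A^T}(t)$ with genuinely matrix-valued tail behaviour (e.g.\ a sum $\sum_j \bm{P_j} f_j(t)$ with distinct tail constants) so that $\bm{M}$ is not scalar; you would then still have to exhibit an $\bm{O}$ making $\bm{O}\bm{A^T}\bm{O}^{-1}$ entrywise non-negative, which is considerably more delicate than in the paper's $n_c = 2$ setting.
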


Thus, our framework yields non-trivial solutions and leads to interesting new examples of processes. We now focus on building realistic models to discuss the correspondence between the microscopic parameters of the Hawkes kernel and macroscopic quantities such as correlations and volatility.

\subsection{Influence of microscopic properties on the price dynamics of two correlated assets}
Our first model to understand the price formation process focuses on two assets. Let $\mu^1, \mu^2 > 0$, $\alpha \in (1/2,1), \gamma_1,  \gamma_2$ in $[0,1]$, $H^c_{21}, H^a_{21}, H^c_{12},  H^a_{12}$ in $[0,1]$ \footnote{The superscripts $c$ (resp. $a$) stand for continuation (resp. alternation) to describe that after a price move in a given direction, $H^c$ (resp. $H^a$) encodes the tendency to trigger other price moves in the same (resp. opposite) direction will follow.} such that (here $\sqrt{\cdot}$ is the principal square root, so that if $x<0$, $\sqrt{x}=i \sqrt{-x}$):
\begin{align*}
    0 &\leq (H^c_{12}+H^a_{12})(H^c_{21}+H^a_{21}) < 1 \\
    0 &\leq \modulus{1 - (\gamma_1 + \gamma_2) - \sqrt{(H^c_{12}-H^a_{12})(H^c_{21}-H^a_{21}) + (\gamma_1-\gamma_2)^2}} < 1 \\
    0 &\leq \modulus{1 - (\gamma_1 + \gamma_2) + \sqrt{(H^c_{12}-H^a_{12})(H^c_{21}-H^a_{21}) + (\gamma_1-\gamma_2)^2}} < 1.
\end{align*}
We now have to choose a kernel which satisfies the different assumptions of Section \ref{sec:assumptions} to model the interactions between our two assets. Theorem \ref{thm:ConvergenceInt} states that the only relevant parameters for the macroscopic price are $\bm{K}$ and $\bm{M}$. For simplicity we choose the kernel such that $\bm{M} = \alpha \bm{I}$. This leads us to define, for $t \geq 0$,
\begin{align*}
    \phi^T_1(t) &=  (1-\gamma_1) \alpha (1 - T^{-\alpha})\mathbb{1}_{t \geq 1} t^{-(\alpha + 1)} &\phi^{T,c}_{21}(t) = \alpha T^{-\alpha} H^c_{21} \mathbb{1}_{t \geq 1} t^{-(\alpha + 1)} \\
    \phi^T_2(t) &=  \gamma_1 \alpha (1 - T^{-\alpha})\mathbb{1}_{t \geq 1} t^{-(\alpha + 1)} &\phi^{T,a}_{21}(t) = \alpha T^{-\alpha} H^a_{21} \mathbb{1}_{t \geq 1} t^{-(\alpha + 1)} \\
    \tilde{\phi}^T_1(t) &=  (1- \gamma_2) \alpha (1 - T^{-\alpha})\mathbb{1}_{t \geq 1} t^{-(\alpha + 1)}  &\phi^{T,c}_{12}(t) = \alpha T^{-\alpha} H^c_{12} \mathbb{1}_{t \geq 1} t^{-(\alpha + 1)} \\
    \tilde{\phi}^T_2(t) &=   \gamma_2  \alpha (1 - T^{-\alpha})\mathbb{1}_{t \geq 1} t^{-(\alpha + 1)} &\phi^{T,a}_{12}(t) = \alpha T^{-\alpha} H^a_{12} \mathbb{1}_{t \geq 1} t^{-(\alpha + 1)}.
\end{align*}
For a realistic model, we impose the exogenous source of upward and downward price moves to be equal: $\mu^{1+} = \mu^{1-}$ and $\mu^{2+} = \mu^{2-}$. Thus, the sequence of baselines and kernels are chosen as
$$
\bm{\mu^T} = T^{\alpha - 1}
\begin{pmatrix}
\mu^{1} \\
\mu^{1} \\
\mu^{2} \\
\mu^{2}
\end{pmatrix}, \quad
\bm{\phi^T} = \begin{pmatrix}
\phi_1^T & \phi_2^T & \phi^{T,c}_{12} &  \phi^{T,a}_{12} \\
\phi_2^T & \phi_1^T  & \phi^{T,a}_{12} & \phi^{T,c}_{12} \\
\phi^{T,c}_{21} & \phi^{T,a}_{21} & \tilde{\phi}_{1}^T & \tilde{\phi}_{2}^T \\
\phi^{T,a}_{21} & \phi^{T,c}_{21} &  \tilde{\phi}_{2}^T &   \tilde{\phi}_1^T \\
\end{pmatrix}.
$$
Applying theorem \ref{thm:ConvergenceInt} yields the following result.
\begin{corollary}
\label{cor:example_2D}

Consider any limit point $\bm{P}$ of $\bm{P^T}$. Under the above assumptions, it satisfies
\begin{equation}
    \label{eq:prices_cross}
    \bm{P_t} = \dfrac{\sqrt{2}}{4 \gamma_1 \gamma_2 - (H_{12}^c - H_{12}^a)(H_{21}^c - H_{21}^a)}\begin{pmatrix}
   2 \gamma_2 & H_{21}^c - H_{21}^a \\
   H_{12}^c - H_{12}^a & 2 \gamma_1
    \end{pmatrix}
    \int_0^t
    \begin{pmatrix}
    \sqrt{V^1_s} dW^1_s \\
    \sqrt{V^2_s} dW^2_s
    \end{pmatrix},
\end{equation}
with
\begin{align}
 \label{eq:vol_cross}
    \begin{pmatrix}
V^1_t \\
V^2_t
\end{pmatrix} &=  \dfrac{\alpha}{\Gamma(\alpha)\Gamma(1 - \alpha)} \int_0^t (t-s)^{\alpha - 1} \left( \begin{pmatrix}
    \mu^1 \\
    \mu^2
    \end{pmatrix} - \dfrac{1}{1 - (H^c_{12}+H^a_{12})(H^c_{21}+H^a_{21})} \begin{pmatrix}
1 & H^c_{21}+H^a_{21} \\
H^c_{12}+H^a_{12} & 1
\end{pmatrix} \begin{pmatrix}
V^1_s \\
V^2_s
\end{pmatrix} \right)ds \nonumber \\
& + \sqrt{2}  \dfrac{\alpha}{\Gamma(\alpha)\Gamma(1 - \alpha)} \int_0^t (t-s)^{\alpha - 1} \begin{pmatrix}
    \sqrt{V^1_s} dZ^1_s \\
    \sqrt{V^2_s} dZ^2_s
    \end{pmatrix},
\end{align}
\end{corollary}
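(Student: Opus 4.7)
The strategy is to apply Theorem~\ref{thm:ConvergenceInt} to the specific kernel and baseline of the corollary, and then to algebraically simplify the resulting stochastic Volterra equations into~\eqref{eq:prices_cross}--\eqref{eq:vol_cross}.

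The first step is to identify a convenient change of basis. I would take $\bm{O}$ to be the orthogonal matrix whose columns (critical directions first) are $\tfrac{1}{\sqrt{2}}(\bm{e_{1+}}+\bm{e_{1-}})$, $\tfrac{1}{\sqrt{2}}(\bm{e_{2+}}+\bm{e_{2-}})$, $\tfrac{1}{\sqrt{2}}(\bm{e_{1+}}-\bm{e_{1-}})$, $\tfrac{1}{\sqrt{2}}(\bm{e_{2+}}-\bm{e_{2-}})$. The specific structure of the kernel (in particular $\phi^T_{i+,j+}=\phi^T_{i-,j-}$ and $\phi^T_{i+,j-}=\phi^T_{i-,j+}$) makes $\bm{\phi^T}$ block-\emph{diagonal} in this basis, with $\bm{B^T}=0$: the symmetric block $\bm{A^T}$ aggregates the sums $\phi^T_1+\phi^T_2$, $\phi^{T,c}_{ij}+\phi^{T,a}_{ij}$ and plays the role of the critical block, while the antisymmetric block $\bm{C^T}$ aggregates the corresponding differences. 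A direct computation gives $\bm{K}=\bigl(\begin{smallmatrix} 1 & -(H^c_{12}+H^a_{12}) \\ -(H^c_{21}+H^a_{21}) & 1 \end{smallmatrix}\bigr)$ and $\bm{M}=\alpha\bm{I}$, hence $\bm{\Lambda}=\Gamma(1-\alpha)^{-1}\bm{K}$; the three inequalities listed in the corollary are precisely what is needed to ensure that $\bm{K}\bm{M}^{-1}$ has positive eigenvalues and that $\spectralRadius{\integralInf{\bm{C}}}<1$. Assumption~\ref{ass:no_arb} is checked via the sufficient criterion of the remark after Assumption~\ref{ass:no_arb}, which is immediate from the $(\phi^{T,c},\phi^{T,a})$-structure of the cross blocks.

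The second step is to assemble the ingredients of Theorem~\ref{thm:ConvergenceInt} and reduce to the physical volatilities. Since $\integralInf{\bm{B}}=0$, one has $\bm{\Theta^i}=\bm{O_{i1}}\bm{K}^{-1}$; the shape of $\bm{O_{i1}}$ (two identical rows) forces $V^{1+}=V^{1-}=:V^1$ and $V^{2+}=V^{2-}=:V^2$, collapsing the four-dimensional $\bm{V}$ to two physical volatilities linked to $\bm{\tilde V}$ by the invertible relation $\bm{\tilde V}_t=\sqrt{2}\,\bm{K}\,\trans{(V^1_t,V^2_t)}$. The baseline similarly reduces to $\bm{\theta_0}=\sqrt{2}\,\trans{(\mu^1,\mu^2)}$. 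Substituting these into~\eqref{eq:rough_sde_noF} and inverting the linear map on both sides yields~\eqref{eq:vol_cross}.

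Finally, one computes $\bm{I}+\bm{\Delta}$ and assembles the price formula. Because $\bm{\psi^T}$ is block-diagonal in the new basis, the difference $\normOne{\psi^T_{j+,i+}}-\normOne{\psi^T_{j-,i+}}$ isolates the antisymmetric component and, after conjugation by $\bm{O}$, reduces to an appropriate entry of the bounded matrix $\integralInf{\bm{C^T}}(\bm{I}-\integralInf{\bm{C^T}})^{-1}$. Combining the resulting matrix $\bm{I}+\bm{\Delta}$ with the identity $\trans{\bm{Q}}\textnormal{diag}(\sqrt{\bm{V_s}})d\bm{B_s}=\sqrt{2}\,\trans{(\sqrt{V^1_s}dW^1_s,\sqrt{V^2_s}dW^2_s)}$ (which follows from $V^{i+}=V^{i-}$, since $B^{i+}-B^{i-}$ is a Brownian motion of variance $2s$) then produces~\eqref{eq:prices_cross}. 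The main technical obstacle is this last computation: one must carefully track how the $O(T^{-\alpha})$ off-diagonal entries of $\integralInf{\bm{C^T}}$ interact, through the matrix inversion $(\bm{I}-\integralInf{\bm{C^T}})^{-1}$, with its $O(1)$ diagonal so as to produce the announced denominator $4\gamma_1\gamma_2-(H^c_{12}-H^a_{12})(H^c_{21}-H^a_{21})$ and the cross-asset entries $H^c_{ji}-H^a_{ji}$ in $\bm{I}+\bm{\Delta}$.
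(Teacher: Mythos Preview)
Your strategy mirrors the paper's almost exactly: the same change of basis (the paper uses the unnormalized columns $(1,1,0,0)$, $(0,0,1,1)$, $(1,-1,0,0)$, $(0,0,1,-1)$), the same block-diagonalization with $\bm{B^T}=0$, the same identification of $\bm{K}$ and $\bm{M}=\alpha\bm{I}$, the same reduction $V^{i+}=V^{i-}$ for the volatility equation, and the same route to $\bm{I}+\bm{\Delta}$ through the non-critical block, leading to $\bm{I}+\bm{\Delta}=(\bm{I}-\integralInf{\bm{C}})^{-1}$.

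There is, however, a genuine gap in the final step of your plan. You correctly observe that, with the kernel as written, the off-diagonal entries of $\integralInf{\bm{C^T}}$ are $O(T^{-\alpha})$; but the mechanism you propose---their ``interaction through the matrix inversion'' with the $O(1)$ diagonal---cannot produce the announced result. Inverting a $2\times 2$ matrix with $O(1)$ diagonal and $O(T^{-\alpha})$ off-diagonal yields $O(T^{-\alpha})$ off-diagonal, so in the limit $T\to\infty$ the matrix $\bm{I}+\bm{\Delta}$ would come out diagonal and the cross entries $H^c_{ji}-H^a_{ji}$ of~\eqref{eq:prices_cross} would never appear. The paper's own proof does not confront this tension: it evaluates $(\bm{I}-\integralInf{\bm{C}})^{-1}$ for the \emph{limit} kernel with off-diagonal entries equal to $H^c_{ij}-H^a_{ij}$, i.e.\ without any $T^{-\alpha}$ factor, which matches the stated formula but is inconsistent with the $T^{-\alpha}$ prefactor in the definitions of $\phi^{T,c}_{ij}$ and $\phi^{T,a}_{ij}$. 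The obstacle you flag is therefore real, but its resolution is not the inversion mechanism you describe; rather, the intended $\integralInf{\bm{C}}$ must have $O(1)$ cross entries (the $T^{-\alpha}$ in the cross-asset kernel definitions appears to be a slip in the setup), after which the computation of $\bm{I}+\bm{\Delta}$ is an elementary $2\times 2$ inversion with no subtlety at all.
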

where $\bm{W}$ and $\bm{Z}$ are bi-dimensional independent Brownian motions. This model helps us understand how microscopic parameters drive the price formation process to generate a macroscopic price and volatility. We begin our remarks with some definitions. 
\\ \\
We call \textit{momentum} the trend (i.e., the imbalance between the number of upward and downward jumps) created by jumps of one asset on itself . The opposite effect is referred to as \textit{ mean-reversion}. For example, the parameter $\gamma_1$ controls the intensity of self-induced bid-ask bounce on Asset 1: when $\gamma_1$ close to zero corresponds to a strong momentum while $\gamma_1$ close to one corresponds to a strong mean-reversion.
\\ \\
We call \textit{cross-asset momentum} the trend created by jumps of one asset on another. For example, cross-asset momentum from Asset 2 to Asset 1 (resp. Asset 1 to Asset 2) appears via $H^c_{21} - H^a_{21}$ (resp. $H^c_{12} - H^a_{12}$): when both $H^c_{21} - H^a_{21}$ and $H^c_{12} - H^a_{12}$ are nill, the prices of Asset 1 and Asset 2 are uncorrelated. We now turn to comments on the volatility process.
\\ \\
Because of its role in the single-asset case, we refer to $\bm{V}$ as the \textit{fundamental variance}: for example $V^1$ is the fundamental variance of Asset 1. The equation satisfied by $\bm{V}$ only depends on the sum of the feedback effects between each asset through $H^c_{12} + H^a_{12}$: from a volatility viewpoint, upward and downward jumps have the same impact. Furthermore, we can compute the expected fundamental variance using Mittag-Leffler functions (see Section \ref{sec:proofs}). 
\\ \\
Mean-reversion drives down volatility while cross-asset momentum increases it. Indeed, computing $\mathbb{E}[(P^{1}_t)^2]$ for example we get:
\begin{equation*}
    \mathbb{E}[(P^1_t)^2]=2 \dfrac{4\gamma_2^2 \int_0^t \mathbb{E}[V^1_s] ds + (H^{c}_{12} - H^{a}_{12})(H^{c}_{21} - H^{a}_{21}) \int_0^t \mathbb{E}[V^2_s] ds }{[4 \gamma_1 \gamma_2 - (H_{12}^c - H_{12}^a)(H_{21}^c - H_{21}^a)]^2}.
\end{equation*} 
In particular, increasing $\gamma_1$ or $\gamma_2$ does not change $\bm{V}$ but reduces $\mathbb{E}[(P^1_t)^2]$. This example may be particularly relevant to understand the contribution of Asset 2 to the volatility of Asset 1 through calibration to market data since if Asset 2 were removed from the market, we would have $\mathbb{E}[(P^1_t)^2]=\dfrac{1}{2 \gamma_1}$. Focusing now on the price formation process, we see that it results from a combination of momentum, mean-reversion and cross-asset momentum. We illustrate this in two extreme cases: when there is no cross-asset momentum and when cross-asset momentum is strong.
\begin{itemize}
    \item When there is no cross-asset momentum (i.e. $H^c_{12}=H^a_{12}$ and $ H^c_{21}=H^a_{21}$) at the microscopic scale a price move on Asset 2 has the same impact on the intensity of upward and downward price moves of Asset 1. Thus the difference between the expected number of upward and downward jumps does not change after a price move on Asset 2: the expected microscopic price of Asset 1 is unaffected and price moves of Asset 2 generate no trend on Asset 1. This results in macroscopic prices being uncorrelated (see Equation \eqref{eq:prices_cross}).
    \item On the other hand, when cross-asset momentum is strong (i.e. $(H^{c}_{12} - H^{a}_{12})(H^{c}_{21} - H^{a}_{21}) \approx 4 \gamma_1 \gamma_2$, for example if $H^{c}_{12} - H^{a}_{12} = 2\gamma_1\sqrt{1 - \epsilon}, H^{c}_{12} - H^{a}_{12} = 2\gamma_2\sqrt{1 - \epsilon}$ for some small $\epsilon > 0$), at the microscopic scale, a price move on Asset 2 significantly increases the probability of a future price move of Asset 1 in the same direction (and vice-versa). In this context we have
    $$
    \bm{\Delta + I} = \dfrac{1}{2\gamma_1 \gamma_2 \epsilon}\begin{pmatrix}
        \gamma_2 & \gamma_2\sqrt{1 - \epsilon} \\
       \gamma_1\sqrt{1 - \epsilon} &  \gamma_1
        \end{pmatrix}.
    $$
    Using Equation \eqref{eq:prices_cross} we can check that $\dfrac{\mathbb{E}[P^1_t P^2_t]}{\sqrt{\mathbb{E}[(P^1_t)^2] \mathbb{E}[(P^2_t)^2]}} \underset{\epsilon \to 0}{\to} 1$ and prices evolve in unison.
\end{itemize} 
This example underlines that in our approach (thanks to our no-arbitrage constraint) microscopic features transfer to macroscopic properties in an intuitive way.

\subsection{Reproducing realistic correlation matrices of large number of assets using microscopic properties}

It is well-known that the correlation matrix of stocks has few large eigenvalues outside of a "bulk" of eigenvalues attributable to noise (see for example \cite{Laloux1999NoiseMatrices}). The largest eigenvalue is referred to as the market mode (because the associated eigenvector places a roughly equal weight on each asset) and is much larger than other eigenvalues. Other significant eigenvalues can be related to the presence of sectors: groups of companies with similar characteristics.
\\ \\
How can we provide microstructural foundations for this stylised fact? The large eigenvalue associated to the market mode implies that, in a first approximation, stock prices move together: a price increase on one asset is likely followed by a price increase on all other assets. Translating this in our framework, an upward (resp. downward) jump on a given asset increases the probability of an upward (resp. downward) jump on all other assets. We further expect that an upward price move on an asset increases this probability much more on an asset from the same sector than on an unrelated one.
\\ \\
The above remarks lead us to consider a model where:
\begin{itemize}
    \itemsep0em
    \item All stocks share some fundamental high-frequency properties by having similar self-excitement parameters in the kernel.
    \item Stocks have a stronger influence on price changes of stocks within the same sector.
    \item Within the same sector, all stocks have the same microscopic parameters.
\end{itemize}
The technical details of our setting are presented in Appendix \ref{sec:proof_example_nd} and we only provide here essential elements to understand the framework. Let $\mu^1, \dots, \mu^m > 0$ be the baselines of each asset. Using the same notations as before, take $\gamma$ in $[0,1]$, $\alpha$ in $(1/2,1)$ and $H^c, H^a > 0$. We consider $R > 0$ different sectors, Sector $r$ having $m_r$ stocks. For a pair of stocks which we dub $1,2$ to make an analogy with the previous example, we have that:
\begin{itemize}
    \item The self excitement parameters are equal: $\gamma_1 = \gamma_2 = \gamma$ where $\gamma$ is the same for all stocks.
    \item If Stock 1 and Stock 2 do not belong to the same sector, $H_{21}^c = H_{12}^c = H^c$, $H_{21}^a = H_{12}^a = H^a$ where $H^c,H^a$ are the same for all stocks.
    \item If Stock 1 and Stock 2 belong to the same sector $r$, $H_{21}^c = H_{12}^c = H^c + H^c_r$, $H_{21}^a = H_{12}^a = H^a + H^a_r$ where $H^c_r,H^a_r$ are the same for all stocks belonging to sector $r$.
\end{itemize}
The asymptotic framework is built as in the previous example, with the details given in the proof of Corollary \ref{cor:example_nD} in Appendix \ref{sec:proof_example_nd}. We write $i_r := m_0 + m_1 + \cdots + m_{r-1}$ for $1 \leq r \leq R$ (with convention $m_0=1$) so that stocks from sector $r$ are indexed between $i_{r}$ and $i_{r+1}$ excluded and define the following vectors
\begin{align*}
    \bm{w} &:= \dfrac{1}{\sqrt{m}}(\bm{e_1} + \cdots + \bm{e_{m}}) \\
    \bm{w_r} &:= \dfrac{1}{\sqrt{m_r}}\sum_{i_r \leq i < i_{r+1}}\bm{e_{i}} \\
    \bm{\theta} &:= \sum_{1 \leq i \leq m} \mu^i \bm{e_i}.
\end{align*} 
We consider an asymptotic framework where the number of assets will eventually grow to infinity. As will become clear in the proof, the only non-trivial regime appears when $H^{c},H^{a},H^{c}_r,H^{a}_r \underset{m \to \infty}{=} \mathcal{O}(m^{-1})$.  Thus we assume that $m H^{c},mH^{a},mH^{c}_r,mH^{a}_r$ converge to $\bar{H}^{c},\bar{H}^{a},\bar{H}^{c}_r,\bar{H}^{a}_r$ as $m$ tends to infinity. We also assume that the proportion of stocks in a given sector relative to the total number of stocks does not vanish: for each $1 \leq r \leq R$, $\frac{m_r}{m} \underset{m \to \infty}{\to} \eta_r > 0$. Define the following constants which will appear in the price and volatility processes: $\lambda^{+} := \bar{H}^{c} + \bar{H}^{a}, \lambda_r^{+} := \bar{H}^{c}_r + \bar{H}^{c}_r$, $\lambda^{-} := \bar{H}^{c} - \bar{H}^{a}, \lambda^{-}_r := \bar{H}^{c}_r - \bar{H}^{a}_r$. Applying Theorem \ref{thm:ConvergenceInt} yields the following result.
\begin{corollary}
\label{cor:example_nD}
Consider any limit point $\bm{P}$ of $\bm{P^T}$. Under the above assumptions, it satisfies:
\begin{align*}
    \bm{P}_t &= \sqrt{2} \bm{\Sigma}_{\varepsilon} \int_{0}^{t}\textnormal{diag}(\sqrt{\bm{V}_s}) d\bm{W_s},
\end{align*}
where $\bm{W}$ is a Brownian motion, $\bm{\Sigma}_{\varepsilon} := (2 \gamma \bm{I} - \lambda^{-} \bm{v} \trans{\bm{v}} - \sum_{1 \leq r \leq R} \eta_r \lambda^{-}_r \bm{v_r} \trans{\bm{v_r}} + \bm{\varepsilon})^{-1}$ with $\bm{\epsilon}$ a deterministic $m \times m$ matrix such that $\spectralRadius{\bm{\epsilon}} \underset{m \to \infty}{=} o(m^{-1})$ and $\bm{V}$ satisfies the stochastic Volterra equation
\begin{align*}
    \bm{V_t} &=  \dfrac{\alpha}{\Gamma(\alpha)\Gamma(1 - \alpha)} \int_0^t (t-s)^{\alpha-1} (\bm{\theta} - \bm{\mathcal{V}}_{\epsilon} \bm{V_s})ds +  \dfrac{\sqrt{2} \alpha}{\Gamma(\alpha)\Gamma(1 - \alpha)} \int_0^t (t-s)^{\alpha-1}  \textnormal{diag}(\sqrt{\bm{V_s}})d\bm{Z_s},
\end{align*}
with $\bm{Z}$ a Brownian motion independent from $\bm{W}$ and $\bm{\mathcal{V}}_{\epsilon} :=  \left( \bm{I} - \lambda^{+} \bm{v} \trans{\bm{v}} - \sum_{1 \leq r \leq R} \eta_r \lambda^{+}_r \bm{v_r} \trans{\bm{v_r}} + \bm{\epsilon} \right)^{-1}$ where $\bm{\varepsilon}$ is a deterministic $m \times m$ matrix such that $\spectralRadius{\bm{\varepsilon}} \underset{m \to \infty}{=} o(m^{-1})$.
\end{corollary}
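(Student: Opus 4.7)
The plan is to apply Theorem~\ref{thm:ConvergenceInt} to the explicit multi-sector kernel described above and then simplify the limiting matrices by exploiting the rank-$(R+1)$ perturbation structure inherent to the sector decomposition.

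First, I would write $\bm{\phi^T}$ as a $2m \times 2m$ matrix organised into $2\times 2$ up/down blocks indexed by pairs $(i,j)$ of assets. By construction, each block is symmetric under the swap of upward/downward coordinates, i.e.\ of the form $\begin{pmatrix}a & b \\ b & a\end{pmatrix}$, with $(a,b)$ depending only on whether $i = j$, or $i \ne j$ lie in the same sector, or $i \ne j$ lie in different sectors. All such blocks are simultaneously diagonalised by the orthogonal transformation $\bm{R} := \tfrac{1}{\sqrt{2}}\begin{pmatrix}1 & 1 \\ 1 & -1\end{pmatrix}$. Taking $\bm{O}$ to be $\bm{R}$ applied asset-by-asset puts $\bm{\phi^T}$ into block-diagonal form with $n_c = 2m$, the ``sum'' $m\times m$ block governing volatility (built from $H^c + H^a$-type terms) and the ``difference'' $m\times m$ block governing prices (built from $H^c - H^a$-type terms). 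This verifies Assumption~\ref{ass:StructureHawkes} with $\bm{B} = 0$, while Assumption~\ref{ass:no_arb} follows from the block symmetry together with $\mu^{i+} = \mu^{i-}$, via the sufficient condition in the remark after Assumption~\ref{ass:no_arb}.

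Next I would identify $\bm{K}$ and $\bm{M}$. The power-law ansatz in $t^{-(\alpha+1)}$ gives $\bm{M} = \alpha \bm{I}$ immediately, and the scaling prefactors are chosen so that $T^{\alpha}(\bm{I} - \normOne{\bm{A^T}})$ converges to the block-diagonal matrix whose sum block equals $\bm{I} - \lambda^{+}\bm{v}\trans{\bm{v}} - \sum_r \eta_r \lambda^{+}_r \bm{v_r}\trans{\bm{v_r}} + \bm{\epsilon}$ and whose difference block equals $2\gamma \bm{I} - \lambda^{-}\bm{v}\trans{\bm{v}} - \sum_r \eta_r \lambda^{-}_r \bm{v_r}\trans{\bm{v_r}} + \bm{\varepsilon}$. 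Here the rank-one/rank-$R$ terms come from summing the cross-asset contributions (of entry size $O(1/m)$) over all assets and sectors, and the error matrices collect the finite-$m$ deviations from these idealised sums, in particular the exclusion of the diagonal $i = j$ term and the discretisation of the sector indicator. Plugging into Theorem~\ref{thm:ConvergenceInt} then yields the stated Volterra equation for $\bm{V}$, with drift operator proportional to $\bm{\mathcal{V}_\epsilon}^{-1}$, and the price formula with $(\bm{I}+\bm{\Delta})\trans{\bm{Q}}$ reducing to $\sqrt{2}\,\bm{\Sigma_\varepsilon}$ after composing the sum/difference change of basis with the price-projection $\trans{\bm{Q}}$ and inverting the difference block. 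The diffusion coefficient in the Volterra equation collapses to $\sqrt{2}\,\textnormal{diag}(\sqrt{\bm{V_s}})$ because the contributions of the two columns of $\bm{R}$ recombine to a diagonal once squared.

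The main obstacle is controlling the error matrices $\bm{\epsilon}$ and $\bm{\varepsilon}$: the corollary requires $\rho(\bm{\epsilon}), \rho(\bm{\varepsilon}) = o(m^{-1})$, not merely $O(m^{-1})$. While each entry of the perturbation is of order $1/m$, the perturbation has rank bounded uniformly in $m$ (at most $R+1$ plus a low-rank correction from the diagonal self-exclusion), so its spectral norm can be analysed on a low-dimensional subspace. A careful accounting — tracking how the discrepancy between a sector-restricted discrete sum of size $m_r$ and its continuous rank-one approximation $\bm{v_r}\trans{\bm{v_r}}$ propagates through the block-diagonalised kernel, combined with the hypothesis $\eta_r > 0$ — yields spectral radius decaying as $o(m^{-1})$. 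The remaining steps, namely iterating Sherman--Morrison to invert the low-rank-perturbed matrices and verifying the $\sqrt{2}$ pre-factors, are essentially algebraic.
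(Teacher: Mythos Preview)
Your overall strategy---block-diagonalise via the asset-by-asset sum/difference transformation, apply Theorem~\ref{thm:ConvergenceInt}, then simplify the low-rank perturbations---is exactly the paper's route. However, you have misidentified the critical block: $n_c = m$, not $2m$. Only the \emph{sum} $m\times m$ block saturates the stability condition (its diagonal entries are $(1-T^{-\alpha})$, tending to $1$), whereas the \emph{difference} block has diagonal $(1-2\gamma)(1-T^{-\alpha})$ with limit $1-2\gamma$, so $\rho(\int\bm{C}) = |1-2\gamma + \ldots| < 1$ stays strictly below $1$. Consequently the difference block is $\bm{C^T}$, not part of $\bm{A^T}$.

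This matters because your claim that $T^{\alpha}(\bm{I}-\normOne{\bm{A^T}})$ converges to a block-diagonal matrix containing the difference block $2\gamma\bm{I} - \lambda^{-}\bm{v}\trans{\bm{v}} - \ldots$ is false: on that block, $\bm{I} - \normOne{\cdot}$ has a finite nonzero limit, so multiplying by $T^{\alpha}$ diverges and Assumption~\ref{ass:StructureHawkes} would fail. The matrix $2\gamma\bm{I} - \lambda^{-}\bm{v}\trans{\bm{v}} - \sum_r \eta_r\lambda^{-}_r\bm{v_r}\trans{\bm{v_r}} + \bm{\varepsilon}$ is in fact $\bm{I} - \normOne{\bm{C}}$ (no $T^{\alpha}$ scaling), and it enters the price via $\bm{I}+\bm{\Delta} = (\bm{I}-\normOne{\bm{C}})^{-1}$, exactly as the paper computes. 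The $\lambda^{+}$ expression is the genuine $\bm{K}$ (an $m\times m$ matrix), which drives the drift operator $\bm{\mathcal{V}_\epsilon} = \bm{K}^{-1}$ in the Volterra equation. Once you reassign the blocks correctly---$\bm{A^T}$ is the sum block, $\bm{C^T}$ is the difference block, $\bm{B^T}=\bm{0}$---all the downstream algebra you sketch (the $\sqrt{2}$ factors, the reduction of $(\bm{I}+\bm{\Delta})\trans{\bm{Q}}$, the low-rank inversion) goes through and matches the paper.
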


Under the previous corollary, writing $\propto$ for equality up to a multiplicative constant, the expected fundamental variance can be written using the cumulative Mittag-Leffler function
\begin{equation*}
     \mathbb{E}[\bm{V_t}] \propto \bm{F^{\alpha, \bm{\mathcal{V}}_{\epsilon}}(t)} \bm{\theta}.
\end{equation*}
Since $\spectralRadius{\bm{\epsilon}} \underset{m \to \infty}{=} o(m^{-1})$, we neglect it in further comments and use the approximation $\bm{\mathcal{V}}_{\epsilon} \approx \bm{\mathcal{V}}_{0}$. Writing $\xi$ for the largest eigenvalue of $\bm{\mathcal{V}}_{0}$ and neglecting other eigenvalues (which is reasonable if $\lambda^+ + \sum_{1 \leq r \leq R} \eta_r \lambda^{+}_r \approx 1$) and $\bm{z}$ for the associated eigenvector, using the definition of the Mittag-Leffler function (see Definition \ref{def:mittag_leffler_matrices} in Appendix \ref{sec:fractional_operators}), we have
\begin{equation*}
     \mathbb{E}[\bm{V_t}] \propto F^{\alpha, \xi}(t) (\trans{\bm{\theta}} \bm{z}) \bm{z}.
\end{equation*}
In the further approximation that $\eta_r \lambda^+_r$ is independent $r$, we have $\bm{z} \propto (1, \cdots, 1)$  and
\begin{align*}
\mathbb{E}[\bm{P_t} \trans{\bm{P_t}}] &\propto \bm{\Sigma_{\varepsilon}} \textnormal{diag}(\mathbb{E}[\bm{V_t}]) \trans{\bm{\Sigma_{\varepsilon}}} \\
&\propto \bm{\Sigma_{\varepsilon}} \textnormal{diag}(\bm{z}) \trans{\bm{\Sigma_{\varepsilon}}}  \\
& \propto \bm{\Sigma_{\varepsilon}} \trans{\bm{\Sigma_{\varepsilon}}} \propto \bm{\Sigma_{\varepsilon}}^2.    
\end{align*}
Therefore the eigenvectors of $\mathbb{E}[\bm{P_t} \trans{\bm{P_t}}]$ are those of $\bm{\Sigma_{\varepsilon}}$. As $\spectralRadius{\bm{\varepsilon}} \underset{m \to \infty}{=} o(m^{-1})$, we neglect it in further comments and use the approximation $\bm{\Sigma_{\varepsilon}} \approx \bm{\Sigma_{0}}$. When $\lambda^{-} +  \sum_{1 \leq r \leq R} \eta_r \lambda^{-}_r \approx 2\gamma$, $\bm{\Sigma_{0}}$ has one large eigenvalue followed by $R-1$ smaller eigenvalues and much smaller eigenvalues. This is consistent with stylised facts of high-dimensional stock correlation matrices and we have thus built a microscopic model to explain the macroscopic structure of correlation matrices.
\\ \\
The conditions $\lambda^{-} +  \sum_{1 \leq r \leq R} \eta_r \lambda^{-}_r \approx 1$ and $\lambda^+ + \sum_{1 \leq r \leq R} \eta_r \lambda^{+}_r \approx 1$  correspond to the parameters being close to the point where all directions are critical: when $\lambda^{-} +  \sum_{1 \leq r \leq R} \eta_r \lambda^{-}_r \approx 2\gamma$ or $\lambda^{-} +  \sum_{1 \leq r \leq R} \eta_r \lambda^{-}_r \approx 1$, the spectral radius of $\norm{\bm{C}}$ is equal to one and we cannot split the kernel into a critical and a non-critical component.
\\ \\
It would be interesting to study other implications of this model. In particular, we believe that encoding a negative price/volatility correlation into the microscopic parameters could explain the so-called index leverage effect (see \cite{Reigneron2011PrincipalEffect} for a definition and empirical analysis of this stylised fact).

%%%% Details: proofs theorem
\section{Proof of Theorem \ref{thm:ConvergenceInt}}
\label{sec:proofs}
We split the proof into four steps. Our approach is inspired by \cite{ElEuch2018TheVolatility.,Jaisson2015LimitProcesses, Jaisson2016RoughProcesses}. First, we show that the sequence $(\bm{X^T}, \bm{Y^T}, \bm{Z^T})$ is $C$-tight. Second, we use tightness and representation theorems to find equations satisfied by any limit point $(\bm{X}, \bm{Y}, \bm{Z})$ of $(\bm{X^T}, \bm{Y^T}, \bm{Z^T})$. Third, properties of the Mittag-Leffler function enable us to prove Equation \eqref{eq:rough_sde_noF}. Fourth and finally, we derive the equation satisfied by any limit point $\bm{P}$ of $\bm{P^T}$.

\subsection*{Preliminary lemmas}

We start with lemmas that will be useful in the proofs. Lemma A.1 from \cite{ElEuch2018TheVolatility.} yields
\begin{equation}
    \label{eq:intensity_wiener_hopf}
    \dfrac{1}{T^{\alpha}}\bm{\lambda^T_{tT}} = \dfrac{\bm{\mu^T_{tT}}}{T^{\alpha}} + \dfrac{1}{T^{\alpha}} \int_0^{tT} \bm{\psi^T(tT-s)} \bm{\mu^T_{s}} ds + \dfrac{1}{T^{\alpha}} \int_0^{tT} \bm{\psi^T(tT-s)} d\bm{M^T_s}.
\end{equation}
Thus to investigate the limit of $\dfrac{1}{T^{\alpha}}\bm{\lambda^T_{ \cdot T}}$ we need to study $\dfrac{1}{T^{\alpha}} \bm{\psi^T(T \cdot)}$, which we will do through its Laplace transform. Given a $L^1(\mathbb{R}_{+})$ function $f$, we write its Laplace transform $\hat{f}(t) := \int_0^{\infty} f(x)e^{-tx} dx$, for $t \geq 0$ (and similarly for matrix-valued functions $\bm{F} = (F_{ij})$ where each $F_{ij} \in L^1(\mathbb{R}_{+})$). Remark that $\widehat{f^{*k}} = \hat{f}^{k}$, where $*k$ is the convolution product iterated $k$ times. The following lemma holds.
\begin{lemma}
\label{lemma:convergence_laplace_transform}
We have the following convergence for any $t \geq 0$:
\begin{equation} 
\label{eq:convergence_laplace_transform}
T^{-\alpha} \bm{\widehat{\psi^T(T \cdot)}(t)} \underset{T \to \infty}{\to}
 \bm{O} \begin{pmatrix}
     \left[ \dfrac{\Gamma(1 - \alpha)}{\alpha} t^{\alpha}\bm{M} + \bm{K} \right]^{-1} & 0 \\
    (\bm{I} - \norm{\bm{C}})^{-1} \norm{\bm{B}} \left[ \dfrac{\Gamma(1 - \alpha)}{\alpha} t^{\alpha}\bm{M} + \bm{K} \right]^{-1} &  0
    \end{pmatrix} \bm{O}^{-1},
\end{equation}
where $\bm{K}$ and $\bm{M}$ are defined in Equation \eqref{eq:convergence_rates_AT} and \eqref{eq:convergence_rates_A}. 
\end{lemma}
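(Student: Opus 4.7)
The plan is to pass everything through Laplace transforms, exploit the block-lower-triangular structure imposed by Assumption \ref{ass:StructureHawkes}, and then combine an Abelian-Tauberian estimate on the critical block with dominated convergence on the non-critical block.

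First, from $\bm{\psi^T}=\sum_{k\ge 1}(\bm{\phi^T})^{*k}$ and the stability $\spectralRadius{\normOne{\bm{\phi^T}}}<1$, the convolution/product identity gives
\[
\widehat{\bm{\psi^T}}(s)=\widehat{\bm{\phi^T}}(s)\bigl(\bm{I}-\widehat{\bm{\phi^T}}(s)\bigr)^{-1}.
\]
Plugging in the block form of $\bm{\phi^T}$ from Assumption \ref{ass:StructureHawkes} and inverting the resulting block-lower-triangular matrix by the standard Schur formula, one obtains
\[
\widehat{\bm{\psi^T}}(s)=\bm{O}\begin{pmatrix}
\widehat{\bm{A^T}}(\bm{I}-\widehat{\bm{A^T}})^{-1} & \bm{0}\\
(\bm{I}-\widehat{\bm{C^T}})^{-1}\widehat{\bm{B^T}}(\bm{I}-\widehat{\bm{A^T}})^{-1} & \widehat{\bm{C^T}}(\bm{I}-\widehat{\bm{C^T}})^{-1}
\end{pmatrix}\bm{O}^{-1},
\]
where all entries on the right are evaluated at $s$. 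The proof then reduces to studying each block at $s=t/T$ as $T\to\infty$.

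The heart of the argument is the critical $(1,1)$ block. I would write
\[
\bm{I}-\widehat{\bm{A^T}}(t/T)=\bigl(\bm{I}-\normOne{\bm{A^T}}\bigr)+\int_0^\infty \bm{A^T}(u)\bigl(1-e^{-ut/T}\bigr)du.
\]
Assumption \eqref{eq:convergence_rates_AT} controls the first summand at speed $T^{-\alpha}\bm{K}$. For the second, integrating by parts gives $\int_0^\infty \bm{A^T}(u)(1-e^{-ut/T})du=(t/T)\int_0^\infty e^{-tv/T}\bm{F^T}(v)\,dv$ with $\bm{F^T}(v)=\int_v^\infty \bm{A^T}(u)du$. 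The heavy-tail hypothesis \eqref{eq:convergence_rates_A} together with the convergence $\bm{A^T}\to\bm{A}$ lets me invoke Karamata's Abelian theorem on $\bm{F^T}$ to show that this second summand is equivalent to $T^{-\alpha}\tfrac{\Gamma(1-\alpha)}{\alpha}t^{\alpha}\bm{M}$. Combining,
\[
T^{\alpha}\bigl(\bm{I}-\widehat{\bm{A^T}}(t/T)\bigr)\underset{T\to\infty}{\to}\bm{K}+\tfrac{\Gamma(1-\alpha)}{\alpha}t^{\alpha}\bm{M},
\]
and since $\bm{K}\bm{M}^{-1}$ has strictly positive eigenvalues the right-hand side is invertible for every $t\ge 0$; hence $T^{-\alpha}(\bm{I}-\widehat{\bm{A^T}}(t/T))^{-1}$ converges to its inverse, which is exactly the $(1,1)$ block of the claimed limit (noting $\widehat{\bm{A^T}}(t/T)\to\bm{I}$, so the extra factor $\widehat{\bm{A^T}}$ disappears in the limit).

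For the non-critical blocks, since $\spectralRadius{\normOne{\bm{C}}}<1$ and $\bm{C^T}\to\bm{C}$, dominated convergence gives $\widehat{\bm{C^T}}(t/T)\to\normOne{\bm{C}}$ and $(\bm{I}-\widehat{\bm{C^T}}(t/T))^{-1}\to(\bm{I}-\normOne{\bm{C}})^{-1}$, both bounded. Likewise $\widehat{\bm{B^T}}(t/T)\to\normOne{\bm{B}}$. Therefore the $(2,2)$ block $\widehat{\bm{C^T}}(\bm{I}-\widehat{\bm{C^T}})^{-1}$ stays bounded and its $T^{-\alpha}$ rescaling vanishes, whereas the $(2,1)$ block splits as
\[
(\bm{I}-\widehat{\bm{C^T}}(t/T))^{-1}\widehat{\bm{B^T}}(t/T)\cdot T^{-\alpha}\bigl(\bm{I}-\widehat{\bm{A^T}}(t/T)\bigr)^{-1},
\]
which converges to the stated product. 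Conjugating by $\bm{O}$ gives the lemma. The main technical obstacle is the uniform-in-$T$ version of Karamata's Abelian theorem needed in Step 2: one must handle simultaneously the convergence $\bm{A^T}\to\bm{A}$ in $L^1$ and the scaling $s=t/T\to 0$, which I would achieve by splitting the integral near $v=0$ and using the uniform tail bound implicit in \eqref{eq:convergence_rates_A}.
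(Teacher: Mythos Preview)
Your proposal is correct and follows essentially the same route as the paper: pass to Laplace transforms, use the block-triangular structure to invert $(\bm{I}-\widehat{\bm{\phi^T}})$, split $\bm{I}-\widehat{\bm{A^T}}(t/T)$ as $(\bm{I}-\normOne{\bm{A^T}})+(\normOne{\bm{A^T}}-\widehat{\bm{A^T}}(t/T))$, and apply \eqref{eq:convergence_rates_AT} and an Abelian/Tauberian argument from \eqref{eq:convergence_rates_A} respectively. Your treatment is in fact somewhat more careful than the paper's on two points: you write the $(2,1)$ block in the already-simplified form $(\bm{I}-\widehat{\bm{C^T}})^{-1}\widehat{\bm{B^T}}(\bm{I}-\widehat{\bm{A^T}})^{-1}$ and you explicitly flag the uniformity-in-$T$ issue in the Abelian step, which the paper handles only by citing \cite{ElEuch2018TheVolatility., Jaisson2016RoughProcesses}.
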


\begin{proof}
Define $\bm{\varphi^T} := \bm{O}^{-1}\hat{\bm{\phi}}^T \bm{O}$. Then
\begin{align*}
\bm{\hat{\psi}^T(t)} & =
\sum_{k \geq 1} \bm{\hat{\phi}^{T,*k}} = \bm{O} (\bm{\bm{I}} - \bm{\hat{\varphi}^T})^{-1} \bm{\hat{\varphi}}^T \bm{O}^{-1}.
\end{align*}

We can use the shape of $\bm{\varphi^T}$ and matrix block inversion to rewrite this expression. Doing so, we find
\begin{align*}
\bm{\hat{\psi}^T(t)} & =
     \bm{O} \begin{pmatrix}
    (\bm{\bm{I}} - \bm{\hat{A}^T(t))}^{-1} \bm{\hat{A}^T(t)} & 0 \\
    (\bm{\bm{I}} - \bm{\hat{C}^T(t)})^{-1} \bm{\hat{B}^T(t)} (\bm{\bm{I}} - \bm{\hat{A}^T(t)})^{-1}\bm{\hat{A}^T(t)} -   (\bm{\bm{I}} - \bm{\hat{C}^T(t)})^{-1}\hat{\bm{B}}^T(t) &  (\bm{\bm{I}} - \bm{\hat{C}^T(t)})^{-1} \bm{\hat{C}^T(t)}
    \end{pmatrix} \bm{O}^{-1}.
\end{align*}
To derive the limiting process, we use Equations \eqref{eq:convergence_rates_AT} and \eqref{eq:convergence_rates_A}. Using integration by parts and a Tauberian theorem as in \cite{ElEuch2018TheVolatility., Jaisson2016RoughProcesses}, we have
\begin{align*}
    \bm{\norm{A^T}} - \bm{\hat{A}^T(t/T)} &\underset{T \to \infty}{=} \dfrac{\Gamma(1 - \alpha)}{\alpha} t^{\alpha}\bm{M} T^{-\alpha} + o(T^{-\alpha}) \\
    \bm{I}- \norm{\bm{A^T}} &\underset{T \to \infty}{=} \bm{K} T^{-\alpha} + o(T^{-\alpha}).
\end{align*}
Therefore 
\begin{align*}
    T(\bm{I} - \bm{\hat{A}^T(t/T)}) &=  T(\norm{\bm{A^T}} - \bm{\hat{A}^T(t/T))} + T(\bm{I}- \norm{\bm{A}^T}) \\
    & \underset{T \to \infty}{=} \left[ \dfrac{\Gamma(1 - \alpha)}{\alpha} t^{\alpha}\bm{M} +  \bm{K} \right] T^{1-\alpha}+ o(T^{1-\alpha}).
\end{align*}
Consequently
\begin{align*}
    T^{\alpha - 1} T(\bm{I} - \bm{\hat{A}^T(t/T)}) &\underset{T \to \infty}{=} \dfrac{\Gamma(1 - \alpha)}{\alpha} t^{\alpha}\bm{M} +  \bm{K} + o(1).
\end{align*}
By Assumption \ref{ass:StructureHawkes} $\bm{M}$ is invertible and $\bm{K}\bm{M}^{-1}$ has strictly positive eigenvalues. Thus $\bm{M}t + \bm{K} = (\bm{K}\bm{M}^{-1}+t\bm{I})\bm{M}$ is invertible for any $t \geq 0$. The Laplace transform of $T^{-\alpha} \bm{\psi^T(T \cdot)}$ being  $T^{1-\alpha} \bm{\widehat{\psi}^T(\cdot/T)}$, we have proved for any $t \geq 0$,
\begin{align*}
\widehat{T^{-\alpha} \bm{\psi^T(T \cdot)}}(t) \underset{T \to \infty}{\to} \bm{O} \begin{pmatrix}
    \left[ \dfrac{\Gamma(1 - \alpha)}{\alpha} t^{\alpha}\bm{M} + \bm{K} \right]^{-1} & 0 \\
    (\bm{I} - \norm{\bm{C}})^{-1} \norm{\bm{B}} \left[ \dfrac{\Gamma(1 - \alpha)}{\alpha} t^{\alpha}\bm{M} + \bm{K} \right]^{-1} &  0
    \end{pmatrix} \bm{O}^{-1}.
\end{align*}

\end{proof}

We show in the technical appendix that the inverse Laplace transform of $\bm{\Lambda} (t^{\alpha} \bm{I} + \bm{\Lambda})^{-1}$, where $\bm{\Lambda} \in \mathcal{\bm{M}}_n(\mathbb{R})$ has positive eigenvalues, is a simple extension of the Mittag-Leffler density function to matrices (see Definition \ref{def:mittag_leffler_matrices} in the appendix) denoted by $\bm{f^{\alpha, \Lambda}}$. Thus we define for any $t \in [0,1]$
\begin{equation}
    \label{eq:definition_f_psi}
    \bm{f(t)} := \bm{O} \begin{pmatrix}
    \bm{K}^{-1} \bm{f^{\alpha,\dfrac{\alpha}{\Gamma(1 - \alpha)}K \bm{M}^{-1}}}  & \bm{0} \\
    (\bm{I} - \norm{\bm{C}})^{-1} \norm{\bm{B}} \bm{K}^{-1} \bm{f^{\alpha,\dfrac{\alpha}{\Gamma(1 - \alpha)}K \bm{M}^{-1}}} &  \bm{0}
    \end{pmatrix}\bm{O}^{-1}.
\end{equation}
The following lemma shows the weak convergence of $\bm{\psi^T}$ towards $\bm{f}$.

\begin{lemma}
\label{lemma:convergence_F_T}
For any bounded measurable function $g$ and $1 \leq i,j \leq n$
\begin{align*}
\int_{[0,1]} g(x) T^{-\alpha} \psi^T_{ij}(Tx) dx & \underset{T \to \infty}{\to} \int_{[0,1]} g(x) f_{ij}(x) dx.
\end{align*}

\end{lemma}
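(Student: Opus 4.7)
The plan is to invert the pointwise Laplace transform convergence from Lemma~\ref{lemma:convergence_laplace_transform} into weak convergence of non-negative measures on $\mathbb{R}_+$, and then to upgrade this convergence to arbitrary bounded measurable test functions on $[0,1]$.

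As a first step, I identify the right-hand side of~\eqref{eq:convergence_laplace_transform} with $\widehat{\bm{f}}(t)$, where $\bm{f}$ is defined in~\eqref{eq:definition_f_psi}. Using the appendix identity $\widehat{\bm{f^{\alpha,\bm{\Lambda}}}}(t) = \bm{\Lambda}(t^\alpha \bm{I} + \bm{\Lambda})^{-1}$ applied with $\bm{\Lambda} = \frac{\alpha}{\Gamma(1-\alpha)}\bm{K}\bm{M}^{-1}$, a short algebraic manipulation yields $\bm{K}^{-1}\widehat{\bm{f^{\alpha,\bm{\Lambda}}}}(t) = \bigl[\tfrac{\Gamma(1-\alpha)}{\alpha} t^\alpha \bm{M} + \bm{K}\bigr]^{-1}$, so Lemma~\ref{lemma:convergence_laplace_transform} reduces to the pointwise convergence $\widehat{T^{-\alpha}\bm{\psi^T(T\cdot)}}(t) \to \widehat{\bm{f}}(t)$ for every $t \geq 0$.

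Next, fix $1 \leq i,j \leq n$ and consider the measures $\mu^T(dx) := T^{-\alpha}\psi^T_{ij}(Tx)\,dx$ on $\mathbb{R}_+$. Since $\bm{\psi^T} = \sum_{k \geq 1} \bm{\phi}^{T,*k}$ and $\bm{\phi^T}$ is entrywise non-negative, each $\mu^T$ is a positive finite measure. Evaluating the Laplace transform convergence at $t=0$ (legitimate because $\bm{K}$ is invertible) gives $\mu^T(\mathbb{R}_+) \to \int_0^\infty f_{ij}(x)\,dx < \infty$, hence uniform tightness of the family $(\mu^T)_T$. The continuity theorem for Laplace transforms of finite positive measures on $\mathbb{R}_+$ (Feller's extension of L\'evy's theorem) then yields weak convergence $\mu^T \Rightarrow f_{ij}(x)\,dx$. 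This already proves the lemma for bounded continuous $g$ on $[0,1]$, since the limit measure is absolutely continuous and in particular assigns no mass to the endpoints $\{0, 1\}$.

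Passing to a general bounded measurable $g$ relies on approximating $g$ in $L^1([0,1], dx)$ by bounded continuous functions $g_\varepsilon$ and using the decomposition
$$\Bigl|\int g\,d\mu^T - \int g\,f_{ij}\,dx\Bigr| \leq \int |g - g_\varepsilon|\,d\mu^T + \Bigl|\int g_\varepsilon\,d\mu^T - \int g_\varepsilon\,f_{ij}\,dx\Bigr| + \int |g_\varepsilon - g|\,f_{ij}\,dx.$$
The middle term tends to $0$ by the previous step, the last is small by construction, and the first must be controlled uniformly in $T$ via uniform integrability of $\{T^{-\alpha}\psi^T_{ij}(T\cdot)\}_T$ on $[0,1]$. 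The main obstacle is precisely this last upgrade: weak convergence of measures does not a priori imply convergence against arbitrary bounded measurable integrands. The uniform $L^1$ bound from the second step together with the continuity of the limit density $f_{ij}$ on $(0,1]$ (since $\bm{f}$ is built from the matrix Mittag-Leffler density, which is regular away from $0$) provides the required uniform integrability, making the approximation argument go through after sending $T \to \infty$ and then $\varepsilon \to 0$.
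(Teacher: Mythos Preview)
Your approach coincides with the paper's: both invert the Laplace-transform convergence of Lemma~\ref{lemma:convergence_laplace_transform} via a continuity theorem to obtain weak convergence of the positive measures $T^{-\alpha}\psi^T_{ij}(Tx)\,dx$. The paper first normalises to probability densities and invokes L\'evy's theorem, whereas you keep the finite measures and cite Feller's version; this is cosmetic. The paper also isolates the degenerate case $\normOne{f_{ij}}=0$, which your argument absorbs implicitly. Up to and including convergence against bounded continuous test functions, the two proofs are equivalent.

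The substantive issue is your final paragraph. The implication ``uniform $L^1$ bound plus continuity of the limit density $\Rightarrow$ uniform integrability'' is false: a family of non-negative functions on $[0,1]$ can be $L^1$-bounded, converge weakly as measures to Lebesgue measure, and still fail to be uniformly integrable (e.g.\ put a spike of height $n$ and total mass $1/2$ spread over $n$ evenly spaced subintervals of $[0,1]$, and constant $\approx 1/2$ on the complement). Without uniform integrability your three-term decomposition cannot control $\int|g-g_\varepsilon|\,d\mu^T$ uniformly in $T$, and weak convergence of measures does not automatically upgrade to convergence against every $g\in L^\infty$. In fairness, the paper's own proof leaves exactly this step unjustified: it passes directly from weak convergence of $\rho^T_{ij}$ to the conclusion ``for any bounded measurable $g$'' without further argument, so you have at least identified the gap rather than skipped over it. For the actual downstream uses of the lemma (pointwise and then uniform convergence of $\bm{F^T}$ via Dini), indicators of intervals suffice, and those are covered by the Portmanteau theorem since the limiting measure is absolutely continuous.
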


\begin{proof}
First note that when $\normOne{f_{ij}} = 0$ (which implies $f_{ij} = 0$), using Equation \eqref{eq:convergence_laplace_transform} with $t=0$ we have
$$
\normOne{T^{1-\alpha} \psi^T_{ij}} \underset{T \to \infty}{\to} \normOne{f_{ij}} = 0,
$$
which implies, since $1 - \alpha \geq 0$,
$$
\normOne{\psi^T_{ij}} \underset{T \to \infty}{\to} 0.
$$

Therefore,  as $\psi^T_{ij} \geq 0$, for any bounded measurable function $g$
\begin{align*}
\Big \lvert \int_{[0,1]} g(x) T^{-\alpha} \psi^T_{ij}(Tx) dx \Big \rvert \leq c \int_{[0,1]} T^{-\alpha} \psi^T_{ij}(Tx) dx \leq c \normOne{T^{1-\alpha} \psi^T_{ij}},
\end{align*}
and the result holds. Assume now that $\normOne{f_{ij}} > 0$. It will be convenient for us to proceed with random variables, so define 
$$\rho^T_{ij} := \dfrac{T^{-\alpha} \psi^T_{ij}(T \cdot)}{\normOne{T^{1-\alpha}\psi^T_{ij}}}.$$

We can view $\rho^T_{ij}$ as the density of a random variable taking values in $[0,1]$, say $S$. Lemma \ref{lemma:convergence_laplace_transform} gives the convergence of the characteristic functions of $S$ towards 
$$\hat{\rho}_{ij} := \dfrac{\hat{f}_{ij}}{\normOne{f_{ij}}}.$$
Since $\rho_{ij}$ is continuous (as $\psi^T_{ij}$ is continuous), Levy's continuity theorem guarantees that $\rho^T_{ij}$ converges weakly towards $\rho_{ij}$. Therefore for any bounded measurable function $g$
\begin{align*}
\int_{[0,1]} g(x) \rho^T_{ij}(x) dx &  \underset{T \to \infty }{\to} \int_{[0,1]} g(x) \rho_{ij}(x) dx \\
\int_{[0,1]} g(x) \dfrac{T^{-\alpha} \psi^T_{ij}(Tx)}{\normOne{T^{1-\alpha} \psi^T_{ij}}} dx & \underset{T \to \infty }{\to} \int_{[0,1]} g(x) \dfrac{f_{ij}(x)}{\normOne{f_{ij}}} dx.
\end{align*}
Equation \eqref{eq:convergence_laplace_transform} implies $\normOne{T^{1-\alpha} \psi^T_{ij}} \underset{T \to \infty}{\to} \normOne{f_{ij}}$, so that together with the above we have
$$
\int_{[0,1]} g(x) T^{-\alpha} \psi^T_{ij}(Tx) dx  \underset{T \to \infty }{\to} \int_{[0,1]} g(x) f_{ij}(x) dx.
$$
\end{proof}

We introduce the cumulative functions
\begin{align*}
    \bm{F^T(t)} &= \int_0^t T^{-\alpha} \bm{\psi^T(Ts)} ds \\
    \bm{F(t)} &= \int_0^t \bm{f(s)} ds.
\end{align*}
We have just shown in particular that $\bm{F^T}$ converges pointwise towards $\bm{F}$ and therefore, by Dini's theorem, converges uniformly towards $\bm{F}$.

\label{sec:proof1}
\subsection{Step 1: $C$-tightness of $(X^T, Y^T, Z^T)$}

Recall the definition of the rescaled processes:
\begin{align*}
    \bm{X^T_t} &:= \dfrac{1}{T^{2\alpha}} \bm{N^T_{tT}} \\
    \bm{\bm{Y}^T_t} &:= \dfrac{1}{T^{2\alpha}} \int_0^{tT} \bm{\lambda_s} ds  \\
    \bm{\bm{Z}^T_t} &:= T^{\alpha} (\bm{X^T_t} - \bm{\bm{Y}^T_t}) = \dfrac{1}{T^{\alpha}} \bm{M^T_{tT}}.
\end{align*}
As in \cite{ElEuch2018TheVolatility.} and \cite{Jaisson2016RoughProcesses} we show that the limiting processes of $\bm{X^T}$ and $\bm{\bm{Y}^T}$ are the same and that the limiting process of $\bm{\bm{Z}^T}$ is the quadratic variation of the limiting process of $\bm{X^T}$. We have the following proposition:

\begin{proposition}[C-tightness of $(\bm{X^T}, \bm{\bm{Y}^T}, \bm{\bm{Z}^T})$]
\label{prop:c_tightness_sequence}

The sequence $(\bm{X^T}, \bm{\bm{Y}^T}, \bm{\bm{Z}^T})$ is C-tight and if $(\bm{X},\bm{Z})$ is a possible limit point of $(\bm{X^T}, \bm{\bm{Z}^T})$, then $\bm{Z}$ is a continuous martingale with $[\bm{Z},\bm{Z}] = \textnormal{diag}(\bm{X})$. 
Furthermore, we have the convergence in probability
\begin{equation*}
    \underset{t \in [0,1]}{\sup} \normTwo{\bm{\bm{Y}^T_t} - \bm{X^T_t}} \overset{\mathbb{P}}{\underset{T \to \infty}{\to}} 0.
\end{equation*}

\end{proposition}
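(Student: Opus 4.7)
My approach is to exploit the identity
\[
\bm{X^T_t} = \bm{Y^T_t} + T^{-\alpha}\bm{Z^T_t},
\]
together with the martingale structure of $\bm{Z^T}=T^{-\alpha}\bm{M^T_{\cdot T}}$. Since distinct coordinates of the multivariate Hawkes process $\bm{N^T}$ never jump simultaneously, the predictable quadratic variation of $\bm{Z^T}$ is diagonal with $\langle Z^{T,i},Z^{T,i}\rangle_t = T^{-2\alpha}\int_0^{tT}\lambda^{T,i}_s\,ds = Y^{T,i}_t$, while the optional quadratic variation is $[\bm{Z^T},\bm{Z^T}]_t=\textnormal{diag}(\bm{X^T_t})$. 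Thus tightness of the triple reduces to tightness of $\bm{Y^T}$, and the uniform convergence $\sup_{t\in[0,1]}\normTwo{\bm{X^T_t}-\bm{Y^T_t}}\to 0$ in probability follows once $\bm{Z^T}$ is known to be tight, thanks to the vanishing prefactor $T^{-\alpha}$.

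The key step is therefore to establish tightness of $\bm{Y^T}$. Integrating the Wiener--Hopf decomposition \eqref{eq:role_delta} on $[0,tT]$, applying Fubini, rescaling time by $T$, and rewriting the cumulative kernel in terms of $\bm{F^T}$, I would obtain a stochastic Volterra representation of the form
\[
\bm{Y^T_t} = \bm{G^T_t} + \int_0^t \bm{F^T(t-s)}\,d\bm{Z^T_s},
\]
where $\bm{G^T}$ is deterministic and converges uniformly on $[0,1]$ to a continuous limit (combine the pointwise convergence $\bm{F^T}\to\bm{F}$ from Lemma \ref{lemma:convergence_F_T} with Dini's theorem and the scaling \eqref{eq:convergence_rates_mu} of $\bm{\mu^T}$). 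Applying the Burkholder--Davis--Gundy inequality together with $d\langle\bm{Z^T},\bm{Z^T}\rangle=\textnormal{diag}(d\bm{Y^T})$, the stochastic integral is bounded in $L^2$ by a constant multiple of $\mathbb{E}\int_0^t\|\bm{F^T(t-s)}\|^2\,d\,\textnormal{tr}(\bm{Y^T_s})$; the uniform $L^\infty$ bound on $\bm{F^T}$ on $[0,1]$ supplied by Lemma \ref{lemma:convergence_F_T} then allows a Gronwall iteration which produces a uniform-in-$T$ moment bound together with an Aldous modulus-of-continuity estimate, yielding tightness.

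Once $\bm{Y^T}$ is tight, tightness of $\bm{Z^T}$ follows from the Aldous--Rebolledo criterion for martingales applied to the predictable quadratic variation $\textnormal{diag}(\bm{Y^T})$. The jumps of $\bm{X^T}$, $\bm{Y^T}$ and $\bm{Z^T}$ have sizes bounded respectively by $T^{-2\alpha}$, $0$ and $T^{-\alpha}$, each vanishing; this upgrades tightness to $C$-tightness and, by standard theorems on convergence of semimartingales with asymptotically negligible jumps (e.g.\ Jacod--Shiryaev, Theorem VIII.3.11), forces every limit point $\bm{Z}$ of $\bm{Z^T}$ to be a continuous martingale. Passing to the limit in the identity $[\bm{Z^T},\bm{Z^T}]=\textnormal{diag}(\bm{X^T})$, and using that $\bm{X^T}$ and $\bm{Y^T}$ share the same limit points since $T^{-\alpha}\bm{Z^T}\to 0$ uniformly in probability, one obtains $[\bm{Z},\bm{Z}]=\textnormal{diag}(\bm{X})$ for any joint limit $(\bm{X},\bm{Z})$.

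The main obstacle is the Gronwall closure of the stochastic Volterra representation: the density underlying $\bm{F^T}$ inherits, via the Mittag--Leffler limit of Lemma \ref{lemma:convergence_laplace_transform}, a $t^{\alpha-1}$-type singularity at the origin, so one must trade the singularity of the derivative against the boundedness of the cumulative kernel. Here Assumption \ref{ass:StructureHawkes} and the precise Laplace-transform asymptotics of Lemma \ref{lemma:convergence_laplace_transform} are essential: together with the uniform convergence $\bm{F^T}\to\bm{F}$ noted just after Lemma \ref{lemma:convergence_F_T}, they guarantee that the integrated kernel is not merely integrable but uniformly bounded on $[0,1]$, which is precisely what makes the Gronwall step closeable and keeps the argument parallel to the mono-dimensional analyses of \cite{Jaisson2015LimitProcesses,ElEuch2018TheVolatility.}.
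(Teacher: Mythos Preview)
Your proposal is correct, but it takes a more elaborate route than the paper. The paper's argument is considerably shorter and does not use the Volterra representation at this stage at all: it bounds the first moment
\[
\mathbb{E}[\bm{X^T_1}]=\mathbb{E}[\bm{Y^T_1}]=T^{-2\alpha}\int_0^T\!\Big(\bm{\mu^T_t}+\int_0^t\bm{\psi^T}(t-s)\bm{\mu^T_s}\,ds\Big)dt\leq c
\]
directly from the convergences of $T^{1-\alpha}\bm{\mu^T_{\cdot T}}$ and $T^{-\alpha}\bm{\psi^T}(T\cdot)$, and then invokes monotonicity of $\bm{X^T},\bm{Y^T}$ (plus vanishing jump size) to get $C$-tightness. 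For $\sup_t\|\bm{X^T_t}-\bm{Y^T_t}\|\to 0$ it applies Doob's inequality to the martingale $M^{T,i}$ with $[M^{T,i}]=N^{T,i}$, obtaining the quantitative $L^2$ bound $\mathbb{E}\big[\sup_t(X^{T,i}_t-Y^{T,i}_t)^2\big]\leq 4T^{-4\alpha}\mathbb{E}[N^{T,i}_T]\leq cT^{-2\alpha}$, which is stronger than the convergence in probability you extract from tightness of $\bm{Z^T}$ and the prefactor $T^{-\alpha}$. The remaining assertions about $\bm{Z}$ are then deferred to \cite{ElEuch2018TheVolatility.}, exactly as you do via Jacod--Shiryaev.

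The substantive difference is therefore one of organisation: the paper separates tightness (first moment plus monotonicity) from identification of the limit (the Volterra representation $\bm{Y^T}=\bm{G^T}+\int\bm{F^T}\,d\bm{Z^T}$, which appears only in its Step~2), whereas you front-load that representation and use it for both purposes via BDG and an Aldous estimate. Your route works, and it has the merit of being more explicit about the modulus of continuity; but what you call a ``Gronwall iteration'' is really just the observation that $\mathbb{E}[\bm{Y^T_t}]=\bm{G^T_t}$ (the stochastic integral is centred), which then feeds the second-moment bound, so no genuine iteration is needed. The paper's first-moment-plus-monotonicity shortcut is the more economical path for this particular step.
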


\begin{proof}
The proof is esentially the same as in \cite{ElEuch2018TheVolatility.}, adapting for our structure of Hawkes processes. We have
$$
\bm{\lambda^T_t} = \bm{\mu^T_t} + \int_0^t \bm{\psi^T(t-s)} \bm{\mu^T_s} ds 
+ \int_0^t \bm{\psi^T(t-s)} d\bm{M^T_s},
$$
and therefore
\begin{align*}
    \mathbb{E}[N^T_T] &= \mathbb{E}[\int_0^T \bm{\lambda}_s^T ds] \\
    & = \int_0^T \bm{\mu^T_t} dt + \int_0^T \int_0^t \bm{\psi^T(t-s)} \bm{\mu^T_s} ds dt \leq c T^{2\alpha} \normInf{\bm{\mu}},
\end{align*}
where we used the convergence of $T^{1-\alpha} \bm{\mu}^T_{T \cdot}$ (see Equation \eqref{eq:convergence_rates_mu}) together with the weak convergence of $T^{-\alpha} \bm{\psi^T}(T \cdot)$ (see Lemma \ref{lemma:convergence_F_T}). It follows then that
$$
\mathbb{E}[\bm{X^T_1}] = \mathbb{E}[\bm{\bm{Y}^T_1}] \leq c,
$$
and since the processes are increasing, $\bm{X^T}$ and $\bm{\bm{Y}^T}$ are tight. As the maximum jump size of $\bm{X^T}$ and $\bm{\bm{Y}^T}$ tends to $0$, we have the $C$-tightness of $(\bm{X^T}, \bm{Y^T})$. Since $\bm{N^T}$ is the quadratic variation of $\bm{M^T}$, $(M^{T,i})^2 - N^{T,i}$ is an $L^2$ martingale starting at $0$ and Doob's inequality yields
\begin{align*}
    \sum_{1 \leq i \leq n} \mathbb{E}[ \underset{t \in [0,1]}{\sup} (X^{T,i}_t - Y^{T,i}_t)^{2}] & \leq 4 \sum_{1 \leq i \leq n} \mathbb{E}[(X^{T,i}_1 - Y^{T,i}_1)^{2}] \\
    & \leq 4 T^{-4\alpha} \sum_{1 \leq i \leq n} \mathbb{E}[(M^{T,i}_{T})^{2}] \\
    & \leq 4 T^{-4\alpha} \sum_{1 \leq i \leq n} \mathbb{E}[N^{T,i}_{T}] \\
    & \leq c T^{-2\alpha}.
\end{align*}
Using the same approach as in \cite{ElEuch2018TheVolatility.} we conclude that $\bm{Z}$ is a continuous martingale and $[\bm{Z},\bm{Z}]$ is the limit of $[\bm{\bm{Z}^T},\bm{\bm{Z}^T}]$. 

\end{proof}

\subsection{Step 2: Rewriting of limit points of $(X^T, Y^T, Z^T)$}

By Proposition \ref{prop:c_tightness_sequence}, for any limit point $(\bm{X}, \bm{Y})$ of $(\bm{X^T}, \bm{Y^T})$, we have $\bm{X} = \bm{Y}$ almost surely. We use $\bm{Y^T}$ to derive an equation for $\bm{Y}=\bm{X}$. As $\bm{Y^T} = \dfrac{1}{T^{2\alpha}}\int_0^{tT} \bm{\lambda^T_s} ds$, we first study $\bm{\lambda^T_{sT}}$. Using Equation \eqref{eq:intensity_wiener_hopf} we get
\begin{align*}
    \int_0^t \bm{\lambda^T_s} ds &= \int_0^t \bm{\mu^T_s} ds +  \int_0^t \int_0^u \bm{\psi^T(s-u)}\bm{\mu^T_u} du ds + \int_0^t \bm{\psi^T(t-s)} \bm{M^T_s} ds \\
    &= \int_0^t \bm{\mu^T_s} ds +  \int_0^t \bm{\psi^T(t-s)} \int_0^s \bm{\mu^T_u} du ds + \int_0^t \bm{\psi^T(t-s)} \bm{M^T_s} ds.
\end{align*}
A change variables of leads to
\begin{align*}
    \int_0^{tT} \bm{\lambda}^T_s ds & = \int_0^{tT} \bm{\mu^T_s} ds +  \int_0^{tT} \bm{\psi^T(tT-s)} \int_0^s \bm{\mu^T_u} du ds + \int_0^{tT} \bm{\psi^T(tT-s)} \bm{M^T_s} ds \\
    & = \int_0^{tT} \bm{\mu^T_s} ds +  T \int_0^{t} \bm{\psi^T(tT-sT)} \int_0^{sT} \bm{\mu^T_u} du ds + \int_0^{t} \bm{\psi^T(tT-sT)} \bm{M^T_{sT}} T ds \\
    & = T \int_0^t \bm{\mu^T_{sT}} ds +  T \int_0^{t} \bm{\psi^T(T(t-s))} \int_0^{sT} \bm{\mu^T_u} du ds + T\int_0^{t} \bm{\psi^T(T(t-s))} \bm{M^T_{sT}} ds.
\end{align*}
Therefore
\begin{align}
    T^{2\alpha} \bm{\bm{Y}^T_t} & = T \int_0^t \bm{\mu^T_{sT}} ds +  T \int_0^{t} \bm{\psi^T(T(t-s))}\int_0^{sT} \bm{\mu^T_u} du ds + T\int_0^{t} \bm{\psi^T(T(t-s))} \bm{M^T_{sT}} ds \label{eq:expression_Y_T}\\
    & =: T^{2 \alpha}(\bm{Y^{T,1}_t} + \bm{Y^{T,2}_t} + \bm{Y^{T,3}_t}),
\end{align}
with obvious notations. Thus, to obtain our limit we use the convergence properties of $\bm{F^T}$ which we derived previously. We have the following proposition.
\begin{proposition}
Consider $(\bm{X}, \bm{Z})$ a limit point of $(\bm{X^T}, \bm{\bm{Z}^T}$). Then,
$$
\bm{X_t} = \int_0^{t} \bm{F(t-s)} \bm{\mu_s} ds + \int_0^{t} \bm{F(t-s)}d\bm{Z_s}.
$$
\end{proposition}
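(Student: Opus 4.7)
The plan is to analyze each of the three pieces $\bm{Y^{T,1}}, \bm{Y^{T,2}}, \bm{Y^{T,3}}$ of the decomposition \eqref{eq:expression_Y_T} separately. Since Proposition \ref{prop:c_tightness_sequence} already gives $C$-tightness of $(\bm{X^T}, \bm{Y^T}, \bm{Z^T})$ together with $\bm{X}=\bm{Y}$ almost surely, it suffices to identify the limit of $\bm{Y^T}$. I would fix a subsequence along which $(\bm{X^T}, \bm{Y^T}, \bm{Z^T})$ converges in law to $(\bm{X}, \bm{Y}, \bm{Z})$ in the Skorokhod topology and, by the Skorokhod representation theorem (valid since the limit is continuous), assume this convergence is almost sure and uniform on $[0,1]$.

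The two deterministic pieces are handled using the convergence of the baseline and of $\bm{F^T}$. For $\bm{Y^{T,1}_t}$, I would write $\bm{Y^{T,1}_t} = T^{-\alpha}\int_0^t T^{1-\alpha}\bm{\mu^T_{sT}}\,ds$ and combine \eqref{eq:convergence_rates_mu} with dominated convergence to conclude the term vanishes. For $\bm{Y^{T,2}_t}$, I would apply Fubini and substitute $u = vT$ in the inner integral; recognizing $\int_0^{t-v}T^{-\alpha}\bm{\psi^T(Tr)}\,dr$ in terms of $\bm{F^T(t-v)}$, this rewrites the term (up to the scalings inherited from the change of variable) as $\int_0^t (T^{1-\alpha}\bm{\mu^T_{vT}})\,\bm{F^T(t-v)}\,dv$. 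Combining the uniform convergence $\bm{F^T} \to \bm{F}$ from Lemma \ref{lemma:convergence_F_T} (upgraded to uniform convergence via Dini's theorem, since the entries of $\bm{F^T}$ are monotone and $\bm{F}$ is continuous) with $T^{1-\alpha}\bm{\mu^T_{vT}} \to \bm{\mu_v}$ and dominated convergence then gives $\bm{Y^{T,2}_t} \to \int_0^t\bm{F(t-v)}\bm{\mu_v}\,dv$.

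The main obstacle is the stochastic piece $\bm{Y^{T,3}_t}$. Using $\bm{M^T_{sT}} = T^\alpha\bm{Z^T_s}$, it becomes $T^{1-\alpha}\int_0^t\bm{\psi^T(T(t-s))}\bm{Z^T_s}\,ds$. I would apply deterministic pathwise integration by parts in $s$ — valid because $\bm{Z^T}$ is c\`adl\`ag, $s \mapsto \bm{F^T(t-s)}$ has finite variation, and the boundary contributions vanish since $\bm{F^T(0)}=\bm{0}$ and $\bm{Z^T_0}=\bm{0}$ — to rewrite this as $\int_0^t\bm{F^T(t-s)}\,d\bm{Z^T_s}$. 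To pass to the limit in the resulting stochastic integral I would invoke a Kurtz--Protter-type stability theorem, leveraging (i) the uniform convergence of the deterministic integrands $\bm{F^T} \to \bm{F}$, (ii) the joint convergence of the semimartingales $\bm{Z^T}$ together with their quadratic variations $[\bm{Z^T},\bm{Z^T}]=\textnormal{diag}(\bm{X^T})$ supplied by Proposition \ref{prop:c_tightness_sequence}, and (iii) the continuity of the limit $\bm{Z}$. This yields $\bm{Y^{T,3}_t} \to \int_0^t\bm{F(t-s)}\,d\bm{Z_s}$. Summing the three limits and using $\bm{X} = \bm{Y}$ then identifies the claimed representation, with the delicate point being the joint convergence of integrands and integrators in the stochastic-integral step, where the only mild regularity of $\bm{F}$ near zero (H\"older of order $\alpha$, inherited from the Mittag-Leffler density) prevents a naive Lipschitz bound on oscillations.
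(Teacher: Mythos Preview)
Your decomposition into $\bm{Y^{T,1}},\bm{Y^{T,2}},\bm{Y^{T,3}}$ and the treatment of the two deterministic pieces coincide with the paper's proof (the paper uses an integration by parts where you use Fubini for $\bm{Y^{T,2}}$, but the resulting expression $\int_0^t\bm{F^T(t-s)}\,T^{1-\alpha}\bm{\mu^T_{sT}}\,ds$ is the same).

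The genuine difference is the stochastic piece. After writing $\bm{Y^{T,3}_t}=\int_0^t\bm{F^T(t-s)}\,d\bm{Z^T_s}$, the paper does \emph{not} appeal to a Kurtz--Protter stability result. Instead it splits the integral as
\[
\int_0^t\bm{F(t-s)}\,d\bm{Z_s}
\;+\;\int_0^t\bm{F(t-s)}\,(d\bm{Z^T_s}-d\bm{Z_s})
\;+\;\int_0^t(\bm{F^T(t-s)}-\bm{F(t-s)})\,d\bm{Z^T_s},
\]
turns the second term into the pathwise Lebesgue integral $\int_0^t\bm{f(s)}(\bm{Z^T_{t-s}}-\bm{Z_{t-s}})\,ds$ via stochastic Fubini (using Skorokhod representation for the a.s.\ convergence $\bm{Z^T}\to\bm{Z}$), and controls the third term in $L^2$ through the It\^o isometry together with the uniform bound on $\mathbb{E}[\lambda^{T,j}_{sT}]$ from Lemma~\ref{lemma:convergence_laplace_transform}. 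This is more hands-on but entirely self-contained.

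Your route via Kurtz--Protter is also valid: the deterministic integrands $s\mapsto\bm{F^T(t-s)}$ converge uniformly, and the martingales $\bm{Z^T}$ satisfy the P-UT condition since $[\bm{Z^T},\bm{Z^T}]_1=\textnormal{diag}(\bm{X^T_1})$ is bounded in $L^1$. Joint convergence of $(H^T,\bm{Z^T})$ is automatic because the integrand sequence is deterministic. Your closing caveat about the H\"older regularity of $\bm{F}$ near zero is unnecessary here: Kurtz--Protter only requires c\`adl\`ag integrands and uniform convergence, not any Lipschitz control, so this is not actually the delicate point. The trade-off is that you import a fairly heavy theorem where the paper gets by with Fubini and an $L^2$ estimate.
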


\begin{proof}

Let $(\bm{X}, \bm{Y}, \bm{Z})$ be a limit point of $(\bm{X^T}, \bm{Y^T}, \bm{\bm{Z}^T}$). First, since $T^{1-\alpha} \bm{\mu^T_{tT}} \underset{T \to \infty}{\to} \bm{\mu_t}$ (see Equation \eqref{eq:convergence_rates_mu}), $\bm{Y^{T,1}_t}$ converges to $0$ as $T$ tends to infinity. Moving on to $\bm{Y^{T,2}}$, by integration by parts we have
\begin{align*}
    \bm{Y^{T,2}_t} &= \int_0^{t} T^{1-\alpha} \bm{\psi^T(T(t-s))} T^{-\alpha} \int_0^{sT} \bm{\mu^T_u} du ds \\
    & = \left[ \bm{F^T(t-s)} T^{-\alpha} \int_0^{sT} \bm{\mu^T}_{u} du \right]_{0}^{t} + \int_0^{t} \bm{F^T(t-s)} T^{1-\alpha} \bm{\mu^T}_{sT} ds \\
    & = \int_0^{t} \bm{F^T(t-s)}T^{1-\alpha} \bm{\mu^T}_{sT} ds.
\end{align*}
Using Equation \eqref{eq:convergence_rates_mu} again together with the uniform convergence of $\bm{F^T}$ (see Lemma \ref{lemma:convergence_F_T}) we have the convergence
$$
\bm{Y^{T,2}_t} \underset{T \to \infty}{\to} \int_0^{t} \bm{F(t-s)} \bm{\mu_s} ds.
$$
Finally, $\bm{Y^{T,3}_t}$ can be written as
\begin{align*}
    \bm{Y^{T,3}_t} &= T^{1-2\alpha} \int_0^{t} \bm{\psi^T(T(t-s))} \bm{M}^T_{sT} ds = \int_0^{t} \bm{F^T(t-s)} d\bm{Z^T_s} \\
    & = \int_0^{t} \bm{F(t-s)} d\bm{Z_s} + \int_0^{t} \bm{F(t-s)}(d\bm{\bm{Z}^T_s} - d\bm{Z_s}) + \int_0^{t} (\bm{F^T(t-s)} - \bm{F(t-s)}) d\bm{Z^T_s}.
\end{align*}
The Skorokhod representation theorem applied to $(\bm{\bm{Z}^T}, \bm{Z})$ yields the existence of copies in law $(\Tilde{\bm{Z}}^T, \Tilde{\bm{Z}})$, $\Tilde{\bm{Z}}^T$ converging almost surely to $\Tilde{\bm{Z}}$. We proceed with $(\Tilde{\bm{Z}}^T,\Tilde{\bm{Z}})$ and keep previous notations. The stochastic Fubini theorem \cite{Veraar2012TheRevisited} gives, almost surely
$$
\int_0^{t} \bm{F(t-s)}(d\bm{\bm{Z}^T_s} - d\bm{Z_s}) = \int_0^{t} \bm{f(s)} (\bm{Z^T_{t-s}} - \bm{Z_{t-s}}) ds.
$$
From the dominated convergence theorem we obtain the almost sure convergence
$$
\int_0^{t} \bm{f(s)} (\bm{Z^T_{t-s}} - \bm{Z_{t-s}}) ds \underset{T \to \infty}{\to} 0.
$$
Furthermore, since $[\bm{Z^T},\bm{Z^T}] = \textnormal{diag}(\bm{X^T})$ we have
\begin{align*}
\sum_{1 \leq i \leq n} \mathbb{E}\left[ \left( \int_0^{t} (\bm{F^T(t-s)} - \bm{F(t-s)}) d\bm{\bm{Z}^T_s} \right)_{i}^2 \right] & \leq \sum_{1 \leq i,j \leq n} \int_0^{t} (F_{ij}^T(t-s) - F_{ij}(t-s))^2 T^{1-\alpha} \mathbb{E}[\lambda^{T,j}_{sT}] ds.
\end{align*}
Using Equation \eqref{eq:intensity_wiener_hopf} together with Lemma \ref{lemma:convergence_laplace_transform} we can bound $\mathbb{E}[\lambda^{T,j}_{sT}]$ independently of $T$ and
\begin{align*}
\sum_{1 \leq i \leq n} \mathbb{E}\left[ \left( \int_0^{t} (\bm{F^T(t-s)} - \bm{F(t-s)}) d\bm{\bm{Z}^T_s} \right)_{i}^2 \right] \leq c \sum_{1 \leq i,j \leq n} \int_0^{t} (F_{ij}^T(t-s) - F_{ij}(t-s))^2 ds.
\end{align*}
The right hand side converges to $0$ by the dominated convergence theorem together with the uniform convergence of $\bm{F^T}$ towards $\bm{F}$ (see Lemma \ref{lemma:convergence_F_T}). From Proposition \ref{prop:c_tightness_sequence} we know that $\bm{Y}=\bm{X}$ almost surely. Putting everything together, almost surely,
$$
\bm{X_t} = \int_0^{t} \bm{F(t-s)} \bm{\mu_s} ds + \int_0^{t} \bm{F(t-s)}d\bm{Z_s}.
$$
This is valid for any limit point $(\bm{X},\bm{Z})$ of $(\bm{X^T},\bm{\bm{Z}^T})$, which concludes the proof.
\end{proof}

The previous proposition gives suitable martingale properties of limit points of $\bm{Z^T}$ to apply the martingale representation theorem, which is the topic of the following proposition.

\begin{proposition}
\label{prop:first_equation_V}
Let $(\bm{X},\bm{Z})$ be a limit point of $(\bm{X^T}, \bm{\bm{Z}^T})$. There exists, up to an extension of the original probability space, an $n$-dimensional Brownian motion $\bm{B}$ and a non-negative process $\bm{V}$ such that
\begin{align*}
    \bm{X_t} &= \int_0^{t} \bm{V_s} ds \\
    \bm{Z_t} & = \int_0^{t} \textnormal{diag}(\sqrt{\bm{V_s}}) d\bm{B_s} \\
    \bm{V_t} &=  \int_0^t \bm{f(t-s)} \bm{\mu_s}ds + \int_0^{t} \bm{f(t-s)} \textnormal{diag}(\sqrt{\bm{V_s}}) d\bm{B_s}.
\end{align*}
\end{proposition}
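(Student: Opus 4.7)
The plan is to derive the proposition in three movements: (i) extract a non-negative volatility process $\bm{V}$ from the absolute continuity of $\bm{X}$ and of the quadratic variation of $\bm{Z}$; (ii) invoke the martingale representation theorem to produce the Brownian motion $\bm{B}$; (iii) rewrite the Volterra equation for $\bm{X}$ established in the previous proposition via (stochastic) Fubini and identify the Lebesgue density in time to recover the equation for $\bm{V}$.

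First, since each $\bm{X^T}$ is non-decreasing and continuous, so is the limit $X^i$, which therefore admits a non-negative Lebesgue density $V^i$, giving $\bm{X_t} = \int_0^t \bm{V_s} ds$. By Proposition \ref{prop:c_tightness_sequence}, $\bm{Z}$ is a continuous martingale with $[\bm{Z},\bm{Z}]_t = \textnormal{diag}(\bm{X_t})$, so the coordinates of $\bm{Z}$ are mutually orthogonal continuous martingales and $[Z^i, Z^i]_t = \int_0^t V^i_s ds$. Applying the standard multidimensional martingale representation theorem (extending the filtered probability space to absorb the indeterminacy on $\{V^i = 0\}$, where $\sqrt{V^i}$ vanishes), there exist independent Brownian motions $B^1, \ldots, B^n$ such that $Z^i_t = \int_0^t \sqrt{V^i_s} dB^i_s$, i.e.\ $\bm{Z_t} = \int_0^t \textnormal{diag}(\sqrt{\bm{V_s}}) d\bm{B_s}$.

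To produce the equation for $\bm{V}$, I would use $\bm{F(t-s)} = \int_s^t \bm{f(v-s)} dv$, apply Fubini's theorem to the deterministic integral and the stochastic Fubini theorem of \cite{Veraar2012TheRevisited} to the stochastic integral, and rewrite
\begin{equation*}
\bm{X_t} = \int_0^t \Big( \int_0^v \bm{f(v-s)} \bm{\mu_s} ds + \int_0^v \bm{f(v-s)} d\bm{Z_s} \Big) dv.
\end{equation*}
Uniqueness of the Lebesgue density of $\bm{X}$ then identifies $\bm{V_v}$ with the bracket for almost every $v$, and substituting $d\bm{Z_s} = \textnormal{diag}(\sqrt{\bm{V_s}}) d\bm{B_s}$ yields the claimed stochastic Volterra equation.

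I expect the main obstacle to lie in the martingale representation step and the accompanying bookkeeping: $V^i$ can vanish on a non-negligible set, so the driving Brownian motion must be constructed on an enlarged space rather than read off directly from $Z^i$. A secondary technicality is verifying the hypotheses of the stochastic Fubini theorem, which reduces to noting that $\bm{f}$ has an integrable singularity of order $t^{\alpha-1}$ at zero with $\alpha > 1/2$, so it is square-integrable near the origin, and combining this with the moment bound $\mathbb{E}[\bm{X_1}] < \infty$ established during the tightness argument to control the $L^2$ norm of the stochastic integrand.
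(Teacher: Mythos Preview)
Your proposal is correct and follows essentially the same route as the paper: the paper likewise invokes the martingale representation theorem (on an enlarged probability space) together with $[\bm{Z},\bm{Z}]_t=\textnormal{diag}(\int_0^t \bm{V_s}\,ds)$ to write $\bm{Z_t}=\int_0^t \textnormal{diag}(\sqrt{\bm{V_s}})\,d\bm{B_s}$, and obtains the equation for $\bm{V}$ from the equation for $\bm{X}$ by the same Fubini/stochastic-Fubini rewriting (the paper simply cites Theorem~3.2 of \cite{Jaisson2016RoughProcesses} for that step, whereas you spell it out). Two small imprecisions worth tightening: the processes $\bm{X^T}$ themselves are not continuous (they are rescaled counting processes); continuity of the limit $\bm{X}$ comes from the $C$-tightness of Proposition~\ref{prop:c_tightness_sequence}. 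More importantly, ``non-decreasing and continuous'' does not by itself yield a Lebesgue density; the absolute continuity of $\bm{X}$ is actually delivered by your Fubini computation in step~(iii), so logically you should run that computation first and \emph{define} $\bm{V}$ as the resulting integrand rather than postulate its existence in step~(i).
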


\begin{proof}

This proof relies on the martingale representation theorem applied to $\bm{Z}$. Consider $(\bm{X},\bm{Z})$ a limit point of $(\bm{X^T}, \bm{\bm{Z}^T})$. Following the proof of Theorem 3.2 in \cite{Jaisson2016RoughProcesses}, $\bm{X}$ can be written as the integral of a process $\bm{V}$
$$
\bm{X_t} = \int_0^{t} \bm{V_s} ds,
$$
with $\bm{V}$ satisfying the equation
$$
\bm{V_t} =  \int_0^t \bm{f(t-s)} \bm{\mu}_s ds + \int_0^{t} \bm{f(t-s)} d\bm{Z_s}.
$$
Therefore, as $[\bm{Z}, \bm{Z}]_t = \textnormal{diag}(\bm{X_t}) = \textnormal{diag}(\int_0^{t} \bm{V_s} ds) $ and $\bm{Z}$ is a continuous martingale, by the martingale representation theorem (see for example Theorem 3.9 from \cite{Revuz2013ContinuousMotion}), there exists (up to an enlargement of the probability space) a multivariate Brownian motion $\bm{B}$ and a predictable square integrable process $\bm{H}$ such that
$$
\bm{Z_t} = \int_0^t \bm{H_s} d\bm{B_s}.
$$
Furthermore, note that as $\bm{V}$ is a non-negative process as $\bm{X}$ is a non-decreasing process and we have
$$
\bm{Z_t} = \int_0^t \textnormal{diag}(\sqrt{\bm{V_s}}) \textnormal{diag}(\sqrt{\bm{V_s}})^{-1} \bm{H_s} d\bm{B_s}.
$$
A simple computation shows that, since $[\bm{Z},\bm{Z}]_t = \int_0^t \bm{H_s} \trans{\bm{H_s}} ds = \bm{X_t} = \int_0^{t} \bm{V_s}ds$, the process $\bm{\Tilde{B}_t} := \int_0^t \textnormal{diag}(\sqrt{\bm{V_s}})^{-1} \bm{H_s} d\bm{B_s}$ is a Brownian motion. Finally,
$$
\bm{V_t} =  \int_0^t \bm{f(t-s)} \bm{\mu_s} ds + \int_0^{t} \bm{f(t-s)} \textnormal{diag}(\sqrt{\bm{V_s}}) d\bm{\Tilde{B_s}}.
$$
\end{proof}

A straightforward application of Lemma 4.4 and Lemma 4.5 in \cite{Jaisson2016RoughProcesses} yields the following lemma.

\begin{lemma}
\label{lemma:holder_regularity}
Consider a (weak) non-negative solution $\bm{V}$ of the stochastic Volterra equation 
$$
\bm{V_t} =  \int_0^t \bm{f(t-s)} \bm{\mu_s} ds + \int_0^{t} \bm{f(t-s)} \textnormal{diag}(\sqrt{\bm{V_s}}) d\bm{B_s},
$$
where $\bm{B}$ is a Brownian motion. Then every component of $\bm{V}$ has pathwise H\"older regularity $\alpha - 1/2 - \epsilon$ for any $\epsilon > 0$.

\end{lemma}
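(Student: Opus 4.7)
The plan is to verify that the arguments of Lemmas 4.4 and 4.5 of \cite{Jaisson2016RoughProcesses} apply componentwise to our multivariate $\bm{V}$ and conclude via Kolmogorov's continuity criterion. The essential input is the short-time asymptotics of $\bm{f}$: from \eqref{eq:definition_f_psi} each entry is a linear combination of Mittag-Leffler densities, hence there exists $c>0$ with $\normTwo{\bm{f(u)}} \leq c\, u^{\alpha - 1}$ in a neighborhood of $0$, while $\bm{f}$ is smooth away from $0$. Since $\alpha > 1/2$, this kernel lies in $L^2_{\mathrm{loc}}(\mathbb{R}_+)$, which is exactly the integrability condition the cited lemmas require.

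First, I would establish a uniform moment bound $\sup_{t \in [0,1]} \mathbb{E}[\normTwo{\bm{V_t}}^p] < \infty$ for every $p \geq 2$, which is the multivariate analogue of Lemma 4.4. Applying Jensen's inequality to the drift term and the Burkholder-Davis-Gundy inequality to the stochastic term in the Volterra equation, then using Minkowski's integral inequality together with the bound $x^{p/2} \leq 1 + x^p$ for $x \geq 0$, leads to an inequality of the form
\begin{align*}
\mathbb{E}[\normTwo{\bm{V_t}}^p] \leq C_p + C_p \int_0^t (t-s)^{2\alpha - 2}\, \mathbb{E}[\normTwo{\bm{V_s}}^p]\, ds,
\end{align*}
in which the singular Gronwall lemma (applicable because $2\alpha - 2 > -1$) closes the estimate uniformly on $[0,1]$.

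Second, for $0 \leq s \leq t \leq 1$ I would split
\begin{align*}
\bm{V_t} - \bm{V_s} &= \int_0^s [\bm{f(t-u)} - \bm{f(s-u)}]\bm{\mu_u}\, du + \int_s^t \bm{f(t-u)} \bm{\mu_u}\, du \\
&\quad + \int_0^s [\bm{f(t-u)} - \bm{f(s-u)}]\textnormal{diag}(\sqrt{\bm{V_u}})\, d\bm{B_u} + \int_s^t \bm{f(t-u)} \textnormal{diag}(\sqrt{\bm{V_u}})\, d\bm{B_u}.
\end{align*}
The deterministic pieces are bounded by $c\, |t-s|^{\alpha}$ through direct integration of the kernel asymptotics. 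For the stochastic pieces, BDG reduces the task to controlling $L^2$ norms of $\bm{f(t-\cdot)}$ on $[s,t]$ and of $\bm{f(t-\cdot)} - \bm{f(s-\cdot)}$ on $[0,s]$; the bound $\normTwo{\bm{f(u)}} \leq c u^{\alpha-1}$ near $0$, combined with the uniform moment bound of the first step, yields via elementary computations (identical to those in Lemma 4.5) that $\mathbb{E}[\normTwo{\bm{V_t} - \bm{V_s}}^p] \leq C_p\, |t-s|^{p(\alpha - 1/2)}$ for every $p \geq 2$.

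Kolmogorov's continuity criterion with $p$ arbitrarily large then produces a modification of $\bm{V}$ with pathwise Hölder regularity $\alpha - 1/2 - 1/p$, hence of order $\alpha - 1/2 - \epsilon$ for any $\epsilon > 0$. The only genuinely delicate step is the singular Gronwall argument behind the moment bound, which requires the condition $\alpha > 1/2$ guaranteed by Assumption \ref{ass:StructureHawkes}; everything else is routine vector-valued bookkeeping of the scalar estimates in \cite{Jaisson2016RoughProcesses}.
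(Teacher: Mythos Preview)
Your proposal is correct and follows precisely the route the paper takes: the paper's own proof is the single sentence ``A straightforward application of Lemma 4.4 and Lemma 4.5 in \cite{Jaisson2016RoughProcesses} yields the following lemma,'' and what you have written is exactly a componentwise unpacking of those two lemmas (uniform $p$-th moment bound via singular Gronwall, increment estimate via BDG and the $u^{\alpha-1}$ blow-up of $\bm{f}$ at $0$, then Kolmogorov). There is nothing to add.
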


\subsection{Step 3: proof of Equation \eqref{eq:rough_sde_noF}}

Properties of the Mittag-Leffler function (as in \cite{ElEuch2018TheVolatility.}) enable us to rewrite the previous stochastic differential equation using power-law kernels, which is the subject of the next proposition. Let $\bm{\Theta^1} := (\bm{O_{11}} + \bm{O_{12}}(\bm{I} - \norm{\bm{C}})^{-1} \norm{\bm{B}}) \bm{K}^{-1}$, $\bm{\Theta^2} :=(\bm{O_{21}} + \bm{O_{22}} (\bm{I} - \norm{\bm{C}})^{-1} \norm{\bm{B}})\bm{K}^{-1}$ and $\bm{\Lambda} := \dfrac{\alpha}{\Gamma(1 - \alpha)} \bm{K} \bm{M}^{-1}$.

\begin{proposition}
Given an $m$-dimensional Brownian motion $\bm{B}$, a non-negative process $\bm{V}$ is solution of the following stochastic differential equation
$$
\bm{V_t} =  \int_0^t \bm{f(t-s)} \bm{\mu_s} ds + \int_0^{t} \bm{f(t-s)} \textnormal{diag}(\sqrt{\bm{V_s}}) d\bm{B_s},
$$
if and only if there exists a process $\Tilde{\bm{V}}$ of H\"older regularity $\alpha - 1/2 - \epsilon$ for any $\epsilon > 0$ such that $\bm{\Theta^1} \Tilde{\bm{V_t}} = (V^1, \cdots, V^{n_c})$ and $\bm{\Theta^2} \Tilde{\bm{V_t}} = (V^{n_c+1}, \cdots, V^{2m})$ are non-negative processes and $\Tilde{\bm{V}}$ is solution of the following stochastic Volterra equation
\begin{align*}
\Tilde{\bm{V_t}} &= \dfrac{1}{\Gamma(\alpha)}\bm{\Lambda} \int_0^t (t-s)^{\alpha-1}(\bm{O_{11}^{(-1)}}\bm{\mu^1} + \bm{O_{12}^{(-1)}}\bm{\mu^2} - \Tilde{\bm{V_s}}) ds  \\ &+\dfrac{1}{\Gamma(\alpha)}\bm{\Lambda} \int_0^t (t-s)^{\alpha-1} \bm{O_{11}^{(-1)}}\textnormal{diag}(\sqrt{\bm{\Theta^1} \Tilde{\bm{V_s}}}) d\bm{W^1_s} +\dfrac{1}{\Gamma(\alpha)}\bm{\Lambda} \int_0^t (t-s)^{\alpha-1} \bm{O_{12}^{(-1)}}\textnormal{diag}(\sqrt{\bm{\Theta^2} \Tilde{\bm{V_s}}}) d\bm{W^2_s},
\end{align*}
where $\bm{W^1} := (B^1, \cdots, B^{n_c})$ and $\bm{W^2} := (B^{n_c+1}, \cdots, B^{2m})$.
\end{proposition}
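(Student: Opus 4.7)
The plan is to transform the Mittag-Leffler-kernel equation for $\bm{V}$ into the claimed power-law Volterra equation for the $n_c$-dimensional auxiliary process $\bm{\Tilde{V}}$, exploiting (i) the block structure of $\bm{f}$ given by \eqref{eq:definition_f_psi} and (ii) a matrix resolvent identity relating $\bm{f^{\alpha,\bm{\Lambda}}}$ to the fractional kernel $t^{\alpha-1}/\Gamma(\alpha)$.

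\emph{Block reduction and definition of $\bm{\Tilde{V}}$.} First I set $\bm{U_t} := \bm{O}^{-1}\bm{V_t}$ and write $\bm{U} = (\bm{U^1}, \bm{U^2})$ in the $n_c/(2m-n_c)$ block decomposition. Left-multiplying the $\bm{V}$-equation by $\bm{O}^{-1}$ and using $\bm{O}^{-1}\bm{f}(t) = \bm{\mathcal{F}}(t)\bm{O}^{-1}$ with
\[
\bm{\mathcal{F}}(t) = \begin{pmatrix} \bm{K}^{-1}\bm{f^{\alpha,\bm{\Lambda}}}(t) & \bm{0} \\ (\bm{I}-\integralInf{\bm{C}})^{-1}\integralInf{\bm{B}}\,\bm{K}^{-1}\bm{f^{\alpha,\bm{\Lambda}}}(t) & \bm{0} \end{pmatrix},
\]
the zero right columns annihilate the last $2m-n_c$ rows of $\bm{O}^{-1}$ inside the integrand, so the second block collapses to the purely algebraic identity $\bm{U^2} = (\bm{I}-\integralInf{\bm{C}})^{-1}\integralInf{\bm{B}}\,\bm{U^1}$. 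Setting $\bm{\Tilde{V}_t} := \bm{K}\bm{U^1_t}$ and multiplying back by $\bm{O}$ recovers $(V^1,\dots,V^{n_c}) = \bm{\Theta^1}\bm{\Tilde{V}}$ and $(V^{n_c+1},\dots,V^{2m}) = \bm{\Theta^2}\bm{\Tilde{V}}$, while the first-block equation becomes
\begin{align*}
\bm{\Tilde{V}_t} &= \int_0^t \bm{f^{\alpha,\bm{\Lambda}}}(t-s)\, d\bm{\Psi_s}, \\
d\bm{\Psi_s} &= \bm{\theta_0}_s\,ds + \bm{O^{(-1)}_{11}}\textnormal{diag}(\sqrt{\bm{\Theta^1}\bm{\Tilde{V}_s}})\,d\bm{W^1_s} + \bm{O^{(-1)}_{12}}\textnormal{diag}(\sqrt{\bm{\Theta^2}\bm{\Tilde{V}_s}})\,d\bm{W^2_s}.
\end{align*}

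\emph{Resolvent identity and Fubini.} The bridge between the Mittag-Leffler and fractional kernels is the matrix identity
\[
\bm{f^{\alpha,\bm{\Lambda}}}(t) = \bm{\Lambda}\,\dfrac{t^{\alpha-1}}{\Gamma(\alpha)} - \bm{\Lambda}\int_0^t \dfrac{(t-s)^{\alpha-1}}{\Gamma(\alpha)}\bm{f^{\alpha,\bm{\Lambda}}}(s)\,ds,
\]
easily verified via Laplace transforms, since the right-hand side transforms to $\bm{\Lambda}s^{-\alpha}(\bm{I}-\bm{\widehat{f^{\alpha,\bm{\Lambda}}}}(s)) = \bm{\Lambda}(s^{\alpha}\bm{I}+\bm{\Lambda})^{-1}$. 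Substituting this into the equation for $\bm{\Tilde{V}}$ and invoking the stochastic Fubini theorem \cite{Veraar2012TheRevisited} to swap the inner $du$-integral with the integral against $d\bm{\Psi}$, the iterated convolution collapses to $\bm{\Lambda}\int_0^t (t-v)^{\alpha-1}\bm{\Tilde{V}_v}\,dv/\Gamma(\alpha)$, using that by construction $\int_0^v \bm{f^{\alpha,\bm{\Lambda}}}(v-s)\,d\bm{\Psi_s} = \bm{\Tilde{V}_v}$. Combined with the explicit leading $\bm{\Lambda}(t-s)^{\alpha-1}d\bm{\Psi_s}/\Gamma(\alpha)$ terms, this reorganisation produces exactly the drift $\bm{\Lambda}(t-s)^{\alpha-1}(\bm{\theta_0}-\bm{\Tilde{V}_s})/\Gamma(\alpha)$ and the two diffusion terms in the target equation.

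\emph{Converse and regularity.} The converse direction is handled symmetrically: starting from the rough Volterra equation for $\bm{\Tilde{V}}$, convolving both sides with $\bm{f^{\alpha,\bm{\Lambda}}}$ and applying the resolvent identity backwards via Fubini recovers the Mittag-Leffler-kernel form, after which defining $\bm{V}$ by requiring $(V^1,\dots,V^{n_c}) = \bm{\Theta^1}\bm{\Tilde{V}}$ and $(V^{n_c+1},\dots,V^{2m}) = \bm{\Theta^2}\bm{\Tilde{V}}$ and undoing the block reduction produces the original $\bm{f}$-equation for $\bm{V}$. The Hölder regularity claim is immediate from Lemma \ref{lemma:holder_regularity}: since $\bm{\Tilde{V}} = \bm{K}(\bm{O}^{-1}\bm{V})_{1:n_c}$ is an explicit linear functional of $\bm{V}$, it inherits pathwise Hölder regularity $\alpha - 1/2 - \varepsilon$ from $\bm{V}$. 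The hard part is not analytic but consists of matrix-valued bookkeeping: verifying that the zero right columns of $\bm{\mathcal{F}}$ really force $\bm{U^2}$ to be an \emph{algebraic} rather than integral function of $\bm{U^1}$ -- this is the mechanism that reduces the dynamics from $2m$ to $n_c$ effective driving directions -- and applying the stochastic Fubini theorem carefully for matrix-valued integrands when collapsing the iterated resolvent convolution into a single fractional integral of $\bm{\Tilde{V}}$.
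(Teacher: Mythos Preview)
Your argument is correct and takes a genuinely different route from the paper's. The paper does not use the resolvent identity; instead, for the forward implication it applies the fractional integration operator $I^{1-\alpha}$ to $\bm{\Tilde V}$, invokes the identity $I^{1-\alpha}\bm{f^{\alpha,\bm\Lambda}}=\bm\Lambda(\bm I-\bm{F^{\alpha,\bm\Lambda}})$ to recognise $\bm{\Tilde V}$ inside the resulting expression, and then undoes the fractional integration by applying $D^{1-\alpha}$. For the converse, the paper writes the power-law equation as $\bm{\Tilde V}=I^\alpha(\bm\theta-\bm\Lambda\bm{\Tilde V})+I^\alpha_{\bm B}(\cdots)$, iterates $N$ times to obtain a partial Mittag-Leffler series plus a remainder $\bm\Lambda^N(-1)^N I^{(N+1)\alpha}\bm{\Tilde V}$, and then invokes separate appendix lemmas to show the remainder vanishes and to identify the series with the Mittag-Leffler kernel.

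Your resolvent approach is more economical: the single identity $\bm{f^{\alpha,\bm\Lambda}}=\bm\Lambda g-\bm\Lambda g*\bm{f^{\alpha,\bm\Lambda}}$ (with $g(t)=t^{\alpha-1}/\Gamma(\alpha)$) plus one stochastic Fubini handles both directions without fractional differentiation or series-convergence arguments, and your block reduction via $\bm U=\bm O^{-1}\bm V$ makes the collapse $\bm{U^2}=(\bm I-\integralInf{\bm C})^{-1}\integralInf{\bm B}\,\bm{U^1}$ transparent where the paper reaches the same conclusion through a componentwise expansion of $\bm f$. What the paper's route buys is explicitness about the underlying Mittag-Leffler series and a direct link to the fractional-calculus toolbox used elsewhere in the appendix; what your route buys is a cleaner, essentially linear-Volterra argument that avoids those auxiliary lemmas entirely. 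Your converse sketch (``convolve with $\bm{f^{\alpha,\bm\Lambda}}$ and run the identity backwards'') is correct but terse: when you expand it you will need to substitute the original equation once more to cancel the $\bm{f^{\alpha,\bm\Lambda}}*\bm{\Tilde V}$ terms, and you should note that $\bm{f^{\alpha,\bm\Lambda}}$ commutes with $\bm\Lambda$ (being a power series in it) so that the matrix convolutions reorder as needed.
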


\begin{proof}

We begin by showing the first implication. Starting from Proposition \ref{prop:first_equation_V} we have
$$
\bm{V_t} =  \int_0^t \bm{f(t-s)} \bm{\mu_s} ds + \int_0^{t} \bm{f(t-s)} \textnormal{diag}(\sqrt{\bm{V_s}}) d\bm{B_s}.
$$
Developing from the definition of $\bm{f}$ in Equation \eqref{eq:definition_f_psi}, for any $t \in [0,1]$, $\bm{f}$ can be written
\begin{align*}
\bm{f(t)} &= \begin{pmatrix}
    (\bm{O_{11}} + \bm{O_{12}}(\bm{I} - \norm{\bm{C}})^{-1} \norm{\bm{B}}) \bm{K}^{-1} \bm{f^{\alpha,\Lambda}(t)}  & \bm{0} \\
    (\bm{O_{21}} + \bm{O_{22}} (\bm{I} - \norm{\bm{C}})^{-1} \norm{\bm{B}})\bm{K}^{-1} \bm{f^{\alpha,\Lambda}(t)} &  \bm{0}
    \end{pmatrix} \begin{pmatrix}
    \bm{O_{11}^{(-1)}} & \bm{O_{12}^{(-1)}} \\
    \bm{O_{21}^{(-1)}} & \bm{O_{22}^{(-1)}}
    \end{pmatrix}.
\end{align*}
Defining $\bm{V^1} := (V^1, \cdots, V^{n_c})$ and $\bm{V^2} := (V^{n_c+1}, \cdots, V^{2m})$, we have
\begin{align*}
    \bm{\bm{V^1_t}} &= 
\bm{\Theta^1} \int_0^t \bm{f^{\alpha, \bm{\Lambda}}(t-s)} \bm{O^{(-1)}_{11}}\bm{\mu^1_s} ds + \bm{\Theta^1} \int_0^t \bm{f^{\alpha, \bm{\Lambda}}(t-s)} \bm{O^{(-1)}_{12}}\bm{\mu^2_s} ds \\ & + \bm{\Theta^1} \int_0^{t} \bm{f^{\alpha, \bm{\Lambda}}}(t-s) \bm{O^{(-1)}_{11}}\textnormal{diag}(\sqrt{\bm{V^1_s}}) d\bm{W^1_s}+ \bm{\Theta^1} \int_0^{t} \bm{f^{\alpha, \bm{\Lambda}}(t-s)} \bm{O^{(-1)}_{12}}\textnormal{diag}(\sqrt{\bm{V^2_s}}) d\bm{W^2_s}.
\end{align*}
If  $\bm{\Theta^1}$ were non-singular, we could express $\bm{V^1}$ with power-law kernels thanks to the same approach as in \cite{ElEuch2018TheVolatility.}. In general we define
\begin{align*}
    \Tilde{\bm{V_t}} &:= \int_0^t \bm{f^{\alpha, \bm{\Lambda}}(t-s)} (\bm{O^{(-1)}_{11}}\bm{\mu^1_s} + \bm{O^{(-1)}_{12}}\bm{\mu^2_s})ds 
    \\ &+ \int_0^{t} \bm{f^{\alpha, \bm{\Lambda}}}(t-s) \bm{O^{(-1)}_{11}}\textnormal{diag}(\sqrt{\bm{V^1_s}}) d\bm{W^1_s} + \int_0^{t} \bm{f^{\alpha, \bm{\Lambda}}}(t-s) \bm{O^{(-1)}_{12}}\textnormal{diag}(\sqrt{\bm{V^2_s}}) d\bm{W^2_s}.
\end{align*}
From the same arguments as in Lemma \ref{lemma:holder_regularity}, H\"older regularity of $\bm{V}$ carries to $\Tilde{\bm{V}}$, and the components of $\Tilde{\bm{V}}$ are of H\"older regularity $\alpha - 1/2 - \epsilon$ for any $\epsilon > 0$, hence Lemma \ref{lemma:holder_regularity} shows $\mathcal{\bm{K}} := I^{1 - \alpha} \Tilde{\bm{V}}$ is well-defined, where $I^{1-\alpha}$ is the fractional integration operator of order $1-\alpha$ (see Definition \ref{def:fractional_differentiation_integration} in Appendix \ref{sec:fractional_operators}). Note that for any $t$ in $[0,1]$, using Lemma \ref{lemma:integration_operator_applied_f_alpha} of Appendix \ref{sec:fractional_operators}, we have
\begin{align*}
    \mathcal{\bm{K}}_t &= \int_0^t \bm{\Lambda} (\bm{I} - \bm{F^{\alpha, \bm{\Lambda}}(t-s)}) (\bm{O^{(-1)}_{11}} \bm{\mu^1_s} + \bm{O^{(-1)}_{12}} \bm{\mu^2_s}) ds \\
    & + \int_0^t \bm{\Lambda} (\bm{I} - \bm{F^{\alpha, \bm{\Lambda}}}(t-s)) \bm{O^{(-1)}_{11}}\textnormal{diag}(\sqrt{\bm{V^1_s}}) d\bm{W^1_s} + \int_0^t \bm{\Lambda} (\bm{I} - \bm{F^{\alpha, \bm{\Lambda}}}(t-s)) \bm{O^{(-1)}_{12}}\textnormal{diag}(\sqrt{\bm{V^2_s}}) d\bm{W^2_s}  \\
    & = \bm{\Lambda} \int_0^t (\bm{O^{(-1)}_{11}}\bm{\mu^1_s} +\bm{O^{(-1)}_{12}}\bm{\mu^2_s}) ds + \int_0^t \bm{\Lambda} \bm{O_{11}}\textnormal{diag}(\sqrt{\bm{V^1_s}}) d\bm{W^1_s} + \int_0^t \bm{\Lambda} \bm{O^{(-1)}_{12}}\textnormal{diag}(\sqrt{\bm{V^2_s}}) d\bm{W^2_s} \\
    &- \bm{\Lambda} \int_0^t \left[ \bm{F^{\alpha, \bm{\Lambda}}(t-s)}\bm{O^{(-1)}_{11}} \bm{\mu^1_s} + \int_0^s \bm{f^{\alpha, \bm{\Lambda}}(s-u)} \bm{O^{(-1)}_{11}}\textnormal{diag}(\sqrt{\bm{V^1_u}}) d\bm{W^1_u}\right] ds
    \\ &- \bm{\Lambda} \int_0^t \left[ \bm{F^{\alpha, \bm{\Lambda}}(t-s)}\bm{O^{(-1)}_{12}} \bm{\mu^2_s} + \int_0^s \bm{f^{\alpha, \bm{\Lambda}}(s-u)} \bm{O^{(-1)}_{12}}\textnormal{diag}(\sqrt{\bm{V^2_u}})d\bm{W^2_u} \right] ds.
\end{align*}
The last two terms can be rewritten using the definition of $\Tilde{\bm{V}}$, so that
\begin{align*}
    \mathcal{\bm{K}}_t &= \bm{\Lambda} \int_0^t (\bm{O^{(-1)}_{11}}\bm{\mu^1_s} + \bm{O^{(-1)}_{12}}\bm{\mu^2_s} - \Tilde{\bm{V_s}}) ds + \bm{\Lambda} \int_0^t  \bm{O^{(-1)}_{11}}\textnormal{diag}(\sqrt{\bm{\Theta^1} \Tilde{\bm{V_s}}})d\bm{W^1_s} + \bm{\Lambda} \int_0^t  \bm{O^{(-1)}_{12}}\textnormal{diag}(\sqrt{\bm{\Theta^2} \Tilde{\bm{V_s}}}) d\bm{W^2_s}.
\end{align*}
Thanks to the H\"older regularity of $\bm{\Tilde{V}}$, we can now apply the fractional differentiation operator of order $1-\alpha$ (see Definition \ref{def:fractional_differentiation_integration} in Appendix \ref{sec:fractional_operators}) together with the stochastic Fubini Theorem to deduce
\begin{align*}
\Tilde{\bm{V_t}} &= \dfrac{1}{\Gamma(\alpha)}\bm{\Lambda} \int_0^t (t-s)^{\alpha-1}(\bm{O^{(-1)}_{11}}\bm{\mu^1_s} + \bm{O^{(-1)}_{12}}\bm{\mu^2_s} - \Tilde{\bm{V_s}}) ds  \\ &+\dfrac{1}{\Gamma(\alpha)}\bm{\Lambda} \int_0^t (t-s)^{\alpha-1} \bm{O^{(-1)}_{11}}\textnormal{diag}(\sqrt{\bm{\Theta^1} \Tilde{\bm{V_s}}}) d\bm{W^1_s} +\dfrac{1}{\Gamma(\alpha)}\bm{\Lambda} \int_0^t (t-s)^{\alpha-1} \bm{O^{(-1)}_{12}}\textnormal{diag}(\sqrt{\bm{\Theta^2} \Tilde{\bm{V_s}}}) d\bm{W^2_s}. 
\end{align*}
This concludes the proof of the first implication. We now show the second implication. Suppose there exists $\Tilde{\bm{V}}$ of H\"older regularity $\alpha - 1/2 - \epsilon$ for any $\epsilon > 0$ such that $\bm{\Theta^1} \Tilde{\bm{V}}$ and $\bm{\Theta^2} \Tilde{\bm{V}}$ are positive,  solution of the following stochastic Volterra equation:
\begin{align*}
\Tilde{\bm{V_t}} &= \dfrac{1}{\Gamma(\alpha)}\bm{\Lambda} \int_0^t (t-s)^{\alpha-1}(\bm{O^{(-1)}_{11}}\bm{\mu^1_s} + \bm{O^{(-1)}_{12}}\bm{\mu^2_s} - \Tilde{\bm{V_s}}) ds  \\ &+\dfrac{1}{\Gamma(\alpha)}\bm{\Lambda} \int_0^t (t-s)^{\alpha-1} \bm{O^{(-1)}_{11}}\textnormal{diag}(\sqrt{\bm{\Theta^1} \Tilde{\bm{V_s}}}) d\bm{W^1_s} +\dfrac{1}{\Gamma(\alpha)}\bm{\Lambda} \int_0^t (t-s)^{\alpha-1} \bm{O^{(-1)}_{12}}\textnormal{diag}(\sqrt{\bm{\Theta^2} \Tilde{\bm{V_s}}}) d\bm{W^2_s}.
\end{align*}
Let us write for this proof $\bm{\theta} := \bm{\Lambda} \bm{O^{(-1)}_{11}}\bm{\mu^1} + \bm{\Lambda} \bm{O^{(-1)}_{12}}\bm{\mu^2}, \bm{\Lambda}_1 := \bm{\Lambda} \bm{O^{(-1)}_{11}}, \bm{\Lambda}_2 := \bm{\Lambda} \bm{O^{(-1)}_{12}}$ so that, for any $t$ in $[0,1]$,
\begin{equation*}
\Tilde{\bm{V_t}} = \dfrac{1}{\Gamma(\alpha)} \int_0^t (t-s)^{\alpha-1}(\bm{\theta}_s - \bm{\Lambda} \Tilde{\bm{V_s}}) ds \\
+\dfrac{1}{\Gamma(\alpha)} \int_0^t (t-s)^{\alpha-1}\bm{\Lambda_1} \textnormal{diag}(\sqrt{\bm{\Theta^1} \Tilde{\bm{V_s}}}) d\bm{W^1_s} +\dfrac{1}{\Gamma(\alpha)}\int_0^t (t-s)^{\alpha-1} \bm{\Lambda_2}\textnormal{diag}(\sqrt{\bm{\Theta^2} \Tilde{\bm{V_s}}}) d\bm{W^2_s}.
\end{equation*}
Remark that the above can be written
\begin{equation*}
    \Tilde{\bm{V_t}} = I^{\alpha}(\bm{\theta} - \bm{\Lambda} \Tilde{\bm{V}})_t + I^{\alpha}_{\bm{B}^1}(\bm{\Lambda_1}\textnormal{diag}(\sqrt{\bm{\Theta^1} \Tilde{\bm{V}}}))_t + I^{\alpha}_{\bm{B}^2}(\bm{\Lambda_2}\textnormal{diag}(\sqrt{\bm{\Theta^2} \Tilde{\bm{V}}}))_t,
\end{equation*}
where $I^{\alpha}_{\bm{B}}$ is the fractional integration operator with respect to $\bm{B}$ (see Definition \ref{def:fractional_differentiation_brownian} in Appendix \ref{sec:fractional_operators}). Iterating the application of $I^{\alpha}$ we find that, for any $N \geq 1$, $\Tilde{\bm{V}}$ satisfies
\begin{align*}
\Tilde{\bm{V}} &= \sum_{1 \leq k \leq N} \bm{\Lambda}^{k-1} (-1)^{k-1} I^{(k-1) \alpha} [I^{\alpha} \bm{\theta} + I^{\alpha}_{\bm{B}^1}(\bm{\Lambda_1}\textnormal{diag}(\sqrt{\bm{\Theta^1} \Tilde{\bm{V}}})) + I^{\alpha}_{\bm{B}^2}(\bm{\Lambda_2}\textnormal{diag}(\sqrt{\bm{\Theta^2} \Tilde{\bm{V}}}))] + \bm{\Lambda}^{N} (-1)^{N} I^{(N+1) \alpha}\Tilde{\bm{V}}.    
\end{align*}
Now, note that $\bm{\theta}$, $\textnormal{diag}(\sqrt{\bm{\Theta^1} \Tilde{\bm{V}}})$, $\textnormal{diag}(\sqrt{\bm{\Theta^2} \Tilde{\bm{V}}})$ and $\bm{\Tilde{V}}$ are square-integrable processes and Lemma \ref{lemma:convergence_mittagleffler_brownian} in Appendix \ref{sec:fractional_operators} shows that the sum converges almost surely to the series while $\bm{\Lambda}^{N} (-1)^{N} I^{(N+1) \alpha}\Tilde{\bm{V}}$ converges almost surely to zero as $N$ tends to infinity. Thus we have
\begin{align*}
\Tilde{\bm{V}} &= \sum_{k \geq 0} \bm{\Lambda}^{k} (-1)^{k} I^{k \alpha} [I^{\alpha} \bm{\theta} + I^{\alpha}_{\bm{B}^1}(\bm{\Lambda_1}\textnormal{diag}(\sqrt{\bm{\Theta^1} \Tilde{\bm{V}}})) + I^{\alpha}_{\bm{B}^2}(\bm{\Lambda_2}\textnormal{diag}(\sqrt{\bm{\Theta^2} \Tilde{\bm{V}}}))] \\
&= \sum_{k \geq 0} \bm{\Lambda}^{k} (-1)^{k} I^{k \alpha} I^{\alpha} \bm{\theta} + \sum_{k \geq 0} \bm{\Lambda}^{k} (-1)^{k} I^{k \alpha} I^{\alpha}_{\bm{B}^1}(\bm{\Lambda_1}\textnormal{diag}(\sqrt{\bm{\Theta^1} \Tilde{\bm{V}}})) + I^{\alpha}_{\bm{B}^2}(\bm{\Lambda_2}\textnormal{diag}(\sqrt{\bm{\Theta^2} \Tilde{\bm{V}}}))] \\
&= \bm{\Lambda}^{-1} \sum_{k \geq 0} \bm{\Lambda}^{k+1} (-1)^{k} I^{(k+1) \alpha} \bm{\theta} + \sum_{k \geq 0} \bm{\Lambda}^{k} (-1)^{k} I^{k \alpha} I^{\alpha}_{\bm{B}^1}(\bm{\Lambda_1}\textnormal{diag}(\sqrt{\bm{\Theta^1} \Tilde{\bm{V}}})) + I^{\alpha}_{\bm{B}^2}(\bm{\Lambda_2}\textnormal{diag}(\sqrt{\bm{\Theta^2}
\Tilde{\bm{V}}}))].
\end{align*}
Lemmas \ref{lemma:series_mittagleffler} and \ref{lemma:series_convolutionmittagleffler} shown in Appendix \ref{sec:fractional_operators} enable us to rewrite the above using the matrix Mittag-Leffler function. This yields, for any $t$ in $[0,1]$ and almost surely,
\begin{align*}
    \Tilde{\bm{V_t}} &= \bm{\Lambda}^{-1} \int_0^t \bm{f^{\alpha, \Lambda}(t-s)} \bm{\theta_s} ds + \bm{\Lambda}^{-1} \int_0^{t} \bm{f^{\alpha, \Lambda}(t-s)} \bm{\Lambda_1}\textnormal{diag}(\sqrt{\bm{\Theta^1} \Tilde{\bm{V_s}}}) d\bm{W^1_s} + \bm{\Lambda}^{-1} \int_0^{t} \bm{f^{\alpha, \Lambda}(t-s)} \bm{\Lambda_2}\textnormal{diag}(\sqrt{\bm{\Theta^2} \Tilde{\bm{V_s}}}) d\bm{W^2_s} .
\end{align*}
Replacing $\bm{\theta}, \bm{\Lambda}_1, \bm{\Lambda}_2$ by their expressions, almost surely and for any $t$ in $[0,1]$,
\begin{align*}
    \Tilde{\bm{V}}_t &= \int_0^t \bm{f^{\alpha, \Lambda}(t-s)} (\bm{O^{(-1)}_{11}}\bm{\mu^1_s} + \bm{\Lambda} \bm{O^{(-1)}_{12}}\bm{\mu^2_s})ds \\
    &+ \int_0^{t} \bm{f^{\alpha, \Lambda}(t-s)} \bm{O^{(-1)}_{11}}\textnormal{diag}(\sqrt{\bm{\Theta^1} \Tilde{\bm{V_s}}}) d\bm{B_s}^1 + \int_0^{t} \bm{f^{\alpha, \Lambda}(t-s)} \bm{O^{(-1)}_{12}}\textnormal{diag}(\sqrt{\bm{\Theta^2} \Tilde{\bm{V_s}}}) d\bm{B_s}^2.
\end{align*}
This concludes the second implication and the proof.
\end{proof}

\subsection{Step 4: Equation satisfied by the limiting price process}
\label{sec:proof2}

The previous results on the convergence of the intensity process enable us to now turn to the question of the limiting price dynamics. Recall that the sequence of rescaled price processes $\bm{P^T}$ is defined as
$$
\bm{P^T} := \trans{\bm{Q}} \bm{\bm{X}^T},
$$
where $\bm{Q} = \begin{pmatrix}
\bm{e_1} - \bm{e_2} \mid & \cdots  &\mid \bm{e_{2m-1}} - \bm{e_{2m}}
\end{pmatrix}.$ We have the following result.

\begin{proposition}
Let $(\bm{X},\bm{Z})$ be a limit point of $(\bm{X}^T, \bm{Z}^T)$ and $\bm{P} = \trans{\bm{Q}} \bm{X}$. Then
$$
\bm{P_t} = (\bm{I} + \bm{\Delta}) \trans{\bm{Q}} (\bm{Z_t} + \int_0^t \bm{\mu_s} ds).
$$
where $\bm{\Delta} = (\norm{\delta^T_{ij}})_{1 \leq i,j \leq m}$.
\end{proposition}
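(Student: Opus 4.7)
The plan is to analyse $\bm{P^T_t} = \trans{\bm{Q}} \bm{X^T_t}$ via the splitting $\bm{X^T_t} = \bm{Y^T_t} + T^{-\alpha} \bm{Z^T_t}$ together with the Wiener--Hopf representation of the integrated intensity already used in Step 2 of the proof of Theorem \ref{thm:ConvergenceInt}. The central structural ingredient, noted in the Remark following Assumption \ref{ass:no_arb}, is the identity
$$
\trans{\bm{Q}} \bm{\psi^T(s)} = \bm{\Delta^T(s)} \trans{\bm{Q}}, \qquad \bm{\Delta^T(s)} := (\delta^T_{ij}(s))_{1 \leq i,j \leq m},
$$
obtained by transposing the relation $\trans{\bm{\psi^T}} \bm{v_k} = \sum_j \delta^T_{kj} \bm{v_j}$ derived there, which is forced by the no pair-trading arbitrage condition. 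Assumption \ref{ass:no_arb} (ii) then ensures that $\bm{D^T(v)} := \int_0^v \bm{\Delta^T(r)} dr$ converges, as $v \to \infty$, to $\bm{\Delta}$, since $\int_0^\infty \delta^T_{ij}(s)\,ds = \normOne{\psi^T_{j+,i+}} - \normOne{\psi^T_{j-,i+}} \to \Delta_{ij}$.

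Concretely, starting from the Wiener--Hopf representation of $\bm{\lambda^T}$ together with Fubini's theorem, I would write
$$
T^{2\alpha} \bm{Y^T_t} = \int_0^{tT} \bm{\mu^T_s}\,ds + \int_0^{tT} \bm{\Psi^T(tT-u)} \bm{\mu^T_u}\,du + \int_0^{tT} \bm{\Psi^T(tT-u)}\,d\bm{M^T_u}
$$
with $\bm{\Psi^T(v)} := \int_0^v \bm{\psi^T(r)}\,dr$, left-multiply by $\trans{\bm{Q}}$ to replace $\bm{\Psi^T}$ by $\bm{D^T}$ via the identity above, then rescale $u = vT$. Each term now carries a natural scaling, and passing to the limit with (a) $T^{1-\alpha} \bm{\mu^T_{vT}} \to \bm{\mu_v}$ from \eqref{eq:convergence_rates_mu}, (b) $\bm{D^T((t-v)T)} \to \bm{\Delta}$ for $v < t$, and (c) $\bm{M^T_{vT}} = T^{\alpha} \bm{Z^T_v}$ with $\bm{Z^T} \to \bm{Z}$ in Skorokhod topology (Proposition \ref{prop:c_tightness_sequence}) produces the contributions $\trans{\bm{Q}} \int_0^t \bm{\mu_v}\,dv$, $\bm{\Delta} \trans{\bm{Q}} \int_0^t \bm{\mu_v}\,dv$ and $\bm{\Delta} \trans{\bm{Q}} \bm{Z_t}$, respectively. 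Adding the residual $T^{-\alpha} \trans{\bm{Q}} \bm{Z^T_t} \to \trans{\bm{Q}} \bm{Z_t}$ coming from $\bm{X^T_t} - \bm{Y^T_t} = T^{-\alpha} \bm{Z^T_t}$ promotes the coefficient of $\bm{Z_t}$ to $\bm{I} + \bm{\Delta}$, yielding $(\bm{I} + \bm{\Delta}) \trans{\bm{Q}} (\bm{Z_t} + \int_0^t \bm{\mu_s}\,ds)$ as announced.

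The hard part will be passing to the limit in the stochastic integral $\int_0^t \bm{D^T((t-v)T)} \trans{\bm{Q}}\,d\bm{Z^T_v}$: the kernel $\bm{D^T((t-v)T)}$ converges to $\bm{\Delta}$ only pointwise in $v$, and the convergence degrades near $v = t$ where $(t-v)T$ stays bounded. Mirroring the end of Step 2 of the proof of Theorem \ref{thm:ConvergenceInt}, I would decompose the integrand as $\bm{\Delta} + (\bm{D^T((t-v)T)} - \bm{\Delta})$: the first piece is handled via the Skorokhod representation theorem together with the stochastic Fubini theorem, and the error is controlled by an It\^o-isometry estimate using the uniform $L^2$ bound on $[\bm{Z^T}, \bm{Z^T}]_t = \textnormal{diag}(\bm{X^T_t})$ from Proposition \ref{prop:c_tightness_sequence}, combined with dominated convergence on the deterministic kernel justified by the uniform $L^1$ control supplied by Assumption \ref{ass:no_arb} (ii).
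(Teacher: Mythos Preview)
Your overall strategy matches the paper's: use the Wiener--Hopf representation of the integrated intensity, left-apply $\trans{\bm{Q}}$ to exploit the structural identity $\trans{\bm{Q}}\,\bm{\psi^T} = \bm{\Delta^T}\,\trans{\bm{Q}}$ (thereby replacing the diverging primitive $\bm{\Psi^T}$ by the bounded $\bm{D^T}$), and pass to the limit after splitting the kernel into its limit $\bm{\Delta}$ plus a remainder. The paper carries this out component-wise via $\bm{v_i}$, and writes $\int_0^{(t-v)T}\bm{\psi^T}$ as $\normOne{\bm{\psi^T}}$ minus a tail; after applying $\trans{\bm{Q}}$ this is exactly your decomposition of $\bm{D^T}((t-v)T)$ into $\bm{\Delta}$ plus an error.

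There is, however, a genuine scaling slip in your bookkeeping. Your claim that $T^{-\alpha}\,\trans{\bm{Q}}\bm{Z^T_t}\to \trans{\bm{Q}}\bm{Z_t}$ is false: $\bm{Z^T}$ is tight (it converges to $\bm{Z}$), so the prefactor $T^{-\alpha}$ drives this term to zero. Likewise, the three terms you extract from $T^{2\alpha}\,\trans{\bm{Q}}\bm{Y^T_t}$ are each of order $T^{\alpha}$, not $O(1)$: for instance $\trans{\bm{Q}}\int_0^{tT}\bm{\mu^T_s}\,ds = T\int_0^t \trans{\bm{Q}}\bm{\mu^T_{vT}}\,dv \sim T^{\alpha}\int_0^t \trans{\bm{Q}}\bm{\mu_v}\,dv$ since $T^{1-\alpha}\bm{\mu^T_{vT}}\to\bm{\mu_v}$. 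So what you are really computing is the limit of $T^{\alpha}\bm{P^T_t} = T^{-\alpha}\,\trans{\bm{Q}}\bm{N^T_{tT}}$, not of $\bm{P^T_t}=\trans{\bm{Q}}\bm{X^T_t}$. The paper's proof works at that $T^{-\alpha}$ scale from the outset: it writes $T^{-\alpha}\,\bm{v_i}\cdot\bm{N^T_{tT}} = \bm{v_i}\cdot\bm{Z^T_t} + T^{-\alpha}\,\bm{v_i}\cdot\int_0^{tT}\bm{\lambda^T_s}\,ds$, so that the identity contribution in $(\bm{I}+\bm{\Delta})$ comes directly from the martingale term $\bm{v_i}\cdot\bm{Z^T_t}$ at the correct scale, not from the vanishing difference $\bm{X^T}-\bm{Y^T}=T^{-\alpha}\bm{Z^T}$. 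Once you shift to the $T^{-\alpha}$ normalisation, your argument and the paper's coincide.
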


\begin{proof}
Let $(\bm{X}, \bm{Z})$ be a limit point from $(\bm{X}^T, \bm{Z}^T)$. For any $1 \leq i \leq m$ we can compute the difference between upward and downard jumps on Asset $i$
\begin{align*}
    \bm{v_i} \cdot \bm{N^T}_t = \bm{v_i} \cdot \bm{M^T_t} + \bm{v_i} \cdot \int_0^t \bm{\lambda_s} ds,
\end{align*}
with the following expression for the integrated intensity:
\begin{align*}
    \int_0^{tT} \bm{\lambda^T}_s ds & = T \int_0^{t} \bm{\mu^T_{sT}} ds + T \int_0^t \int_0^{T(t-s)} \bm{\psi^T(u)} du \bm{\mu^T_{Ts}} ds + \bm{\normOne{\psi^T}} \bm{M^T_{tT}} - \int_0^{tT} \int_{tT-s}^{\infty} \bm{\psi^T(u)} du d\bm{M^T_s}.
\end{align*}
Thus the microscopic price for the Asset $i$ satisfies
\begin{align*}
    T^{-\alpha} \bm{v_i} \cdot \bm{N^T}_{tT} &= T^{1-\alpha} \int_0^{t} \bm{v_i} \cdot \bm{\mu^T_{sT}} ds + T^{1-\alpha} \trans{\bm{\normOne{\psi^T}}} \bm{v_i} \cdot \int_0^t \bm{\mu^T_{Ts}} ds + \bm{v_i} \cdot \bm{Z^T_t} + \trans{\bm{\normOne{\psi^T}}} \bm{v_i} \cdot \bm{Z^T_{t}} \\
    & -  T^{-\alpha} \int_0^t \int_{T(t-s)}^{\infty} \trans{\bm{\psi^T(u)}}\bm{v_i} \cdot \bm{\mu^T_{Ts}} du ds - T^{-\alpha} \int_0^{tT}  \int_{tT-s}^{\infty} \bm{\psi^T(u)} du d\bm{M^T_s} \\
    &= \sum_{1 \leq k \leq m} (\mathbb{1}_{ik} + \norm{\delta^T_{ik}}), \bm{v_k} \cdot \bm{Z^T_t} + \sum_{1 \leq k \leq m} (\mathbb{1}_{ik} + \norm{\delta^T_{ik}}) T^{1-\alpha} \int_0^{t} \bm{v_k} \cdot \bm{\mu^T_{sT}} ds \\
    & -  \int_0^{t} \int_{tT-s}^{\infty} \trans{\bm{\psi^T(u)}}\bm{v_i} du \cdot  d\bm{Z^T_s} -  T^{-\alpha} \int_0^t \int_{T(t-s)}^{\infty} \trans{\bm{\psi^T(u)}}\bm{v_i} \cdot \bm{\mu^T_{Ts}} du ds.
\end{align*}
It is straightforward to show that the last two terms converge to zero and
thus, any limit point $\bm{P}$ of $\bm{P^T} = \trans{\bm{Q}} \bm{X}^T$ is such that
$$
\bm{P_t} = (\bm{I} + \bm{\Delta}) \trans{\bm{Q}} (\bm{Z_t} + \int_0^t \bm{\mu_s} ds).
$$
\end{proof}

Replacing $\bm{Z}$ by the expression obtained in Proposition \ref{prop:first_equation_V} concludes the proof of Theorem \ref{thm:ConvergenceInt} since
$$
\bm{P_t} = (\bm{I} + \bm{\Delta}) \trans{\bm{Q}} \big( \int_0^{t} \textnormal{diag}(\sqrt{\bm{V_s}}) d\bm{B_s} + \int_0^t \bm{\mu_s} ds \big).
$$

%%%% Appendix: technical results
\newpage
\appendix
\section{Technical appendix}
\label{sec:appendix}

\subsection{Independence of Equation \eqref{eq:rough_sde_noF} from chosen basis}

We consider two representations which satisfy Assumption \ref{ass:StructureHawkes}. Let $\bm{P}, \Tilde{\bm{P}}$ be invertible matrices, $0 \leq n_c, n_{c^{'}} \leq n$ and $\bm{A^T}  \in \matfunction{n_c}{\mathbb{R}}$, $\bm{C^T} \in \matfunction{n-n_c}{\mathbb{R}}$, $\bm{B^T} \in \matfunction{n-n_c,n_c}{\mathbb{R}}$ and $\bm{\Tilde{A}}^T  \in \matfunction{n_{c^{'}}}{\mathbb{R}}$, $\bm{\Tilde{C}}^T \in \matfunction{n-n_{c^{'}}}{\mathbb{R}}$, $\bm{\Tilde{B}}^T \in \matfunction{n-n_{c^{'}},n_{c^{'}}}{\mathbb{R}}$ such that
\begin{align*}
    \bm{\phi^T} &= \bm{P} \begin{pmatrix}
    \bm{A^T} & \bm{0} \\
    \bm{B^T} & \bm{C^T}
    \end{pmatrix} \bm{P}^{-1} = \Tilde{\bm{P}} \begin{pmatrix}
    \bm{\Tilde{A}}^T & \bm{0} \\
    \bm{\Tilde{B}}^T & \bm{\Tilde{C}}^T
    \end{pmatrix} \Tilde{\bm{P}}^{-1}.
\end{align*}
We write $\bm{A}$ for the limit of $\bm{A^T}$ (and similarly for $B^T,C^T$, etc.). First, remark that we must have $n_c = n_{c^{'}}$. Indeed, since $\spectralRadius{\norm{\bm{C}}} <1$ and $\spectralRadius{\norm{\bm{\Tilde{C}}}} <1$, $1$ is neither an eigenvalue of $\norm{\bm{C}}$ nor of $\norm{\bm{\Tilde{C}}}$. Yet, since $\bm{A} = \bm{I}$ and $\bm{\Tilde{A}} = \bm{I}$, 1 is an eigenvalue of $\bm{\phi}$ with multiplicity $n_{c}$ and $n_{c^{'}}$. Therefore $n_{c} = n_{c^{'}}$.
\\ \\
We have, writing $\bm{L}=\bm{P}^{-1} \Tilde{\bm{P}}$,
\begin{align*}
    \begin{pmatrix}
    \bm{A} & \bm{0} \\
    \bm{B} & \bm{C}
    \end{pmatrix} &= \bm{L} \begin{pmatrix}
    \bm{\Tilde{A}} & \bm{0} \\
    \bm{\Tilde{B}} & \bm{\Tilde{C}}
    \end{pmatrix} \bm{L}^{-1}.
\end{align*}
Since $\bm{A} = \bm{\Tilde{A}} = \bm{I}$ because of Equation \eqref{eq:convergence_rates_AT}, developing and using the assumption that $\bm{I}-\bm{C}$ is invertible, we get
\begin{align*}
    \bm{L_{12}} &= \bm{0} \\
     (\bm{I}-\bm{C}) \bm{L_{21}} &= \bm{B} \bm{L_{11}} - \bm{L_{22}} \bm{\Tilde{B}} \\
     \bm{C} \bm{L_{22}} &= \bm{L_{22}} \bm{\Tilde{C}}.
\end{align*}
Since $\bm{L} \bm{L}^{-1} = \bm{I}$, $\bm{L_{11}} = \bm{I}$, $\bm{L_{22}} = \bm{I}$, $\bm{L_{21}} = - \bm{L}^{(-1)}_{21}$, we deduce
\begin{align*}
    \bm{L_{11}} = \bm{I}, \quad \bm{L_{22}} = \bm{I}, \quad   \bm{L_{12}} = \bm{0}, \quad
     (\bm{I}-\bm{C}) \bm{L_{21}} &= \bm{B} - \bm{\Tilde{B}}, \quad \bm{C} = \bm{\Tilde{C}}.
\end{align*}
As $\bm{L} = \bm{P}^{-1} \Tilde{\bm{P}}$, we have
\begin{align*}
     \bm{P}^{-1} &=  \begin{pmatrix}
    \bm{I} & \bm{0} \\
    (\bm{I}-\bm{C})^{-1} (\bm{B}-\bm{\Tilde{B}}) & \bm{I}
    \end{pmatrix} \Tilde{\bm{P}}^{-1} = \begin{pmatrix}
    \Tilde{\bm{P}}^{(-1)}_{11} & \Tilde{\bm{P}}^{(-1)}_{12} \\
     (\bm{I}-\bm{C})^{-1} (\bm{B}-\bm{\Tilde{B}})\Tilde{\bm{P}}^{(-1)}_{11} + \Tilde{\bm{P}}^{(-1)}_{21} &  (\bm{I}-\bm{C})^{-1} (\bm{B}-\bm{\Tilde{B}})\Tilde{\bm{P}}^{(-1)}_{12} + \Tilde{\bm{P}}^{(-1)}_{22}
    \end{pmatrix}.
\end{align*}
Developing $\Tilde{\bm{P}} = \bm{P} \bm{L}$ together with the above, we find
\begin{align*}
    &\Tilde{\bm{P}}^{(-1)}_{11} = \bm{P}^{(-1)}_{11}, 
    \Tilde{\bm{P}}^{(-1)}_{12} = \bm{P}^{(-1)}_{12},
    \Tilde{\bm{P}}_{12} = \bm{P_{12}},
    \Tilde{\bm{P}}_{22} = \bm{P_{22}} \\
    &\Tilde{\bm{P}}_{11} = \bm{P_{11}} +  \bm{P_{12}} (\bm{I}-\bm{C})^{-1} (\bm{B}-\bm{\Tilde{B}}) \\
    &\Tilde{\bm{P}}_{21} = \bm{P_{21}} +  \bm{P_{22}}(\bm{I}-\bm{C})^{-1} (\bm{B}-\bm{\Tilde{B}}).
\end{align*}
Thus
\begin{align*}
    & \Tilde{\bm{P}}^{(-1)}_{11} = \bm{P}^{(-1)}_{11}, \Tilde{\bm{P}}^{(-1)}_{12} = \bm{P}^{(-1)}_{12} \\
    & \Tilde{\bm{P}}_{11} + \Tilde{\bm{P}}_{12}(\bm{I}-\bm{C})^{-1}\bm{\Tilde{B}} = \bm{P_{11}} +  \bm{P_{12}} (\bm{I}-\bm{C})^{-1} \bm{B} \\
    & \Tilde{\bm{P}}_{21} + \Tilde{\bm{P}}_{22}(\bm{I}-\bm{C})^{-1} \bm{\Tilde{B}} = \bm{P_{21}} +  \bm{P_{22}}(\bm{I}-\bm{C})^{-1} \bm{B}.
\end{align*}

Therefore regardless of the chosen basis, Equation \eqref{eq:rough_sde_noF} is the same, which concludes the proof.

\subsection{Fractional operators}
\label{sec:fractional_operators}

This section is a brief reminder on fractional operators which are used in proofs. We also introduce the matrix extended Mittag-Leffler function.

\begin{definition}[Fractional differentiation and integration operators]
\label{def:fractional_differentiation_integration}
For $\alpha \in (0,1)$, the fractional differentiation (resp. integration) operator denoted by $D^{\alpha}$ is defined as
\begin{align*}
    D^{\alpha}f(t) &:= \dfrac{1}{\Gamma(1 - \alpha)}\dfrac{d}{dt} \int_0^t (t-s)^{-\alpha} f(s) ds,
\end{align*}
where $f$ is a measurable, H\"older continuous function of order strictly greater than $\alpha$. The fractional integration operator denoted by $I^{\alpha}$ is defined as
$$
I^{\alpha}f(t) := \dfrac{1}{\Gamma(\alpha)}\int_0^t (t-s)^{\alpha-1} f(s) ds.
$$
where $f$ is a measurable function.
\end{definition}

It will be convenient for us to define fractional integration with respect to a Brownian motion.

\begin{definition}[Fractional integration operator with respect to a Brownian motion]
\label{def:fractional_differentiation_brownian}
Given a Brownian motion $B$ and $\alpha \in (1/2,1)$, the fractional integration operator with respect to $B$, denoted by $I^{\alpha}_{B}$, is defined as
$$
I^{\alpha}_{B}f(t) = \dfrac{1}{\Gamma(\alpha)}\int_0^t (t-s)^{1-\alpha} f(s) dB_s.
$$
for $f$ a measurable, square integrable stochastic process.
\end{definition}

\begin{remark}
The fractional integration of a matrix-valued stochastic process $\bm{f}$ with respect to a multivariate Brownian motion $\bm{B}$ is:

$$
I^{\alpha}_{\bm{B}}\bm{f(t)} = \dfrac{1}{\Gamma(\alpha)}\int_0^t (t-s)^{1-\alpha} \bm{f(s)} d\bm{B_s}.
$$
\end{remark}

We now extend the Mittag-Leffler function to matrices (for a theory of matrix-valued functions, see for example \cite{Higham2008FunctionsComputation}). We have the following definition.

\begin{definition}[Matrix-extended Mittag-Leffler function]
\label{def:matrix_extended_mittagleffler}
Let $\alpha, \beta \in \mathbb{C}$ such that $\textnormal{Re}(\alpha), \textnormal{Re}(\beta) > 0$, $\bm{\Lambda} \in \mathcal{M}_{n}(\mathbb{R})$. Then the matrix Mittag-Leffler function is defined as
\begin{align*}
    \bm{E_{\alpha, \beta}(\Lambda)} &:= \sum_{n \geq 0} \dfrac{\bm{\Lambda}^n}{\Gamma(\alpha n + \beta)}.
\end{align*}
\end{definition}

We also extend the Mittag-Leffler density function for matrices.

\begin{definition}[Mittag-Leffler density for matrices]
\label{def:mittag_leffler_matrices}
 Let $\alpha \in \mathbb{C}$ such that $\textnormal{Re}(\alpha) > 0$, $\bm{\Lambda} \in \mathcal{M}_{n}(\mathbb{R})$. Then, the matrix Mittag-Leffler density function $\bm{f^{\alpha, \Lambda}}$ is defined as

$$
    \bm{f^{\alpha, \Lambda}(t)} := \bm{\Lambda} t^{\alpha - 1} \bm{E_{\alpha, \alpha}(-\bm{\Lambda} t^{\alpha})}
$$

We write $\bm{F^{\alpha, \Lambda}}$ for the cumulative matrix Mittag-Leffler density function

$$
    \bm{F^{\alpha, \Lambda}(t)} := \int_0^t  \bm{f^{\alpha, \Lambda}(s)}ds
$$

\end{definition}

Using Definition \ref{def:matrix_extended_mittagleffler}, it is easy to show the following lemma.
\begin{lemma}
\label{lemma:integration_operator_applied_f_alpha}
Let $\alpha\in \mathbb{C}$ such that $\textnormal{Re}(\alpha)> 0$, $\bm{\Lambda} \in \mathcal{M}_{n}(\mathbb{R})$. Then,
$$
    I^{1-\alpha} \bm{f^{\alpha, \bm{\Lambda}}} = \bm{\Lambda} (\bm{I} - \bm{F^{\alpha, \bm{\Lambda}})}.
$$
Furthermore, if $\alpha \in (1/2,1)$
$$
\widehat{\bm{f^{\alpha, \bm{\Lambda}}}}\bm{(z)} = \bm{\Lambda} (\bm{I} z^{\alpha} + \bm{\Lambda})^{-1}.
$$
\end{lemma}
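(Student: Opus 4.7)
The plan is to prove both identities by working directly from the power series definition of the matrix Mittag-Leffler function. Expanding Definition~\ref{def:mittag_leffler_matrices}, write
$$
\bm{f^{\alpha,\bm{\Lambda}}}(t) = \sum_{n\geq 0} \frac{(-1)^n \bm{\Lambda}^{n+1}}{\Gamma(\alpha(n+1))}\, t^{\alpha(n+1)-1},
$$
which converges entry-wise on $(0,\infty)$ since the scalar Mittag-Leffler series is entire. Because all powers of $\bm{\Lambda}$ commute, the whole argument reduces to scalar identities applied term-by-term, with uniform convergence on compact sets justifying all interchanges.

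For the first identity, I would apply $I^{1-\alpha}$ term-by-term and evaluate each integral via the Beta function:
$$
\frac{1}{\Gamma(1-\alpha)}\int_0^t (t-s)^{-\alpha} s^{\alpha(n+1)-1}\,ds = t^{\alpha n}\frac{\Gamma(\alpha(n+1))}{\Gamma(\alpha n+1)}.
$$
Substituting and telescoping the Gamma factors yields $I^{1-\alpha}\bm{f^{\alpha,\bm{\Lambda}}}(t) = \bm{\Lambda}\, \bm{E_{\alpha,1}}(-\bm{\Lambda} t^\alpha)$. In parallel, a direct term-by-term integration of $\bm{f^{\alpha,\bm{\Lambda}}}$ gives
$$
\bm{F^{\alpha,\bm{\Lambda}}}(t) = -\sum_{m\geq 1}\frac{(-\bm{\Lambda} t^\alpha)^m}{\Gamma(\alpha m+1)} = \bm{I} - \bm{E_{\alpha,1}}(-\bm{\Lambda} t^\alpha),
$$
so that $\bm{\Lambda}(\bm{I}-\bm{F^{\alpha,\bm{\Lambda}}}(t)) = \bm{\Lambda}\, \bm{E_{\alpha,1}}(-\bm{\Lambda} t^\alpha)$, matching the previous expression.

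For the Laplace transform, I would again operate term-by-term, using $\mathcal{L}[t^{\beta-1}](z) = \Gamma(\beta)z^{-\beta}$ for $\operatorname{Re}\beta>0$, which yields
$$
\widehat{\bm{f^{\alpha,\bm{\Lambda}}}}(z) = \sum_{n\geq 0} (-1)^n \bm{\Lambda}^{n+1} z^{-\alpha(n+1)} = \bm{\Lambda} z^{-\alpha}\sum_{n\geq 0}\bigl(-\bm{\Lambda} z^{-\alpha}\bigr)^n.
$$
For $z$ large enough that $\rho(\bm{\Lambda} z^{-\alpha})<1$, the Neumann series sums to $(\bm{I}+\bm{\Lambda} z^{-\alpha})^{-1}$, giving $\widehat{\bm{f^{\alpha,\bm{\Lambda}}}}(z) = \bm{\Lambda}(z^\alpha \bm{I}+\bm{\Lambda})^{-1}$. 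Analytic continuation in $z$ extends the identity wherever $z^\alpha \bm{I}+\bm{\Lambda}$ is invertible.

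The main obstacle is purely bookkeeping: the $\alpha\in(1/2,1)$ hypothesis in the second part is needed only to ensure $\bm{f^{\alpha,\bm{\Lambda}}}\in L^1_{\mathrm{loc}}$ near $0$ so that the Laplace transform exists in the ordinary sense, and one must verify absolute convergence of the Neumann sum in a half-plane before invoking analytic continuation. Everything else is a routine Beta-function and series manipulation.
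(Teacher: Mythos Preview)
Your proposal is correct and is precisely the kind of direct series computation the paper intends: the paper itself gives no proof, stating only that the lemma follows easily from Definition~\ref{def:matrix_extended_mittagleffler}, and your term-by-term Beta-function and Laplace-transform calculations are exactly that. One small quibble: local integrability of $t^{\alpha-1}$ near $0$ holds for any $\alpha>0$, so the restriction $\alpha\in(1/2,1)$ is not actually needed for the existence of the Laplace transform; it is simply the regime relevant to the paper.
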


We need another important property relating Mittag-Leffler functions with fractional integration operators.

\begin{lemma}
\label{lemma:series_mittagleffler}
Let $\alpha > 0$ and $\bm{\Lambda} \in \mathcal{M}_{m}(\mathbb{R})$. Then
\begin{align*}
    I^1 \bm{f^{\alpha, \bm{\Lambda}}} &= \sum_{ n \geq 1} (-1)^{n-1} \bm{\Lambda}^{n} I^{n \alpha}(1)
\end{align*}
\end{lemma}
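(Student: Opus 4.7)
The plan is to expand everything using the series definition of the matrix Mittag-Leffler function and then integrate termwise. Writing
\begin{equation*}
\bm{f^{\alpha,\bm{\Lambda}}}(t) \;=\; \bm{\Lambda}\, t^{\alpha-1} \sum_{n\geq 0} \frac{(-\bm{\Lambda})^n t^{\alpha n}}{\Gamma(\alpha n+\alpha)} \;=\; \sum_{n\geq 1} \frac{(-1)^{n-1}\,\bm{\Lambda}^n\, t^{n\alpha-1}}{\Gamma(n\alpha)},
\end{equation*}
I would reindex so that the sum starts at $n=1$ (shifting $n\mapsto n-1$). Then I would apply $I^{1}$, i.e.\ integrate from $0$ to $t$, to each term: a single primitive computation gives $\int_0^t s^{n\alpha-1}ds = t^{n\alpha}/(n\alpha)$, and using $\Gamma(n\alpha+1)=n\alpha\,\Gamma(n\alpha)$ the $n$th term becomes $(-1)^{n-1}\bm{\Lambda}^n t^{n\alpha}/\Gamma(n\alpha+1)$. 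On the other hand, by Definition \ref{def:fractional_differentiation_integration}, $I^{n\alpha}(1)(t) = \frac{1}{\Gamma(n\alpha)}\int_0^t (t-s)^{n\alpha-1}ds = t^{n\alpha}/\Gamma(n\alpha+1)$, so the two expressions agree term by term.

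The only subtlety is justifying the interchange of sum and integral. The Mittag-Leffler series converges as an entire function of its argument, and, fixing any compact interval $[0,T]$ and any matrix norm, the tail bound $\|\bm{\Lambda}^n\| t^{n\alpha-1}/\Gamma(n\alpha)\leq \|\bm{\Lambda}\|^n T^{n\alpha-1}/\Gamma(n\alpha)$ gives a summable majorant on $[\varepsilon, T]$ for any $\varepsilon > 0$ (the singularity at $0$ is integrable since $\alpha > 0$, so dominated convergence applies on $[0,T]$ after a routine splitting near zero). This lets me swap integration and summation, which is the only non-algebraic step.

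Thus the proof reduces to two paragraphs: one line of series manipulation plus a standard dominated-convergence justification. The main obstacle, such as it is, is purely notational bookkeeping of the index shift and the identity $\Gamma(n\alpha+1)=n\alpha\,\Gamma(n\alpha)$; there is no real analytic difficulty since everything reduces to entire power series and elementary integrals of $s^{n\alpha-1}$.
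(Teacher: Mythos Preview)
Your argument is correct and is, in fact, more direct than the paper's own proof. The paper does not integrate the power series termwise; instead it uses the resolvent-type identity of Lemma~\ref{lemma:integration_operator_applied_f_alpha}, namely $I^{1-\alpha}\bm{f^{\alpha,\Lambda}}=\bm{\Lambda}(\bm{I}-\bm{F^{\alpha,\Lambda}})$, applies $I^{\alpha}$ to it, and iterates to obtain a finite expansion with remainder $(-1)^{N}\bm{\Lambda}^{N} I^{N\alpha}\,I\,\bm{f^{\alpha,\Lambda}}$, which it then shows tends to zero via explicit estimates on $I^{N\alpha}(t^{(n+1)\alpha-1})$. Your route bypasses this iteration entirely by working directly from the series definition; the cost is that you must justify swapping sum and integral, whereas the paper's cost is controlling the remainder. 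One small remark: your dominated-convergence sketch is slightly awkward near $t=0$; the cleanest justification is Fubini--Tonelli, since $\sum_{n\geq 1}\int_0^t \|\bm{\Lambda}\|^{n} s^{n\alpha-1}/\Gamma(n\alpha)\,ds = \sum_{n\geq 1}\|\bm{\Lambda}\|^{n} t^{n\alpha}/\Gamma(n\alpha+1)<\infty$ is just the Mittag-Leffler series $E_{\alpha,1}(\|\bm{\Lambda}\|t^{\alpha})-1$, which is entire.
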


\begin{proof}
Using Lemma \ref{lemma:integration_operator_applied_f_alpha} and repeated applications of $I^{\alpha}$, for all $N \geq 1$ we have
\begin{align*}
    I \bm{f^{\alpha, \bm{\Lambda}}} &= \sum_{1 \leq n \leq N} (-1)^{n-1} \bm{\Lambda}^{n} I^{n \alpha}(1) + (-1)^{N-1}\bm{\Lambda}^{N} I^{N \alpha} I \bm{f^{\alpha, \bm{\Lambda}}}.
\end{align*}
Therefore, if we show that 
$$
(-1)^{N-1}\bm{\Lambda}^{N} I^{N \alpha} I \bm{f^{\alpha, \bm{\Lambda}}} \underset{N \to \infty}{\to} 0,
$$
the result will follow. To prove this we make use of the series expansion of $I^{N \alpha} \bm{f^{\alpha, \Lambda}}$ to deduce bounds which will converge to zero. Writing $C$ a constant independent of $t$ and $N$ which may change from line to line, $N_{\alpha} = \lfloor \dfrac{1}{\alpha} \rfloor$ and $\normOp{\cdot}$ for the operator norm, we have
\begin{align*}
    \normOp{\bm{\Lambda}^{N} \bm{f^{\alpha, \Lambda}(t)}} &=  \normOp{\bm{\Lambda}^{N+1} \sum_{n \geq 0} (-1)^{n} \dfrac{t^{(n+1)\alpha - 1}}{\Gamma((n+1)\alpha)}} \\
    & \leq \normOp{\bm{\Lambda}^{N+1} \sum_{0 \leq n \leq N_{\alpha}} (-1)^n \dfrac{t^{(n+1)\alpha - 1}}{\Gamma((n+1)\alpha)} + \bm{\Lambda}^{N+1} C}.
\end{align*}
Therefore, when applying the fractional integration operator of order $N\alpha$ we have, writing $g_{n}: t \mapsto t^{(n+1)\alpha-1}$
\begin{align*}
    I^{N \alpha} \normOp{\bm{\Lambda}^{N} \bm{f^{\alpha, \Lambda}(t)}} &\leq \normOp{\bm{\Lambda}^{N+1}I^{N \alpha} (\sum_{0 \leq n \leq N_{\alpha}} (-1)^n \dfrac{g_n}{\Gamma((n+1)\alpha)}) + \bm{\Lambda}^{N+1}I^{N \alpha}(C)} \\
    & \leq \sum_{0 \leq n \leq N_{\alpha}} \dfrac{1}{\Gamma((n+1)\alpha)} \normOp{\bm{\Lambda}^{N+1}I^{N \alpha} (g_{n})} + \normOp{\bm{\Lambda}^{N+1}I^{N \alpha}(C)}.
\end{align*}
An explicit computation of $I^{N\alpha}(g_n)$ shows the convergence to zero of the right hand side as $N$ tends to infinity, which concludes the proof.
\end{proof}

Finally, we need to combine fractional integration $I^{\alpha}$ with $I^{\alpha}_{B}$. We have the following lemma.

\begin{lemma}
\label{lemma:composition_brownian}
Let $m \geq 1$, $\bm{B}$ an $m$-dimensional Brownian motion, $\bm{X}$ a $m \times m$ matrix valued adapted square-integrable stochastic process and $\alpha, \beta > 0$. Then we have:
$$
I^{\alpha} I^{\beta}_{\bm{B}}(\bm{X}) = I^{\alpha + \beta}_{\bm{B}}(\bm{X}).
$$
\end{lemma}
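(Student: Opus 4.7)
The plan is to unfold both operators using their definitions, swap the order of integration via the stochastic Fubini theorem, and then recognize the resulting inner deterministic integral as a Beta function. More concretely, I would start from
\begin{equation*}
I^{\alpha} I^{\beta}_{\bm{B}} (\bm{X})(t) = \frac{1}{\Gamma(\alpha)\Gamma(\beta)} \int_0^t (t-u)^{\alpha-1} \left( \int_0^u (u-s)^{\beta-1} \bm{X_s}\, d\bm{B_s} \right) du,
\end{equation*}
and exchange the $du$ and $d\bm{B_s}$ integrations, which (after checking the assumptions) yields
\begin{equation*}
I^{\alpha} I^{\beta}_{\bm{B}} (\bm{X})(t) = \frac{1}{\Gamma(\alpha)\Gamma(\beta)} \int_0^t \bm{X_s} \left( \int_s^t (t-u)^{\alpha-1}(u-s)^{\beta-1} du \right) d\bm{B_s}.
\end{equation*}

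Next I would evaluate the inner integral by the substitution $u = s + (t-s)v$, which turns it into $(t-s)^{\alpha+\beta-1} \int_0^1 (1-v)^{\alpha-1} v^{\beta-1} dv = (t-s)^{\alpha+\beta-1} B(\alpha,\beta) = (t-s)^{\alpha+\beta-1} \Gamma(\alpha)\Gamma(\beta)/\Gamma(\alpha+\beta)$. Substituting back, the $\Gamma(\alpha)\Gamma(\beta)$ factors cancel and one recovers exactly $\Gamma(\alpha+\beta)^{-1} \int_0^t (t-s)^{\alpha+\beta-1} \bm{X_s} d\bm{B_s} = I^{\alpha+\beta}_{\bm{B}}(\bm{X})(t)$, which is the desired identity.

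The only non-routine step is justifying the stochastic Fubini exchange, since this is where the integrability hypothesis on $\bm{X}$ enters. I would invoke the version of the stochastic Fubini theorem from \cite{Veraar2012TheRevisited} (already used elsewhere in the paper), and check its hypothesis by bounding
\begin{equation*}
\int_0^t \int_0^t \bm{1}_{\{s \leq u\}} (t-u)^{2(\alpha-1)} (u-s)^{2(\beta-1)}\, \mathbb{E}\lVert \bm{X_s} \rVert_2^2\, ds\, du,
\end{equation*}
which is finite thanks to square-integrability of $\bm{X}$ on $[0,t]$ together with the fact that both exponents are integrable near their respective singularities. Since the work is carried out componentwise on $\bm{X}$ and $\bm{B}$, no difficulty arises from the matrix/vector structure beyond applying the scalar result entrywise.
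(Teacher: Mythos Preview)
Your proof is correct and follows exactly the approach the paper indicates: unfold the definitions, apply the stochastic Fubini theorem (Veraar), and collapse the inner deterministic convolution via the Beta integral. The paper's own proof is a single sentence stating precisely this; you have simply written out the details it leaves implicit.
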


\begin{proof}
The proof is a straightforward application of the definition of the operators together with the stochastic Fubini theorem.
\end{proof}
The next lemma is useful to transform stochastic convolutions of stochastic processes with the Mittag-Leffler density function into series of repeated applications of $I^{\alpha}_{\bm{B}}$.
\begin{lemma}
\label{lemma:series_convolutionmittagleffler}
Let $m \geq 1$, $\bm{B}$ an $m$-dimensional Brownian motion, $\bm{X}$ a $m \times m$ matrix valued adapted and square-integrable stochastic process, $\alpha > 0$ and $\bm{\Lambda} \in \mathcal{M}_{m}(\mathbb{R})$. Then, for all $t \geq 0$ and almost surely
\begin{equation*}
    \int_0^t \bm{f^{\alpha, \Lambda}(t-s)} \bm{X_s} d\bm{B_s} = \sum_{n \geq 1} (-1)^{n-1} \bm{\Lambda}^{n} I^{n \alpha}_{\bm{B}}(\bm{X}),
\end{equation*}
where the series converges almost surely.
\end{lemma}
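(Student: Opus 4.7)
The plan is to expand $\bm{f^{\alpha,\Lambda}}$ as a matrix power series and justify term-by-term stochastic integration. From Definition \ref{def:mittag_leffler_matrices} and Definition \ref{def:matrix_extended_mittagleffler}, for any $u>0$,
$$
\bm{f^{\alpha,\Lambda}(u)} \;=\; \sum_{n\geq 0} (-1)^{n} \bm{\Lambda}^{n+1}\, \frac{u^{(n+1)\alpha-1}}{\Gamma((n+1)\alpha)}.
$$
For fixed $t>0$ and $N\geq 1$, let $S_N(u)$ denote the partial sum truncated to $0\leq n\leq N-1$. By linearity of the Itô integral and the definition of $I^{n\alpha}_{\bm{B}}$ (Definition \ref{def:fractional_differentiation_brownian}), one has the finite identity
$$
\int_0^t S_N(t-s)\, \bm{X_s}\, d\bm{B_s} \;=\; \sum_{n=1}^{N} (-1)^{n-1}\, \bm{\Lambda}^{n}\, I^{n\alpha}_{\bm{B}}(\bm{X})(t).
$$
The proof then reduces to showing that the stochastic remainder $R_N(t) := \int_0^t (\bm{f^{\alpha,\Lambda}(t-s)} - S_N(t-s))\, \bm{X_s}\, d\bm{B_s}$ converges almost surely to zero.

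The first step is to control $R_N(t)$ in $L^2$ by Itô's isometry, which yields
$$
\mathbb{E}\bigl[\normTwo{R_N(t)}^2\bigr] \;\leq\; \int_0^t \normOp{\bm{f^{\alpha,\Lambda}(t-s)} - S_N(t-s)}^2\, \mathbb{E}\bigl[\normTwo{\bm{X_s}}^2\bigr] ds.
$$
Since $\bm{X}$ is square integrable the factor $\mathbb{E}[\normTwo{\bm{X_s}}^2]$ is bounded on $[0,t]$. For the remainder in the operator norm, one writes, exactly as in the proof of Lemma \ref{lemma:series_mittagleffler},
$\bm{f^{\alpha,\Lambda}(u)} - S_N(u) = \bm{\Lambda}^{N+1}\sum_{n\geq 0}(-1)^{N+n} \bm{\Lambda}^{n}\, u^{(N+n+1)\alpha-1}/\Gamma((N+n+1)\alpha)$, split this sum at $n = N_\alpha := \lfloor 1/\alpha\rfloor$, bound the polynomial head by a constant times $u^{N\alpha + \alpha -1}$ near $u=0$, and use the entire-function estimate for the tail. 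Since $\alpha\in(1/2,1)$, the function $u \mapsto u^{2(\alpha-1)}$ is integrable on $[0,t]$, so the pointwise convergence of $S_N$ to $\bm{f^{\alpha,\Lambda}}$ uniformly on compacts of $(0,t]$, combined with a dominated convergence argument using $\normOp{\bm{f^{\alpha,\Lambda}(u)}}^2 + \sup_N \normOp{S_N(u)}^2$ (which is $\mathcal{O}(u^{2(\alpha-1)})$ uniformly in $N$ after invoking the entire-function bound) yields $\mathbb{E}[\normTwo{R_N(t)}^2] \to 0$.

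The last step is to upgrade $L^2$ convergence to almost-sure convergence of the series. Extracting a subsequence along which $\mathbb{E}[\normTwo{R_N(t)}^2]$ decays summably and invoking Chebyshev's inequality together with the Borel–Cantelli lemma gives a.s. convergence of that subsequence. Since the partial sums $\sum_{n=1}^N (-1)^{n-1}\bm{\Lambda}^n I^{n\alpha}_{\bm{B}}(\bm{X})(t)$ are equal to $\int_0^t S_N(t-s)\bm{X_s}d\bm{B_s}$ and $S_N \to \bm{f^{\alpha,\Lambda}}$ monotonically in a term-by-term sense, the whole sequence converges almost surely to the claimed limit. The main obstacle is the integrable singularity of the kernel $(t-s)^{\alpha-1}$ at $s=t$: exploiting $\alpha>1/2$ so that $(t-s)^{2(\alpha-1)}$ is integrable is what makes Itô's isometry applicable with a uniform-in-$N$ dominating function and legitimizes the interchange.
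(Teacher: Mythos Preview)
Your overall strategy---expand $\bm{f^{\alpha,\Lambda}}$ as its defining matrix power series, identify each truncated stochastic integral with a partial sum of $\sum (-1)^{n-1}\bm{\Lambda}^n I^{n\alpha}_{\bm B}(\bm X)$, and control the remainder in $L^2$ via It\^o's isometry---is sound and in fact more direct than the paper's route. The paper instead invokes the stochastic Fubini theorem to swap the infinite sum with the stochastic integral, then rewrites $I^{n\alpha-1}(1)_{t-s}$ as an integral in an auxiliary variable $\tau$, applies stochastic Fubini again to swap the $\tau$ and $dB_s$ integrals, and finishes with an integration by parts. Your approach avoids this detour entirely; the price is that you must produce an explicit integrable dominating function for the remainder kernel, which you do correctly using $\alpha>1/2$.

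There is, however, a genuine gap in your last step. You obtain almost-sure convergence of $R_{N_k}(t)\to 0$ along a subsequence via Borel--Cantelli, and then assert convergence of the full sequence because ``$S_N\to\bm{f^{\alpha,\Lambda}}$ monotonically in a term-by-term sense.'' This is false: the series carries the alternating sign $(-1)^{n-1}$, so there is no monotonicity of the partial sums (deterministic or stochastic), and almost-sure convergence along a subsequence does not by itself force almost-sure convergence of the full sequence. The fix is to dispense with the subsequence altogether. From your own tail bound $\normOp{\bm{f^{\alpha,\Lambda}(u)}-S_N(u)}\le \sum_{n\ge N}\normOp{\bm\Lambda}^{n+1}u^{(n+1)\alpha-1}/\Gamma((n+1)\alpha)$ one gets, for $N$ large enough that $(N+1)\alpha>1$, a uniform-in-$u$ bound on $[0,t]$ and hence $\mathbb{E}[\normTwo{R_N(t)}^2]\le C\big(\sum_{n\ge N}\normOp{\bm\Lambda}^{n+1}t^{(n+1)\alpha-1}/\Gamma((n+1)\alpha)\big)^2$. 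Since $\Gamma(\cdot)$ grows super-exponentially, the sequence $\big(\mathbb{E}[\normTwo{R_N(t)}^2]\big)_{N\ge 1}$ is itself summable, and Borel--Cantelli applied to the full sequence gives $R_N(t)\to 0$ almost surely directly. As a minor aside, ``$\bm X$ square integrable'' only gives $\int_0^t\mathbb{E}[\normTwo{\bm X_s}^2]\,ds<\infty$, not boundedness of $s\mapsto\mathbb{E}[\normTwo{\bm X_s}^2]$; this does not damage the argument since your dominating kernel is still integrable against that weight.
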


\begin{proof}
Using Lemma \ref{lemma:series_mittagleffler}, we can write the integral using a series of fractional integration operators and apply the stochastic Fubini theorem (as $\bm{X}$ is square-integrable) to obtain
\begin{align*}
    \int_0^t \bm{f^{\alpha, \Lambda}(t-s)} \bm{X_s} d\bm{B_s} &= \int_0^t \sum_{n \geq 1} (-1)^{n-1} \bm{\Lambda}^{n} I^{n\alpha-1}(1)_{t-s} \bm{X_s} d\bm{B_s} \\
    &= \sum_{n \geq 1} \int_0^t (-1)^{n-1} \bm{\Lambda}^{n} I^{n\alpha-1}(1)_{t-s} \bm{X_s} d\bm{B_s} \\
    &= \sum_{n \geq 1} (-1)^{n-1} \bm{\Lambda}^{n} \int_0^t I^{n\alpha-1}(1)_{t-s} \bm{X_s} d\bm{B_s} \\
    &= \sum_{n \geq 1} \dfrac{(-1)^{n-1}}{\Gamma(n\alpha-1)} \bm{\Lambda}^{n} \int_0^t \int_0^{t-s} (t-s-\tau)^{n\alpha - 2} d \tau \bm{X_s} d\bm{B_s}.
\end{align*}
 After a change of variables and using the stochastic Fubini theorem (see for example \cite{Veraar2012TheRevisited}), we deduce the simpler expression
 \begin{align*}
    \int_0^t \bm{f^{\alpha, \Lambda}(t-s)} \bm{X_s} d\bm{B_s} &= \sum_{n \geq 1} \dfrac{(-1)^{n-1}}{\Gamma(n\alpha-1)} \bm{\Lambda}^{n} \int_0^t (t-\tau)^{n\alpha - 2} \int_0^{\tau} \bm{X_s} d\bm{B_s}  d\tau.
\end{align*}
Integrating by parts, we finally obtain the result:
\begin{align*}
    \int_0^t \bm{f^{\alpha, \Lambda}(t-s)} \bm{X_s} d\bm{B_s} &= \sum_{n \geq 1} \dfrac{(-1)^{n-1}}{\Gamma(n\alpha-1) (n\alpha-1)} \bm{\Lambda}^{n}  \int_0^t (t-\tau)^{n\alpha-1} \bm{X_{\tau}} d\bm{B_{\tau}}, \\
    &= \sum_{n \geq 1} \dfrac{(-1)^{n-1}}{\Gamma(n\alpha)} \bm{\Lambda}^{n}  \int_0^t (t-\tau)^{n\alpha-1} \bm{X_{\tau}} d\bm{B_{\tau}}, \\
    &= \sum_{n \geq 1} (-1)^{n-1} \bm{\Lambda}^{n} I^{n\alpha}_{\bm{B}}(\bm{X}).
\end{align*}
\end{proof}

The last lemma gives convergence for terms of a series of repeated iterations of $I^{\alpha}$.
\begin{lemma}
\label{lemma:convergence_mittagleffler_brownian}
Let $\alpha > 0$, $\bm{\Lambda} \in \mathcal{M}_{m}(\mathbb{R})$, $\bm{B}$ an $m$-dimensional Brownian motion and  $\bm{X}$ a $m$-dimensional vector valued square-integrable stochastic process. Then, almost surely and for all $t \in [0,1]$
\begin{align*}
     & (-1)^{N-1} \bm{\Lambda}^{N}  I^{N\alpha}(\bm{X})_t \underset{N \to \infty}{\to} 0 \\
    &\sum_{n \geq N} (-1)^{n-1} \bm{\Lambda}^{n}  I^{n\alpha}_{\bm{B}}(\textnormal{diag}(\bm{X}))_t \underset{N \to \infty}{\to} 0.
\end{align*}
\end{lemma}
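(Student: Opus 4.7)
The plan is to handle the two convergences separately, exploiting in both cases that iterated fractional integration produces a factor $1/\Gamma(k\alpha)$ in the denominator, which by Stirling's formula grows faster than any exponential and therefore overwhelms the operator-norm growth of $\bm{\Lambda}^k$.

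For the first (deterministic) statement, I would unfold the definition of $I^{N\alpha}$ and apply the Cauchy--Schwarz inequality pathwise once $N$ is large enough that $2N\alpha>1$, obtaining
\[
\|\bm{\Lambda}^N I^{N\alpha}(\bm{X})_t\| \leq \|\bm{\Lambda}\|_{\textnormal{op}}^N \cdot \dfrac{t^{N\alpha - 1/2}}{\sqrt{2N\alpha - 1}\,\Gamma(N\alpha)} \cdot \Big(\int_0^t \|\bm{X}_s\|^2\, ds\Big)^{1/2}.
\]
Since square-integrability of $\bm{X}$ makes the rightmost factor almost surely finite, and Stirling gives $\|\bm{\Lambda}\|_{\textnormal{op}}^N / \Gamma(N\alpha) \to 0$, the first convergence is immediate.

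For the second (stochastic) statement, I would set $a_n := (-1)^{n-1}\bm{\Lambda}^n I^{n\alpha}_{\bm{B}}(\textnormal{diag}(\bm{X}))_t$ and show $\sum_n \|a_n\|_{L^2(\Omega)} < \infty$. Applying It\^o's isometry component-wise yields
\[
\mathbb{E}\big[\|I^{n\alpha}_{\bm{B}}(\textnormal{diag}(\bm{X}))_t\|^2\big] = \dfrac{1}{\Gamma(n\alpha)^2} \sum_i \int_0^t (t-s)^{2n\alpha-2}\, \mathbb{E}[|X^i_s|^2]\, ds,
\]
which for $n$ large enough is bounded by $C_t/\Gamma(n\alpha)^2$; hence $\|a_n\|_{L^2} \leq C_t \|\bm{\Lambda}\|_{\textnormal{op}}^n/\Gamma(n\alpha)$ and the series of $L^2$ norms converges by Stirling.

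The last step, and the one that deserves the most care, is upgrading $L^2$-summability to almost-sure convergence of the tail. I would use Fubini--Tonelli together with Jensen's (equivalently Cauchy--Schwarz) inequality to write
\[
\mathbb{E}\Big[\sum_{n \geq 1} \|a_n\|\Big] = \sum_{n \geq 1} \mathbb{E}[\|a_n\|] \leq \sum_{n \geq 1} \|a_n\|_{L^2} < \infty,
\]
so that $\sum_n \|a_n\|$ is almost surely finite, and the tail $\sum_{n \geq N} a_n$ therefore tends to zero a.s. The main technical obstacle is that the kernel $(t-s)^{n\alpha-1}$ is square-integrable only for $n\alpha > 1/2$, so the first few terms of the series must be handled separately by direct boundedness estimates; likewise, if $\bm{X}$ is only locally square-integrable, a standard localisation by stopping times makes the uniform bound $\sup_{s\leq t}\mathbb{E}[\|\bm{X}_s\|^2]<\infty$ available.
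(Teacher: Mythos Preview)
Your proof is correct and rests on the same core estimate as the paper---It\^o isometry followed by Stirling to control $\|\bm{\Lambda}\|_{\textnormal{op}}^n/\Gamma(n\alpha)$---but the passage from the $L^2$ bound to almost-sure convergence differs. The paper expands $\mathbb{E}\big[\|\sum_{n>N_*}a_n\|^2\big]$ as a double sum over cross terms, bounds it by $c\big(\sum_{n>N_*}\|\bm{\Lambda}\|_{\textnormal{op}}^n/\Gamma((n+1)\alpha)\big)^2$, and then applies Chebyshev together with Borel--Cantelli over $N_*$. Your route is shorter: you bound each $\|a_n\|_{L^2}$ individually, observe $\sum_n\|a_n\|_{L^2}<\infty$, and use Fubini--Tonelli to conclude that $\sum_n\|a_n\|$ has finite expectation, hence is a.s.\ finite, so the tail vanishes a.s. This avoids the cross-term computation entirely and is arguably more elementary; the paper's approach, on the other hand, directly controls the tail in $L^2$, which could be useful if one wanted a rate. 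Your remark on the first few terms (where $n\alpha\le 1/2$) is harmless but unnecessary for the stated conclusion, since only the tail behaviour as $N\to\infty$ is at stake.
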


\begin{proof}
Let $N^* > N_{\alpha} := \lfloor \dfrac{1}{\alpha} \rfloor$. Since $\bm{X}$ is square-integrable, we have
\begin{align*}
    \mathbb{E}\Big[ \Big \lVert \sum_{N > N_{*}} \bm{\Lambda}^{N} I^{(N+1)\alpha}_{\bm{B}}(\textnormal{diag}(\bm{X}))_t \Big \rVert^2 \Big] & \leq  \sum_{N_1, N_2 > N_{*}} \mathbb{E}[\trans{(\bm{\Lambda}^{N_1} I^{(N_1+1)\alpha}_{\bm{B}}(\textnormal{diag}(\bm{X}))_t}) (\bm{\Lambda}^{N_2} I^{(N_2+1)\alpha}_{\bm{B}}(\textnormal{diag}(\bm{X}))_t)].
\end{align*}
Using the Cauchy-Schwartz inequality and writing $\normOp{\cdot}$ for the operator norm associated to the Euclidian norm, we find
\begin{align*}
    \mathbb{E}\Big[ \Big \lVert \sum_{N > N_{*}} \bm{\Lambda}^{N} I^{(N+1)\alpha}_{\bm{B}}(\textnormal{diag}(\bm{X}))_t \Big \rVert^2 \Big]  &\leq \sum_{N_1, N_2 > N_{*}}  \normOp{\bm{\Lambda}}^{N_1+N_2}  \sum_{1 \leq k,l \leq m} \mathbb{E}[ I^{(N_1+1)\alpha}_{B^k}(X^{k})_t I^{(N_2+1)\alpha}_{B^l}(X^{l})_t] \\
    & \leq \sum_{N_1, N_2 > N_{*}}  \normOp{\bm{\Lambda}}^{N_1+N_2}   \dfrac{1}{\Gamma((N_1+1)\alpha)\Gamma((N_2+1)\alpha)} \sum_{1 \leq i \leq m} \int_0^t (t-s)^{(N_1+N_2)\alpha - 2} \mathbb{E}[(X^i_s)^2] ds \\
    & \leq c \sum_{N_1, N_2 > N_{*}}  \dfrac{\normOp{\bm{\Lambda}}^{N_1+N_2}}{\Gamma((N_1+1)\alpha)\Gamma((N_2+1)\alpha)} \\
    & \leq c \Big( \sum_{N > N_{*}} \dfrac{\normOp{\bm{\Lambda}}^{N}}{\Gamma((N+1)\alpha)} \Big)^2.
\end{align*}
Thus by comparison of functions (for example by application of Stirling's formula), for all $\epsilon > 0$,
$$
\sum_{N > N_{\alpha}} \mathbb{P} \Big(  \Big \lVert \sum_{N > N_{*}} \bm{\Lambda}^{N} I^{(N+1)\alpha}_{\bm{B}}(\textnormal{diag}(\bm{X}))_t \Big \rVert > \epsilon \leq \dfrac{1}{\epsilon^2} \sum_{N_* \geq N_{\alpha}} \mathbb{E}\Big[ \Big \lVert \sum_{N > N_{*}} \bm{\Lambda}^{N} I^{(N+1)\alpha}_{\bm{B}}(\textnormal{diag}(\bm{X}))_t \Big \rVert^2 \Big]< \infty.
$$
The Borel-Cantelli lemma yields the almost sure convergence to zero of $\bm{\Lambda}^{N} I^{(N+1)\alpha}_{\bm{B}}(\textnormal{diag}(\bm{X}))$ as $N \to \infty$. The same approach yields the almost sure convergence to zero of $(-1)^{N-1} \bm{\Lambda}^{N}  I^{N\alpha}(\bm{X})$ as $N \to \infty$.
\end{proof}

%%%% Details: proofs examples

\subsection{Proof of Corollary \ref{cor:correlation_vol}}
\label{sec:proof_example_corvol}

Take $\mu^1, \mu^2 > 0$, $\alpha \in (1/2,1), \kappa \in [0,1], H^b_{21}, H^s_{21}, H^b_{12},  H^s_{12} \in [0,1]$ such that (here $\sqrt{\cdot}$ is the principal square root, so that if $x<0$, $\sqrt{x}=i \sqrt{-x}$):
\begin{align*}
    &0 \leq (H_{12}^b + H_{12}^s)(H_{21}^b + H_{21}^s) < 1 \\
    &0 \leq \modulus{\kappa-\sqrt{(H^b_{12}-H^s_{12})(H^b_{21}-H^s_{21})}} < 1 \\
    &0 \leq \modulus{\kappa+\sqrt{(H^b_{12}-H^s_{12})(H^b_{21}-H^s_{21})}} < 1.
\end{align*}
Define now the following functions, for $t \geq 0$, which will appear in the structure of the kernel:
\begin{align*}
    \phi^T_1(t) &:= \alpha (1+\kappa/2) \mathbb{1}_{t \geq 1} t^{-(\alpha + 1)}& \phi_3^{b,T}(t) &= \alpha T^{-\alpha} H^b_{21} \mathbb{1}_{t \geq 1} t^{-(\alpha + 1)} \\
    \phi^T_2(t) &:=  \alpha (1+\kappa/2) \mathbb{1}_{t \geq 1} t^{-(\alpha + 1)}& \phi_3^{s,T}(t) &= \alpha T^{-\alpha} H^s_{21} \mathbb{1}_{t \geq 1} t^{-(\alpha + 1)}\\
    \lambda^T(t) &:= \alpha (\kappa-\kappa T^{-\alpha}) \mathbb{1}_{t \geq 1} t^{-(\alpha + 1)}& \phi_4^{b,T}(t) &:= \alpha T^{-\alpha} H^b_{12} \mathbb{1}_{t \geq 1} t^{-(\alpha + 1)}\\
    \tilde{\lambda}^T(t) &:= \alpha(\kappa-\kappa T^{-\alpha})\mathbb{1}_{t \geq 1} t^{-(\alpha + 1)}& \phi_4^{s,T}(t) &= \alpha T^{-\alpha} H^s_{12} \mathbb{1}_{t \geq 1} t^{-(\alpha + 1)}.
\end{align*}
The sequence of baselines and kernels are chosen as:
$$
\bm{\mu^T} = T^{\alpha - 1}
\begin{pmatrix}
\mu^{1} \\
\mu^{1} \\
\mu^{2} \\
\mu^{2}
\end{pmatrix}, \quad
\bm{\phi^T} = \begin{pmatrix}
    \phi_1^T & \phi_1^T - \lambda^T & \phi_3^{T,b} & \phi_3^{T,s} \\
    \phi_2^T - \lambda^T & \phi_2^T & \phi_3^{T,s} & \phi_3^{T,b} \\
    \phi_4^{T,b} & \phi_4^{T,s} & \phi_1^T & \phi_1^T - \tilde{\lambda}^T \\
    \phi_4^{T,s} & \phi_4^{T,b} & \phi_2^T - \tilde{\lambda}^T & \phi_2^T
\end{pmatrix}.
$$
The above sequence naturally satisfies the different assumptions outlined in Section \ref{sec:assumptions}. Indeed, using the following change of basis
$$
\bm{O} = \begin{pmatrix}
1 & 0 & 1 & 0\\
1 & 0 & -1 & 0\\
0 & 1 & 0 & 0\\
0 & 1 & 0 & -1
\end{pmatrix},
$$
we have, with notations from Section \ref{sec:assumptions},
\begin{align*}
    \bm{A} &= \begin{pmatrix}
        \phi_{1} + \phi_2 - \lambda & \phi_3^b + \phi_3^s \\
        \phi_4^b + \phi_4^s &\phi_1 + \phi_2 - \tilde{\lambda}
    \end{pmatrix} \\
    \bm{B} &= (\phi_1 - \phi_2)I \\
    \bm{C} &= \begin{pmatrix}
        \lambda & \phi_3^b - \phi_3^s \\
        \phi_4^b - \phi_4^s & \tilde{\lambda}
    \end{pmatrix} \\
    \bm{M} &= \alpha \bm{I} \\
    \bm{K} &= 
    \begin{pmatrix}
        \kappa & H^b_{21} +  H^s_{21} \\
        H^b_{12} +  H^s_{12} & \kappa
    \end{pmatrix}.
\end{align*}
Furthermore, we can check that the assumptions of Section \ref{sec:assumptions} are satisfied if
\begin{align*}
    &0 \leq H_{21}H_{12} < 1 \\
    &0 \leq \modulus{\kappa-\sqrt{(H^b_{12}-H^s_{12})(H^b_{21}-H^s_{21})}} < 1 \\
    &0 \leq \modulus{\kappa+\sqrt{(H^b_{12}-H^s_{12})(H^b_{21}-H^s_{21})}} < 1.
\end{align*}
Under those conditions, we can apply Theorem \ref{thm:ConvergenceInt} and compute the relevant quantities which appear in the limiting stochastic differential equation of volatility. We note for convenience
$$
\begin{pmatrix}
    x & y \\
    z & w
\end{pmatrix} := \Big( I-\norm{\bm{C}} \Big)^{-1} \norm{\bm{B}}.
$$
Then, straightforward linear algebra yields
\begin{align*}
    \bm{O_{11}} + \bm{O_{12}} \Big( I-\norm{\bm{C}} \Big)^{-1} \norm{\bm{B}} &= \begin{pmatrix}
        1 + x & y \\
        1 - x & -y
    \end{pmatrix} \\
    \Big( \bm{O_{11}} + \bm{O_{12}} \Big( I-\norm{\bm{C}} \Big)^{-1} \norm{\bm{B}} \Big)\bm{O^{(-1)}_{11}} &= \dfrac{1}{2} \begin{pmatrix}
        1+x & 1+x \\
        1-x & 1-x
    \end{pmatrix} \\
    \Big( \bm{O_{11}} + \bm{O_{12}} \Big(I-\norm{\bm{C}} \Big)^{-1} \norm{\bm{B}} \Big)\bm{O^{(-1)}_{12}} &= \dfrac{1}{2} \begin{pmatrix}
        y & y \\
        -y & -y
    \end{pmatrix},
\end{align*}
so that, using the notations of Theorem \ref{thm:ConvergenceInt} for the Brownian motion $\bm{B}$, $\bm{W^1}$ and $\bm{W^2}$, we have
\begin{align*}
    \Big( \bm{O_{11}} + \bm{O_{12}} \Big( I-\norm{\bm{C}} \Big)^{-1} \norm{\bm{B}} \Big)\bm{O^{(-1)}_{11}} \int_0^t \textnormal{diag}\Big( \sqrt{\bm{\Theta^1 \tilde{V}_s}} \Big) d\bm{W^1_s} &=  \dfrac{1}{2} \int_0^t \begin{pmatrix}
        (1+x) \big( \sqrt{V^1_t}dB^1_s + \sqrt{V^2_t}dB^2_s \big) \\
        (1-x) \big( \sqrt{V^1_t}dB^1_s + \sqrt{V^2_t}dB^2_s \big)
    \end{pmatrix} \\
    \Big( \bm{O_{11}} + \bm{O_{12}} \Big( I-\norm{\bm{C}} \Big)^{-1} \norm{\bm{B}} \Big)\bm{O^{(-1)}_{12}} \textnormal{diag}\Big (\sqrt{\bm{\Theta^2 \tilde{V}_s}} \Big) d\bm{W^2_t} &= \dfrac{1}{2} \int_0^t \begin{pmatrix}
        y \big( \sqrt{V^3_t}dB^3_t + \sqrt{V^4_t}dB^4_t \big) \\
        -y \big( \sqrt{V^3_t}dB^3_t + \sqrt{V^4_t}dB^4_t \big)
    \end{pmatrix}.
\end{align*}

Therefore, writing 
$$
\bm{\Sigma}_{1} := \dfrac{1}{2}\begin{pmatrix}
    1 + x & 1+x & y & y  \\
    1 - x & 1 - x & -y & - y
\end{pmatrix},
$$
we have the following equation for the fundamental variance of Asset $1$
\begin{align*}
    \dfrac{\Gamma(1-\alpha) \Gamma(\alpha)}{\alpha} \begin{pmatrix}
        V^1_t \\
        V^2_t
    \end{pmatrix} &= \int_0^t (t-s)^{\alpha - 1} \Big[ \begin{pmatrix}
        1 + x & y \\
        1 - x & -y
    \end{pmatrix} \begin{pmatrix}
        \mu_1 \\
        \mu_2
    \end{pmatrix} -  \begin{pmatrix}
        1 + x & y \\
        1 - x & -y
    \end{pmatrix} \bm{K^{-1}} \begin{pmatrix}
        1 + x & y \\
        1 - x & -y
    \end{pmatrix}^{-1} \begin{pmatrix}
        V^1_s \\
        V^2_s 
    \end{pmatrix} \Big] ds \\
    &+ \int_0^t (t-s)^{\alpha - 1}
    \bm{\Sigma}_{1}
    \textnormal{diag}(\sqrt{\bm{V}_s}) dB_s.
\end{align*}
By symmetry, we can find the analogue to the above on the second asset. Using the following notations
\begin{align*}
    &\bm{\Sigma} := \dfrac{\alpha}{\Gamma(1-\alpha) \Gamma(\alpha)}
\dfrac{1}{2}
\begin{pmatrix}
    1 + x & 1+x & y & y  \\
    1 - x & 1 - x & -y & - y \\
    z & z & 1+w & 1+w  \\
     -z & -z & 1-w & 1-w
\end{pmatrix}, \quad \bm{D}:= \dfrac{\alpha}{\Gamma(1-\alpha) \Gamma(\alpha)} \begin{pmatrix}
        1 + x & y \\
        1 - x & -y \\
        1 + w & z \\
        1 - w & -z
    \end{pmatrix},\\
    &\bm{G}:= \dfrac{\alpha}{\Gamma(1-\alpha) \Gamma(\alpha)} \begin{pmatrix}
        \begin{pmatrix}
        1 + x & y \\
        1 - x & -y
    \end{pmatrix} K^{-1} \begin{pmatrix}
        1 + x & y \\
        1 - x & -y
    \end{pmatrix}^{-1} & 0 \\
    0 & \begin{pmatrix}
        z & 1+w \\
        -z & 1-w
    \end{pmatrix} K^{-1} \begin{pmatrix}
        z & 1+w \\
        -z & 1-w
    \end{pmatrix}^{-1}
    \end{pmatrix},
\end{align*}
where we have written for convenience
\begin{align*}
    \begin{pmatrix}
    x & y \\
    z & w
\end{pmatrix} &:= \dfrac{\norm{\phi_1} - \norm{\phi_2}}{\norm{\lambda}\norm{\tilde{\lambda}}- (\norm{\phi^b_4} - \norm{\phi^s_4}) (\norm{\phi^b_3} - \norm{\phi^s_3})} \begin{pmatrix}
    \norm{\tilde{\lambda}} & -(\norm{\phi^b_3} - \norm{\phi^s_3}) \\
    -(\norm{\phi^b_4} - \norm{\phi^s_4}) & \norm{\lambda}
\end{pmatrix}.
\end{align*}
Therefore $\bm{V}$ satisfies the following stochastic Volterra equation
\begin{align*}
    \dfrac{\Gamma(1-\alpha) \Gamma(\alpha)}{\alpha}\bm{V_t} &= \int_0^t (t-s)^{\alpha - 1} \Big[ \bm{D}
    \begin{pmatrix}
        \mu_1 \\
        \mu_2
    \end{pmatrix} -  \bm{G} \bm{V_s} \Big] ds + \int_0^t (t-s)^{\alpha - 1}
    \bm{\Sigma}
    \textnormal{diag}(\sqrt{\bm{V_s}}) d\bm{B_s}.
\end{align*}
This concludes the proof of Corollary \ref{cor:correlation_vol}.

\subsection{Proof of Corollary \ref{cor:example_2D}}
\label{sec:proof_example_2d}

We split the proof into two steps. First, we show that the structure of the kernel satisfies the assumptions of Section \ref{sec:assumptions}. Then we compute the equations satisfied by variance and prices.

\subsubsection*{Checking for the assumptions of Theorem \ref{thm:ConvergenceInt}}
We write
$$
\bm{O}_1 := \begin{pmatrix}
1 \\
1 \\
0 \\
0
\end{pmatrix}
\bm{O}_2 := \begin{pmatrix}
0 \\
0 \\
1 \\
1
\end{pmatrix}
\bm{O}_3 := \begin{pmatrix}
1 \\
-1 \\
0 \\
0
\end{pmatrix}
\bm{O}_4 := \begin{pmatrix}
0 \\
0 \\
1 \\
-1
\end{pmatrix}.
$$
Then, setting $\bm{O} := \begin{pmatrix}
\bm{O}_1 \mid \bm{O}_2 \mid \bm{O}_3 \mid \bm{O}_4
\end{pmatrix}$, we have
\begin{align*}
    \bm{\phi^T} &= \bm{O} \begin{pmatrix}
    \phi_1^T + \phi_2^T & \phi^{T,c}_{12} + \phi^{T,a}_{12} & 0 & 0 \\
    \phi^b_{21} + \phi^s_{21} & \tilde{\phi}^T_1 +  \tilde{\phi}^T_2 & 0 & 0 \\
    0 & 0 & \phi_1^T - \phi_2^T & \phi^{T,c}_{12} - \phi^{T,a}_{12} \\
    0 & 0 & \phi^b_{21} - \phi^s_{21} & \tilde{\phi}^T_1 - \tilde{\phi}^T_2
    \end{pmatrix} \bm{O}^{-1}.
\end{align*}
It is straightforward to check that the assumptions are satisfied if
\begin{align*}
    0 &\leq (H^c_{12}+H^a_{12})(H^c_{21}+H^a_{21}) < 1 \\
    0 &\leq \modulus{1 - (\gamma_1 + \gamma_2) - \sqrt{(H^c_{12}-H^a_{12})(H^c_{21}-H^a_{21}) + (\gamma_1-\gamma_2)^2}} < 1 \\
    0 &\leq \modulus{1 - (\gamma_1 + \gamma_2) + \sqrt{(H^c_{12}-H^a_{12})(H^c_{21}-H^a_{21}) + (\gamma_1-\gamma_2)^2}} < 1.
\end{align*}

Under those conditions $\bm{K}=\bm{I}-\bm{H}$ has positive eigenvalues and therefore $\bm{KM}^{-1} = \dfrac{1}{\alpha}\bm{K}$  has positive eigenvalues. Therefore all the assumptions of Theorem \ref{thm:ConvergenceInt} are satisfied.

\subsubsection*{Limiting variance process}

Since we can apply Theorem \ref{thm:ConvergenceInt}, we now compute the relevant quantities. As $\bm{B} = \bm{0}$, writing $H^{12} := H^{a}_{12} + H^{c}_{12}$ and  $H^{21} := H^{a}_{21} + H^{c}_{21}$, we have
\begin{align*}
    &\bm{O}^{-1} = \dfrac{1}{2}\begin{pmatrix}
    1 & 1 & 0 & 0 \\
    0 & 0 & 1 & 1 \\
    1 & -1 & 0 & 0 \\
    0 & 0 & 1 & -1
    \end{pmatrix} \quad  \bm{K}^{-1} = \dfrac{1}{1 - H_{12}H_{21}}
    \begin{pmatrix}
    1 & H_{12} \\
    H_{21} & 1
    \end{pmatrix} \\
    &\bm{\Theta^1} = \dfrac{1}{1 - H_{12}H_{21}} \begin{pmatrix}
    1 & H_{12} \\
    1 & H_{12}
    \end{pmatrix} \quad \bm{\Theta^2} = \dfrac{1}{1 - H_{12}H_{21}} \begin{pmatrix}
    H_{21} & 1 \\
    H_{21} & 1
    \end{pmatrix}.
\end{align*}
One can check that the equations satisfied by $\bm{\Theta^1} \bm{\tilde{V}}$ and $\bm{\Theta^2} \bm{\tilde{V}}$ are, where $\bm{B}$ is a Brownian motion,
\begin{align*}
    \bm{\Theta^1} \bm{\tilde{V}_t} &= \dfrac{\alpha}{\Gamma(\alpha)\Gamma(1 - \alpha)} \int_0^t (t-s)^{\alpha-1} \left[
    \begin{pmatrix}
    \mu_1 \\
    \mu_1
    \end{pmatrix} - \begin{pmatrix}
    \tilde{V}^1_s \\
    \tilde{V}^1_s
    \end{pmatrix} \right]ds \\
    &+ \dfrac{\alpha}{\Gamma(\alpha)\Gamma(1 - \alpha)} \int_0^t (t-s)^{\alpha-1} \sqrt{\tilde{V}^1_s + H_{12}\tilde{V}^2_s} \begin{pmatrix}
    dB^{1}_s+dB^{2}_s \\
     dB^{1}_s+dB^{2}_s
    \end{pmatrix} \\
    \bm{\Theta^2} \bm{\tilde{V}_t} &=  \dfrac{\alpha}{\Gamma(\alpha)\Gamma(1 - \alpha)} \int_0^t (t-s)^{\alpha-1} \left[
    \begin{pmatrix}
    \mu_2 \\
    \mu_2
    \end{pmatrix} - \begin{pmatrix}
    \tilde{V}^2_s \\
    \tilde{V}^2_s
    \end{pmatrix} \right]ds \\
    &+ \dfrac{\alpha}{\Gamma(\alpha)\Gamma(1 - \alpha)} \int_0^t (t-s)^{\alpha-1} \sqrt{\tilde{V}^2_s + H_{21}\tilde{V}^1_s} \begin{pmatrix}
    dB^{3}_s+dB^{4}_s \\
    dB^{3}_s+dB^{4}_s
    \end{pmatrix}.
\end{align*}

Note that the above implies that $V^{1+} = V^{1-}$ and $V^{2+} = V^{2-}$. This property is due to the the symmetric structure of the baselines and kernels. Therefore, the joint dynamics can be fully captured by considering the joint dynamics of $(V^{1+}, V^{2+})$. Thus, writing $V^{1} := V^{1+}=V^{1-}$ and $V^{2} := V^{2+} = V^{2-}$, we have
\begin{align*}
    \Gamma(\alpha) \dfrac{\Gamma(1 - \alpha)}{\alpha} V^{1}_t &= \int_0^t (t-s)^{\alpha - 1} (\mu_1 - \tilde{V}^{1}_s)ds +  \int_0^t \sqrt{V^{1}_t} (dB^{1}_s+dB^{2}_s) \\
    \Gamma(\alpha) \dfrac{\Gamma(1 - \alpha)}{\alpha}V^{2}_t &= \int_0^t (t-s)^{\alpha - 1} (\mu_2 - \tilde{V}^{2}_s)ds +  \int_0^t \sqrt{V^{2}_t} (dB^{3}_s+dB^{4}_s).
\end{align*}
We can write the above without $\bm{\tilde{V}}$ as
\begin{align*}
    \Gamma(\alpha) \dfrac{\Gamma(1 - \alpha)}{\alpha} \begin{pmatrix} V^{1}_t \\ V^{2}_t \end{pmatrix} = \int_0^t (t-s)^{\alpha - 1} \big( \begin{pmatrix}
    \mu_1 \\
    \mu_2 
    \end{pmatrix} - \bm{K}^{-1} \begin{pmatrix} V^{1}_s \\ V^{2}_s \end{pmatrix} \big)ds + \int_0^t (t-s)^{\alpha - 1} \begin{pmatrix}
    \sqrt{V^1_s} (dB^{11}_s + dB^{12}_s) \\
    \sqrt{V^2_s} (dB^{21}_s + dB^{22}_s)
    \end{pmatrix}.
\end{align*}

\subsubsection*{Limiting price process}
Turning now to the price process, it remains to compute $\bm{\Delta}$ (see Equation \eqref{def:Delta_matrix}) using the definition. We have
\begin{align*}
    \trans{\normOne{\bm{\psi}}}\bm{O_3} &= \sum_{k \geq 1} \trans{\normOne{\bm{\phi}}}^k
    \bm{O_3} \\
    &= \bm{O} \sum_{k \geq 1} \big[ \big( \norm{\bm{C}} \big)^{k}_{11} \bm{e_3} + \big( \norm{\bm{C}} \big)^{k}_{12} \bm{e_4} \big] \\
    &= \sum_{k \geq 1} \big[ \big( \norm{\bm{C}} \big)^{k}_{11} \bm{O_3} + \big( \norm{\bm{C}} \big)^{k}_{12} \bm{O_4} \big]  \\
    &= [(\bm{I} - \norm{\bm{C}})^{-1} - \bm{I}]_{11} \bm{O_3} + [(\bm{I} - \norm{\bm{C}})^{-1} - \bm{I}]_{12} \bm{O_4},
\end{align*}
which, by definition of $\bm{\Delta}$, yields
\begin{align*}
       \Delta_{11} &= \big[ \big(\bm{I} - \norm{\bm{C}} \big)^{-1} - \bm{I} \big]_{11} = \dfrac{2 \gamma_2}{4 \gamma_1 \gamma_2 - (H_{12}^c - H_{12}^a)(H_{21}^c - H_{21}^a)} - 1 \\
    \Delta_{12} &= \big[ \big(\bm{I} - \norm{\bm{C}} \big)^{-1} - \bm{I} \big]_{12}  =\dfrac{H_{21}^c - H_{21}^a}{4 \gamma_1 \gamma_2 - (H_{12}^c - H_{12}^a)(H_{21}^c - H_{21}^a)}.
\end{align*}
Therefore,
\begin{align*}
    \bm{\Delta} = \dfrac{1}{4 \gamma_1 \gamma_2 - (H_{12}^c - H_{12}^a)(H_{21}^c - H_{21}^a)}\begin{pmatrix}
   2 \gamma_2 & H_{21}^c - H_{21}^a \\
   H_{12}^c - H_{12}^a & 2 \gamma_1
    \end{pmatrix} - \bm{I}.
\end{align*}
Finally, any limit point $\bm{P}$ of the sequence of microscopic price processes satisfies the following equation
\begin{align*}
    \bm{P}_t &= \dfrac{1}{4 \gamma_1 \gamma_2 - (H_{12}^c - H_{12}^a)(H_{21}^c - H_{21}^a)}\begin{pmatrix}
   2 \gamma_2 & H_{21}^c - H_{21}^a \\
   H_{12}^c - H_{12}^a & 2 \gamma_1
    \end{pmatrix}  
    \begin{pmatrix}
    1 & -1 & 0 & 0 \\
    0 & 0 & 1 & -1
    \end{pmatrix}
    \int_0^t \begin{pmatrix}
    \sqrt{V^1_s}dB^{1}_s \\
    \sqrt{V^1_s}dB^{2}_s \\
    \sqrt{V^2_s}dB^{3}_s \\
    \sqrt{V^2_s}dB^{4}_s
    \end{pmatrix} \\
    &= \dfrac{1}{4 \gamma_1 \gamma_2 - (H_{12}^c - H_{12}^a)(H_{21}^c - H_{21}^a)}\begin{pmatrix}
   2 \gamma_2 & H_{21}^c - H_{21}^a \\
   H_{12}^c - H_{12}^a & 2 \gamma_1
    \end{pmatrix}
    \int_0^t
    \begin{pmatrix}
    \sqrt{V^1_s} (dB^{1}_s -  dB^{2}_s) \\
    \sqrt{V^2_s} (dB^{3}_s -  dB^{4}_s)
    \end{pmatrix}.
\end{align*}
This concludes the proof of Corollary \ref{cor:example_2D}.

\subsection{Proof of Corollary \ref{cor:example_nD}}
\label{sec:proof_example_nd}

We define the interaction kernel between Asset $i$ and Asset $j$, for $1 \leq i,j \leq m$, define
$$
\bm{\phi^T_{ij}(t)} := \left\{
    \begin{array}{ll}
        \alpha (1-T^{-\alpha}) \mathbb{1}_{t \geq 1} t^{-(\alpha + 1)}  \begin{pmatrix}
        (1-\gamma) & \gamma \\
        \gamma& (1-\gamma)
        \end{pmatrix} & \mbox{if } i=j,\\
        \alpha T^{-\alpha}  \mathbb{1}_{t \geq 1} t^{-(\alpha + 1)}  
        \begin{pmatrix}
        H^{c} & H^{a} \\
        H^{a} & H^{c}
        \end{pmatrix} & \mbox{if Asset } i \mbox{ and Asset } j \mbox{ belong to the same sector}, \\
         \alpha T^{-\alpha}  \mathbb{1}_{t \geq 1} t^{-(\alpha + 1)}  
        \begin{pmatrix}
        H^{c} + H_r^{c} & H^{a} + H_r^{a} \\
        H^{a} + H_r^{a} & H^{c} + H_r^{c}
        \end{pmatrix} & \mbox{otherwise.}
    \end{array}
\right.
$$
Finally, the complete Hawkes baseline and kernel structure is
$$
\bm{\mu^T} = T^{\alpha - 1}
\begin{pmatrix}
\mu^{1} \\
\mu^{1} \\
\vdots \\
\mu^{m} \\
\mu^{m} \\
\end{pmatrix}, \quad
\bm{\phi^T} = \begin{pmatrix}
\bm{\phi_{11}^T} & \bm{\phi_{12}^T} & \hdots &  \bm{\phi_{1m}^T} \\
\bm{\phi_{21}^T} & \bm{\phi_{22}^T}  & \hdots & \bm{\phi_{2m}^T} \\
\vdots & \hdots & \ddots & \vdots \\
\bm{\phi_{m1}^T}  & \hdots &  \hdots  &   \bm{\phi_{mm}^T} \\
\end{pmatrix}.
$$
As in the previous example, the proof is split into three steps. First, we show that the structure of the kernel satisfies the assumptions required to apply Theorem \ref{thm:ConvergenceInt}. Then, we compute the equation satisfied by the variance and finally the limiting price process.

\subsubsection*{Checking assumptions of Theorem \ref{thm:ConvergenceInt}}

We can examine the structure of the kernel as in the two-asset example. Define the following basis:
\begin{align*}
    \bm{O_{i}} := \left\{
    \begin{array}{ll}
        \bm{e_{2i}}+\bm{e_{2i+1}} & \mbox{if } 1 \leq i \leq m, \\
        \bm{e_{2i}}-\bm{e_{2i}} & \mbox{if } m+1 \leq i \leq 2m.
    \end{array}
\right.
\end{align*}
Using the notations of Section \ref{sec:assumptions}, straightforward computations allow us to write
\begin{align*}
    \bm{\phi^T} &= \bm{O} \begin{pmatrix}
    \bm{A^T} & \bm{0} \\
    \bm{B^T} & \bm{C^T}
    \end{pmatrix} \bm{O}^{-1} = \bm{O} \begin{pmatrix}
    \bm{A^T} & \bm{0} \\
    \bm{0} & \bm{C^T}
    \end{pmatrix} \bm{O}^{-1},
\end{align*}
where we can compute $\bm{A^T}$ and $\bm{C^T}$. Checking the assumptions is done as in the two-asset case, though the conditions have changed here due to the new structure of the kernel. For example, since
\begin{equation*}
    \underset{T \to \infty}{\lim} \norm{\bm{\phi^T}} \bm{O}_{m+i} =  (1-2\gamma) \bm{O}_{n+i} + (H^c - H^a) \sum_{1 \leq j \neq i \leq m} \bm{O}_{m+j} + \sum_{1 \leq j \neq i \leq m} \sum_{1 \leq r \leq R} (H^{c}_r - H^{a}_r) \bm{O}_{m+j},
\end{equation*}
we have, writing $\bm{J} := \bm{e_1} \trans{\bm{e_1}} + \cdots + \bm{e_m} \trans{\bm{e_m}}$ and for any $1 \leq r \leq R$, $\bm{J_r} := \bm{e_{i_r}}\trans{\bm{e_{i_r}}} + \cdots +  \bm{e_{i_r + m_r}} \trans{\bm{e_{i_r + m_r}}}$,
$$\norm{\bm{C}} = (1-2\gamma) \bm{I} + (H^c - H^a) \bm{J} + \sum_{1 \leq r \leq R} (H^{c}_{r} - H^{a}_{r}) \bm{J_r}.$$
Therefore, as the eigenvalues of $\norm{\bm{C}}$ can be made explicit, if
\begin{align*}
    \modulus{\lambda^{-} + \sum_{1 \leq r \leq R} \lambda^{-}_r} < 2 \gamma,
\end{align*}
then $\spectralRadius{\norm{\bm{C^T}}} < 1$ and $\spectralRadius{\norm{\bm{C}}} < 1$. Similarly, we can easily check that a necessary condition for $\spectralRadius{\norm{\bm{A^T}}} < 1$ for $T$ large enough is
\begin{align*}
    \modulus{H^c + H^a + \sum_{1 \leq r \leq R} \dfrac{m_{r}-1}{m-1}(H^{c}_{r} + H^{a}_{r})} < \dfrac{1}{m-1}.
\end{align*}
Since we are interested in the limit where the number of assets grows to infinity, we also impose
\begin{align*}
    \modulus{\lambda^{-} + \sum_{1 \leq r \leq R} \eta_r \lambda^{-}_r} &< 2 \gamma \\
     \modulus{\lambda^{+} + \sum_{1 \leq r \leq R} \eta_r \lambda^{+}_r} &< 1.
\end{align*}
Combined, we have verified all the assumptions on the structure of the kernel. We thus move to assumptions on $\bm{K}$ and $\bm{\Lambda} := \bm{K} \bm{M}^{-1}$. As in the two-asset example, we have here $\bm{M} = \alpha \bm{I}$. Since $\bm{K} = \bm{I} - (H^{c}+H^{a})\bm{J} - \sum_{1 \leq r \leq R} (H^{c}_{r}+H^{a}_{r})\bm{J_r}$, the eigenvalues of $\bm{K}$ (and therefore those of $\bm{\Lambda}$) are all strictly positive. Thus we have checked all necessary conditions to apply Theorem \ref{thm:ConvergenceInt}. We can thus state the equation satisfied by the variance process.

\subsubsection*{Limiting variance process}

As in the previous example, we have $V^{i+} = V^{i-}$. Thus, we write the underlying variance of asset $i$ $V^i$ and use the (slight) abuse of notation and define $\bm{V} := (V^1, V^2, \cdots, V^{m})$. Then $\bm{V}$ satisfies
\begin{align*}
    \bm{V_t} &= \dfrac{\alpha}{\Gamma(\alpha)\Gamma(1 - \alpha)} \int_0^t (t-s)^{\alpha-1} \left[
    \bm{\theta} - \bm{K}^{-1}  \bm{V_s} \right]ds + \dfrac{\alpha \sqrt{2}}{\Gamma(\alpha)\Gamma(1 - \alpha)} \int_0^t (t-s)^{\alpha-1} \textnormal{diag}(\sqrt{\bm{V_s}}) d\bm{B_s},
\end{align*}
where $\bm{B}$ is a Brownian motion. We can rewrite $\bm{K}^{-1}$ as
\begin{align*}
    \bm{K}^{-1} &= \left( \bm{I} - (H^c + H^a) \bm{J} - \sum_{1 \leq r \leq R} (H^c_r + H^a_r) \bm{J_r} \right)^{-1},\\
    &= \left( \bm{I} - (H^c + H^a) (m-1) \bm{w} \trans{\bm{w}} - \sum_{1 \leq r \leq R} (H^c_r + H^a_r) (m_r - 1) \bm{w_r} \trans{\bm{w_r}} - \epsilon \right) ^{-1},
\end{align*}
with the small term $\bm{\epsilon}$
$$\bm{\epsilon} := (H^c + H^a)(\bm{J}-(m-1)\bm{w} \trans{\bm{w}}) + \sum_{1 \leq r \leq R} (H^c_r + H^a_r) (\bm{J_r}-(m_r - 1)\bm{w_r} \trans{\bm{w_r}}).
$$
It is easy to check that $\spectralRadius{\bm{\epsilon}} \underset{m \to \infty}{=} o(\dfrac{1}{m})$, which concludes our study of the variance process. We now turn to the equation satisfied by the limiting price process.

\subsubsection*{Limiting price process}

Using the same approach as in the two-asset case, computing $\bm{\Delta}$ boils down to computing $(I-\norm{\bm{C}})^{-1}$. Using the expression for $\norm{\bm{C}}$ derived previously, we have
$$
(\bm{I} - \bm{C})^{-1} = \dfrac{1}{2\gamma} (\bm{I} - \dfrac{H^c - H^a}{2\gamma} \bm{J} - \sum_{1 \leq r \leq R} \dfrac{H^c_r - H^a_r}{2\gamma} \bm{J_r})^{-1}.
$$
Therefore, repeating the same approach we used for $\bm{K}^{-1}$ yields
$$
(\bm{I} - \bm{C})^{-1} = (2 \gamma \bm{I} - \lambda^{-} \bm{w} \trans{\bm{w}} - \sum_{1 \leq r \leq R} \eta_r \lambda^{-}_r \bm{w_r} \trans{\bm{w_r}}-\bm{\epsilon})^{-1},
$$
with $\spectralRadius{\bm{\epsilon}} = o(\dfrac{1}{m})$. Thus, we have the expression of $\bm{\Delta}$
\begin{align*}
    \bm{\Delta} & = (2 \gamma \bm{I} - \lambda^{-} \bm{w} \trans{\bm{w}} - \sum_{1 \leq r \leq R} \eta_r \lambda^{-}_r \bm{w_r} \trans{\bm{w_r}} - \bm{\epsilon})^{-1} - \bm{I}.
\end{align*}
Plugging this into Theorem \ref{thm:ConvergenceInt}, we have the equation satisfied by macroscopic prices, which concludes the proof of Corollary \ref{cor:example_nD}.

\bibliographystyle{plain}
\bibliography{bibliography.bib}

\end{document}